\begin{document}

\def\sgn{\mathrm{sign}}
\def\E{\mathds{E}}
\def\P{\mathds{P}}
\def\R{\mathds{R}}
\def\C{\mathds{C}}
\def\N{\mathds{N}}
\def\s{\mathfrak{s}}
\def\m{\mathfrak{m}}
\def\I{\mathcal{I}}
\def\M{\mathcal{M}}
\def\U{\mathcal{U}}
\def\T{\mathcal{T}}
\def\r{\mathrm{r}}
\def\d{\mathrm{d}}
\def\G{\mathcal{G}}
\def\F{\mathsf{F}}
\def\X{\mathsf{X}}
\def\ds{\displaystyle}
\def\indfunc{\mathds{1}}

\numberwithin{equation}{section}

 \pagestyle{fancy}
 \fancyhead{}
 \fancyhead[C]{G.~Campolieti and R.~Makarov, Dual Stochastic Transformations of Solvable Diffusions}
 \fancyhead[R]{\thepage}
 \fancyfoot{}

\newtheorem{theorem}{Theorem} 
\newtheorem{corollary}{Corollary}
\newtheorem{proposition}{Proposition}
\newtheorem{lemma}{Lemma}



\title{Dual Stochastic Transformations of Solvable Diffusions}

\author{Giuseppe~Campolieti and Roman N. Makarov\\
Mathematics Department, Wilfrid Laurier University\\
75 University Avenue West, Waterloo, Ontario, Canada\\
E-mails: \texttt{gcampoli@wlu.ca} and \texttt{rmakarov@wlu.ca} }

\date{January 24, 2014}

\maketitle

\begin{abstract}
 We present new extensions to a method for constructing several families of solvable one-dimen\-sional time-homogeneous diffusions whose transition densities are obtainable in analytically closed-form. Our approach is based on a dual application of the so-called diffusion canonical transformation method
 that combines smooth monotonic mappings and measure changes via Doob-h transforms.
 This gives rise to new multi-parameter solvable diffusions that are generally divided into two main classes; the first is specified by having affine (linear)
 drift with various resulting nonlinear diffusion coefficient functions, while the second class
 allows for several specifications of a (generally nonlinear) diffusion coefficient with resulting nonlinear drift function. The theory is applicable to diffusions with either singular and/or non-singular endpoints.
 As part of the results in this paper, we also present a complete boundary classification and martingale characterization of the newly developed diffusion families.\\[2mm]
\textbf{Keywords:} solvable continuous-time stochastic processes; Bessel, CIR, and Ornstein-Uhlenbeck processes; nonlinear volatility diffusion models in finance; nonlinear mean-reverting drift models.\\\textbf{AMS Subject Classification:} 60G51, 60H10, 91B70.
\end{abstract}


\section*{Introduction and Main Results}
A solvable continuous-time stochastic process can be basically defined as a process for which transition probability density functions are
obtainable in analytically closed-form. Such solvability permits us to precisely simulate paths of the process from its exact
sample distribution and also to readily compute certain mathematical expectations. For solvable families of diffusion processes, solvability
implies the existence of analytically closed-form spectral expansions for transition densities of the regular processes subject to
appropriate boundary conditions. For certain classes of diffusion models, the spectral expansions can be readily derived in closed-form.
For these same diffusion processes, the standard spectral methods show that analytical tractability also extends beyond
transition densities. In particular, closed-form expressions exist for other
fundamental quantities such as, for example, first-hitting time densities (or distributions) as well as joint probability densities for
various extrema of the process, etc.

The set of diffusion processes that are, on the one hand, tractable
and applicable for mathematical modeling and, on the other hand,
exactly solvable in closed-form is not so vast. This known set of classical diffusions includes mostly linear
diffusion processes or those whose drift and/or diffusion coefficients have a power or quadratic polynomial nonlinearity (see
\cite{BS02} and \cite{Linetsky2004a} for a comprehensive review of
such diffusions; see also \cite{Carr06,HL07}). An important goal is hence to extend solvability to other families that have useful applications.
There are two main tools that allow us to construct new solvable diffusion processes.
The first is related to a measure change on a chosen underlying diffusion and the second
involves a change of variable or smooth monotonic mapping (the It\^{o} formula).
In recent years, a new approach that combines
special measure changes, i.e. time-homogeneous
Doob-h transforms, together with special types of nonlinear smooth monotonic mapping transformations was
introduced for uncovering new families of exactly solvable
\emph{driftless} diffusion models
~\cite{ACCL,AC2005,AK2009,CM06,Kuznetsov}. These models exhibit nonlinear diffusion coefficients with
multiple adjustable parameters and have seen some useful applications in financial derivative pricing \cite{CM06,CM08,CMV11}.
The method has been coined as ``diffusion
canonical transformation'', wherein the solvability of a diffusion
process, say $(F_t)_{t\ge 0}$, is essentially reduced to that of a simpler underlying diffusion $(X_t)_{t\ge 0}$.

This paper provides the first formal extension of the diffusion canonical transformation method to include a substantially
larger {\it dual class} of monotonic mappings and thereby constructs two new main classes of solvable diffusions $(F_t)_{t\ge 0}$. Throughout,
these processes are also called $F$-diffusions. In contrast to the previous related papers,
the time-homogeneous Doob-h transform is now constructed more generally to include diffusions
with any type of singular and/or non-singular endpoints.
Hence, we also provide a complete boundary classification for all possible families of $F$-diffusions.
The first main class of $F$-diffusions consists of families satisfying a
time-homogeneous stochastic differential equation (SDE) of the form $dF_t=(a+bF_t)dt + \sigma(F_t)dW_t$
with an affine (linear) drift and multi-parameter nonlinear diffusion coefficient function.
We therefore note that the affine drift models presented in this paper significantly extend and include those studied in \cite{CM06}.
As part of our new results,
we present three explicitly solvable families of such $F$-diffusions, named {\it Bessel, confluent hypergeometric, and Ornstein-Uhlenbeck} families.
These processes arise via the diffusion canonical transformation method by respectively
choosing a squared Bessel (SQB) process, CIR (squared radial Ornstein-Uhlenbeck) process, and Ornstein-Uhlenbeck process as underlying diffusions.
The three new families include (recover) all the corresponding {\it driftless} $F$-diffusions obtained previously (e.g. see \cite{CM06})
as special subfamilies. Moreover, the new affine models inherit some of the important salient properties of their driftless counterparts.
One immediate application of such diffusions is asset pricing in finance (when $a=0$ and $b$ is a constant such as a risk-free interest rate).
In \cite{CM06}, we showed that these three families generate local volatility profiles with varied pronounced smiles
and skews (see Figures~\ref{fig1} and ~\ref{fig2}). Three particular subfamilies named here as the Bessel-$\mathsf{K}$,
confluent-$\mathcal{U}$, and OU models are of particular importance since, for each of them, there exists a risk-neutral probability measure
such that the discounted asset price process $e^{-bt}F_t$ is a martingale. As in the driftless case, these models are very amenable for pricing
many standard financial derivatives since the transition densities (state price densities)
for the asset or stock price (i.e. $S_t = F_t$) are given in closed form.
Clearly, the pricing of standard European options is reduced to the evaluation of a definite integral (e.g. see \cite{ACCL,CM06}).
As well, these solvable models admit explicit closed-form spectral expansions for the transition densities with imposed killing at arbitrary levels, for
the first hitting time densities, and for joint densities of the extrema and the price process. Hence, efficient pricing formulas of standard exotic options,
such as barrier and lookback options, are also available \cite{Campolieti2008}.
Discretely-monitored path-dependent options can be evaluated by using a path integral approach, as was done with previously related state-dependent
volatility models \cite{CM07,CM08}. Moreover, subfamilies of diffusions belonging to the Bessel and confluent hypergeometric
families admit absorption at zero asset price, so they can naturally be used in derivatives pricing under credit (default) risk.

The second main class of solvable models presented here consists of diffusions with a nonlinear
drift and with specification of a generally nonlinear diffusion coefficient.
In particular, within this second class of diffusions we find some explicitly solvable diffusion families with a {\it nonlinear mean-reverting drift}.
Mean-reverting models have useful applications in modeling interest rates.
Traditional single-factor interest rate models only consider linear mean reversion,
since such solvable models have analytically tractable solutions.
As an example of an alternative one-factor nonlinear mean-reverting solvable model, a new family of $F$-diffusions generated from the SQB
Bessel process is introduced in this paper. For a particular subfamily of such processes, we use
the closed-form transition probability densities for the Doob-h transformed processes and
the fact that an underlying bridge process and its Doob-h transformed bridge process have equivalent probability laws, and hence derive some
closed-form integral formulas for conditional expectations of functionals involving the discount factor of the process and the process terminal value.
The formulas are applicable to standard bond and bond option pricing.

To summarize this introduction, we point out how the rest of the paper is organized. In Section~\ref{sect1}, all of the necessary ingredients
for constructing the newly solvable dual classes of $F$-diffusions are presented.
A useful Lemma~\ref{Lemma_Xrho} for
the boundary classification of the families of transformed $X^{(\rho)}$-diffusions and hence
$F$-diffusions is also given. Subsection~\ref{subsect1.4} concludes this section with the basis of the dual
smooth monotonic $\F$ mapping transformations for
generating the two main classes of solvable $F$-diffusions $\{F_t=\F(X_t^{(\rho)}),\,t\geq 0\}$.
Section~\ref{sect2} presents three explicit $X^{(\rho)}$-diffusions,
i.e., Bessel, confluent hypergeometric, and Ornstein-Uhlenbeck families. For each,
we give analytical expressions for various transition densities and also derive the boundary classification.
In subfamilies where an endpoint is attainable (e.g. subfamilies (i) and (iii) of the
Bessel and confluent $X^{(\rho)}$-diffusions), we also derive in analytically closed-form the density
for the first hitting time to the endpoint. The boundary classification and first hitting time densities
then follow automatically for the $F$-diffusions.
Section~\ref{sect3} presents the construction of the mappings for generating the affine drift $F$-diffusions.
In Subsection~\ref{subsect3.3}, we analyze whether or not $F$-diffusions with linear drift ``preserve'' the drift rate, i.e.
whether $\frac{d}{dT}\mathsf{E}[F_T \mid F_t]=\mathsf{E}[a + bF_T\mid F_t]$, $t\leq T,$ holds and thereby
present a theorem that gives easy-to-implement limit conditions for verifying this property.
This can be viewed as a generalization of the martingale property for \emph{driftless} processes.
Thus, for the special case with $a=0$ we are able to prove whether a discounted process $(e^{-bt}F_t)$ is a martingale.
Section~\ref{subsect4.1} presents the three main families of affine drift $F$-diffusions with their explicit multi-parameter nonlinear
volatility specifications.
We single out three subfamilies with this martingale property and two subfamilies in which $(e^{-bt}F_t)$ is a strict
supermartingale. In Section~\ref{subsect4.2} we discuss all possible monotonic maps that lead to nonlinear $F$-diffusions
with affine drift $a + bF$, $b\ne 0$. Section~\ref{sect5} presents Bessel families of nonlinear mean-reverting diffusions that are
obtained from subfamilies of the squared Bessel $X^{(\rho)}$-process by applying a power- or exponential-type mapping function $\F$.
Lemmas~\ref{lemma_SQBpwr} and \ref{lemma_SQBexp} give necessary conditions for the mean-reversion. The asymptotic behaviour of the drift
and diffusion coefficients is analyzed. Moreover, we present a model that admits a closed-form expression for the expectation of a discount
factor which can be used for bond and bond option pricing. In Appendix~\ref{sect_a1}, we derive
new asymptotic properties of Wronskians of fundamental solutions used to construct solvable
diffusions from the three main families considered here.
Such properties allow us to easily analyze stochastic properties of the solvable diffusions.

\section{Construction of Nonlinear Solvable Diffusions}\label{sect1}

\subsection{Underlying Diffusion}\label{subsect1.1}

Let $(X_t)_{t\geq 0}$ be a one-dimensional time-homogeneous regular diffusion on  $\I\equiv (l,r)$, $-\infty\leq l<r\leq\infty$,
defined by its infinitesimal generator:
\begin{equation}
(\G\, f)(x)\triangleq {1\over
2}\nu^2(x)f^{\prime\prime}(x) + \lambda(x)f^\prime(x)\,,\quad
   x \in \I\,.
\label{Generator}
\end{equation}
The functions $\lambda(x)$ and $\nu(x)$ denote, respectively, the
(infinitesimal) drift and diffusion coefficients of the process.
Throughout we assume that the functions $\lambda(x)$, $\lambda'(x)$,
$\nu(x)$ and $\nu''(x)$ are continuous on the open interval $\I$ and that
$\nu(x)$ is strictly positive on $\I$. The diffusion $(X_t)_{t\ge 0}$ has \textit{speed
measure}~$M(dx)$ and \textit{scale function}~$S(x)$ (see, e.g.,
\cite{BS02}) that are absolutely continuous with respect to the Lebesgue measure
and have smooth derivatives. The scale and speed density functions are defined as follows:
\begin{equation}
\s(x)=\frac{dS(x)}{dx}=\exp\left(-\int^x\frac{2\lambda(z)}{\nu^2(z)}dz\right)
   \mbox{ \ and \ } \m(x)=\frac{M(dx)}{dx}=\frac{2}{\nu^2(x)\s(x)}. \label{smx}
   \end{equation}

Given an $X$-diffusion, we can choose any pair of \textit{fundamental solutions} to
the differential equation $(\G\, \varphi)(x) = s \varphi(x)$, $s\in \C,$ $x\in\I$,
that are denoted by $\varphi^+_s$ and $\varphi^-_s$. For positive real values $s=\rho>0$,
$\varphi^+_\rho(x)$ and $\varphi^-_\rho(x)$ are linearly independent and respectively
increasing and decreasing positive functions of $x\in\I$.
The \textit{Wronskian} of these functions is given as:
 \begin{equation} \label{wronskian} W[\varphi^-_s,\varphi^+_s](x)
   \triangleq \varphi^-_s(x)\frac{d \varphi^+_s(x)}{d x} -
      \varphi^+_s(x)\frac{d \varphi^-_s(x)}{d x}=w_s\s(x)\,,
 \end{equation}
where $w_s$ is a constant w.r.t. $x$ and $w_\rho > 0$ for real $\rho > 0$.

We denote by $p_X(t;x_0,x)$ a transition probability density function
(PDF) for $(X_t)_{t\ge 0}$ w.r.t. the Lebesgue measure, i.e. it is a
fundamental solution to the Kolmogorov PDE where
$\P\big(X_{t}\in D|X_0=x_0\big)=\int_D p_X(t;x_0,x)\,dx$, $\forall
x_0 \in \I$, $t\geq 0$, $D\subseteq\I$.
We recall that the Green function $G_X(x,x_0,s)$ and the transition PDF are related via the
Laplace inverse transform w.r.t. $s$, i.e. $p_X(t;x_0,x) = {\mathcal L}_s^{-1}[G_X(x,x_0,s)][t]$.
The Green function $G_X$, for $x,x_0\in \I$, is written in terms of
a pair of functions $\psi_s,\phi_s$ and $\m(x)$ in the standard form
\cite{BS02}:
\begin{equation}
G_X(x,x_0,s) = {\mathcal W}_s^{-1}\m(x)\psi_s(x_<)\phi_s(x_>),
\label{greenfunc}
\end{equation}
where $x_<\equiv\min\{x,x_0\}$ and $x_>\equiv\max\{x,x_0\}$.
The functions $\{\psi_s,\phi_s\}$, that also solve
$(\G\, \varphi)(x) = s \varphi(x)$,
are generally not necessarily the same as the above chosen (elementary) pair. In particular,
these functions are linear combinations of $\{\varphi_s^+, \varphi_s^-\}$, i.e.
$\psi_s = A_1 \varphi^+_s + B_1 \varphi^-_s$,
$\phi_s = A_2 \varphi^+_s + B_2 \varphi^-_s$ with coefficients
$A_i = A_i(s), B_i = B_i(s)$, $i=1,2$. The Wronskian factor is given by
$W[\phi_s,\psi_s](x)/\s(x) = {\mathcal W}_s = (A_1B_2 - A_2B_1)w_s$.
The coefficients $A_i,B_i$ (where $A_1B_2 - A_2B_1 \ne 0$) and hence the functions,
are uniquely characterized
(within a multiplicative constant) by requiring that, for real $s=\alpha>0$,
$\psi_\alpha$ and $\phi_\alpha$ are respectively increasing and decreasing functions
and by additionally posing boundary conditions at regular
(non-singular) boundaries of $X$ (see~\cite{BS02}).
For a regular left boundary $l$, $\psi_\alpha(l+) = 0$ if $l\notin \I$
is specified as killing or $\frac{1}{\s(l+)}\frac{d\psi_\alpha(l+)}{dx}=0$
if $l$ is specified as reflecting and included in the state space.
If $l$ is a singular boundary, the functions have the following boundary properties:
If $l$ is entrance($\equiv$entrance-not-exit), then
$\psi_\alpha(l+) > 0, \frac{1}{\s(l+)}\frac{d\psi_\alpha(l+)}{dx}=0$;
if $l$ is exit($\equiv$exit-not-entrance), then $\psi_\alpha(l+) = 0,
\frac{1}{\s(l+)}\frac{d\psi_\alpha(l+)}{dx}>0$;
if $l$ is a natural boundary, then $\psi_\alpha(l+) =0,
\frac{1}{\s(l+)}\frac{d\psi_\alpha(l+)}{dx}=0$.
Analogous conditions hold for the right boundary $r$ involving the right limits,
i.e. $\phi_\alpha(r-) > (=) 0$ and $\frac{1}{\s(r-)}\frac{d\phi_\alpha(r-)}{dx} < (=) 0$.
Moreover, we note that if $l$ is singular then we can set $\psi_s(x) = \varphi_s^+(x)$
and similarly if $r$ is singular then $\phi_s(x) = \varphi_s^-(x)$.

\subsection{Change of Measure}\label{subsect1.2}

Consider a class of one-dimensional time-homogeneous regular
diffusions $(X^{(\rho)}_t)_{t\ge 0}\in \I$ with infinitesimal generator
 \begin{equation} \label{Generator_rho}
 (\G^{(\rho)}\,f)(x)\triangleq\frac{1}{2}\nu^2(x)f^{\prime\prime}(x)
 +\left(\lambda(x)+\nu^2(x)\frac{\hat{u}'_\rho(x)}{\hat{u}_\rho(x)}\right)f^{\prime}(x)\,.
 \end{equation}
A {\it strictly positive} generating function
$\hat{u}_\rho(x),$ $\rho>0$, is a linear combination of the chosen fundamental
pair $\varphi^\pm_\rho$:
\begin{equation}  \label{uhat} \hat{u}_\rho(x)=q_1 \varphi^+_\rho(x)
+ q_2 \varphi^-_\rho(x),
\end{equation}
with parameters $q_1,q_2\ge 0$ and at least one of them being
strictly positive. The speed and scale densities for an $X^{(\rho)}$-diffusion are given in terms
of those for the underlying $X$-diffusion:
 \begin{equation}
   \m_\rho(x)=\hat{u}^2_\rho(x)\,\m(x)\;\mbox{ and }\;
   \s_\rho(x)=\frac{\s(x)}{\hat{u}^2_\rho(x)}\,.\label{smxrho}
 \end{equation}

By comparing the generators (\ref{Generator}) and (\ref{Generator_rho}), observe that $X^{(\rho)}$-diffusions can also be viewed as arising from the underlying
$X$-diffusion by the application of a measure change. In fact, the $X^{(\rho)}$-diffusion can
be realized from the $X$-diffusion upon employing a time-homogeneous space-time transform, i.e.
a Doob-$h$ transform, where $h=\hat{u}_\rho$, which is $\rho$-excessive (see \cite{BS02}).
Both processes are regular on the same state space $\I=(l,r)$.

Given a generating function $\hat{u}_\rho$, we define the pair
$\varphi^{(\rho)+}_s \triangleq {\varphi^+_{\rho+s}\over \hat u_\rho}$ and
$\varphi^{(\rho)-}_s \triangleq {\varphi^-_{\rho+s}\over \hat u_\rho}$.
By applying the differential operator $\G^{(\rho)}$, it follows that
these functions solve
$(\G^{(\rho)}\,\varphi^{(\rho)}_s )(x) = s \varphi^{(\rho)}_s(x)$, $s\in \C,$ $x\in\I$.
From (\ref{wronskian}) and (\ref{smxrho}),
the Wronskian of these solutions is given by
\[W[\varphi^{(\rho)-}_s,\varphi^{(\rho)+}_s](x)
= {1\over \hat{u}_\rho^2}W[\varphi^-_{\rho+s},\varphi^+_{\rho+s}](x) = w_{\rho+s}\,\s_\rho(x)\,.
\]
Hence, $\{\varphi^{(\rho)-}_s(x),\varphi^{(\rho)+}_s(x)\}$
are a fundamental set of solutions that are linearly independent and
strictly positive functions of $x\in\I$ for real values $s = \alpha > 0$.

The Green function for $X^{(\rho)}$-diffusions on $\I$ then has the general form
\begin{equation}
G^{(\rho)}_{X}(x,x_0,s) = ({\mathcal W}^{(\rho)}_s)^{-1} \m_\rho(x)
\psi^{(\rho)}_s(x_<)\phi^{(\rho)}_s(x_>),
\label{greenfunc_rho}
\end{equation}
where, in analogy with the $X$-diffusion, $\{\psi^{(\rho)}_s,\phi^{(\rho)}_s\}$ solve
$(\G^{(\rho)}\,\varphi^{(\rho)}_s )(x) = s \varphi^{(\rho)}_s(x)$ and are linear combinations of
$\{\varphi^{(\rho)-}_s,\varphi^{(\rho)+}_s\}$, i.e.
$\psi^{(\rho)}_s = \hat{A}_1 \varphi^{(\rho)+}_s + \hat{B}_1 \varphi^{(\rho)-}_s
= {\hat{A}_1 \varphi^+_{\rho + s} + \hat{B}_1 \varphi^-_{\rho + s} \over \hat{u}_\rho}$
and $\phi^{(\rho)}_s = \hat{A}_2 \varphi^{(\rho)+}_s + \hat{B}_2 \varphi^{(\rho)-}_s
= {\hat{A}_2 \varphi^+_{\rho + s} + \hat{B}_2 \varphi^-_{\rho + s} \over \hat{u}_\rho}$
with coefficients
$\hat{A}_i = \hat{A}_i(\rho,s+\rho), \hat{B}_i = \hat{B}_i(\rho,s+\rho)$, $i=1,2$.
The Wronskian factor is then given by
${\mathcal W}^{(\rho)}_s =
W[\phi^{(\rho)}_s,\psi^{(\rho)}_s](x)/\s_\rho(x) =
(\hat{A}_1\hat{B}_2 - \hat{A}_2\hat{B}_1)w_{\rho + s}$,
where $\hat{A}_1\hat{B}_2 - \hat{A}_2\hat{B}_1 \ne 0$.
These coefficients are uniquely characterized
(within a multiplicative constant) by requiring that, for real $s=\alpha>0$,
$\psi^{(\rho)}_\alpha$ and $\phi^{(\rho)}_\alpha$
are respectively increasing and decreasing functions
and by additionally posing boundary conditions at regular
boundaries of $X^{(\rho)}$.
For a regular left boundary $l$, $\psi^{(\rho)}_\alpha(l+) = 0$ if $l\notin \I$
is specified as killing or
$\frac{1}{\s_\rho(l+)}\frac{d\psi^{(\rho)}_\alpha(l+)}{dx}=0$
if $l$ is specified as reflecting and included in the state space. Note that this reflecting boundary condition
is equivalently written as $\hat{A}_1W[\hat{u}_\rho,\varphi^+_{\rho + s}](l+)/\s(l+)
+ \hat{B}_1W[\hat{u}_\rho,\varphi^-_{\rho + s}](l+)/\s(l+) = 0$.
If $l$ is a singular boundary, the functions have the following boundary properties: if $l$ is entrance, then
$\psi^{(\rho)}_\alpha(l+) > 0, \frac{1}{\s_\rho(l+)}\frac{d\psi^{(\rho)}_\alpha(l+)}{dx}=0$;
if $l$ is exit, then $\psi^{(\rho)}_\alpha(l+) = 0,
\frac{1}{\s_\rho(l+)}\frac{d\psi^{(\rho)}_\alpha(l+)}{dx}>0$;
if $l$ is a natural boundary, then $\psi^{(\rho)}_\alpha(l+) =0,
\frac{1}{\s_\rho(l+)}\frac{d\psi^{(\rho)}_\alpha(l+)}{dx}=0$.
Analogous conditions hold for the right boundary $r$ involving the right limits,
i.e. $\phi^{(\rho)}_\alpha(r-) > 0$ (or $=0$) and
$\frac{1}{\s_\rho(r-)}\frac{d\phi^{(\rho)}_\alpha(r-)}{dx} < 0$ (or $=0$).

It clearly follows from (\ref{greenfunc}) and (\ref{greenfunc_rho}) that
any Green function for a diffusion $X^{(\rho)}$ can be related to some Green function
for a diffusion $X$ by
\begin{equation}
{G^{(\rho)}_{X}(x,x_0,s) \over \m_\rho(x)} = {1 \over \hat{u}_\rho(x)\hat{u}_\rho(x_0)}
{G_X(x,x_0,s+\rho) \over \m(x)}.
\label{greenfuncs_relation}
\end{equation}
For diffusion $X^{(\rho)}$, a transition PDF is obtained from its corresponding
Green function by Laplace inversion, i.e.
\begin{equation}\label{Laplace_inversion}
p_X^{(\rho)}(t;x_0,x) = {\mathcal L}_s^{-1}[G^{(\rho)}_{X}(x,x_0,s)][t]
= \m_\rho(x){\mathcal L}_s^{-1}[({\mathcal W}^{(\rho)}_s)^{-1} \psi^{(\rho)}_s(x_<)\phi^{(\rho)}_s(x_>)][t].
\end{equation}
By Laplace inverting (\ref{greenfuncs_relation}) we see that a
transition density $p_X^{(\rho)}(t;x_0,x)$ for a diffusion $X^{(\rho)}$
is related to a transition density for a diffusion $X$ by
\begin{equation}
p_X^{(\rho)}(t;x_0,x)=\frac{\hat{u}_\rho(x)}{\hat{u}_\rho(x_0)}
e^{-\rho t}p_X(t;x_0,x),\;x,x_0\in\I\,,\;t>0\,. \label{prho}
\end{equation}

\subsection{Boundary Classification} \label{subsect1.3}

Given an underlying $X$-diffusion and $\rho > 0$, any regular diffusion $(X_t^{(\rho)})_{t\ge 0}\in (l,r)$
with generator $\G^{(\rho)}$ in (\ref{Generator_rho}) falls into one of three general families:
  \begin{enumerate}[(i)]
    \item $\{q_1 = 0,\,q_2 > 0\}$ where $\hat{u}_\rho(x) = q_2\varphi_\rho^-(x)$,
    \item $\{q_1 > 0,\,q_2 = 0\}$ where $\hat{u}_\rho(x) = q_1\varphi_\rho^+(x)$,
    \item $\{q_1>0,\,q_2 > 0\}$  where $\hat{u}_\rho(x) = q_1\varphi_\rho^+(x) + q_2\varphi_\rho^-(x)$.
  \end{enumerate}

For $s_1, s_2\in \C$ with positive real parts, we denote
\begin{equation} \nonumber
n(x;s_1,s_2) = {\varphi_{s_1}^+(x)\over \varphi_{s_2}^-(x)},\,\,\,\,
n(l+;s_1,s_2) = \lim\limits_{x\to l+}n(x;s_1,s_2), \,\,\, n(r-;s_1,s_2) = \lim\limits_{x\to r-} n(x;s_1,s_2).
\end{equation}
We recall (see \cite{BS02}) that for singular (non-regular) boundaries of $(X_t)_{t\ge 0}$,
i.e. entrance ($\equiv$entrance-not-exit), exit ($\equiv$exit-not-entrance)
or natural, we have $n(l+;s_1,s_2)=0$ and $n(r-;s_1,s_2)=\infty$. For regular boundaries of
$(X_t)_{t\ge 0}$ it is also possible, depending on the choice of fundamental solutions and the
type of boundary conditions imposed
at $x=e\in\{l+,r-\}$, that $n(e;s_1,s_2)$ is finite for all $s_1, s_2\in \C$ with positive real parts.
In particular, we generally have $0 \le n(l+;\rho,\rho) <\infty$ and $0 < n(r-;\rho,\rho) \le \infty$.

The fundamental solutions generally satisfy the square integrability conditions w.r.t. the speed measure:
$(\varphi_\rho^+,\varphi_\rho^+)_{(l,x]}<\infty $ and $(\varphi_\rho^-,\varphi_\rho^-)_{[x,r)}<\infty$
for $\rho > 0$ and $x\in\I$. Throughout this paper we conveniently define the inner product of
two functions $f$, $g$ w.r.t. $\m$ on a closed interval $[a,b]$ as
$(f,g)_{[a,b]} \triangleq \int_a^b f(x)g(x)\m(x)dx$
and $(f,g)_{(a,b]}\triangleq\lim\limits_{\varepsilon\to
a^+}(f,g)_{[\varepsilon,b]}$, $(f,g)_{[a,b)} \triangleq
\lim\limits_{\varepsilon \to b^-}(f,g)_{[a, \varepsilon]}$,
$||f||_{(a,b)}^2 \triangleq (f,f)_{(a,b)}$.

\begin{lemma} \label{Lemma_Xrho} The above three families (i)--(iii) of regular diffusions $(X^{(\rho)}_t)_{t\ge 0}$ on $(l,r)$ with generator
$\G^{(\rho)}$, $\rho > 0$, defined by (\ref{Generator_rho}) and
(\ref{uhat}) have the following boundary classification:
\begin{enumerate}[(i)]
\item $q_1 = 0,\,q_2 > 0$: $l$ is attracting natural if
$(\varphi_\rho^+,\varphi_\rho^-)_{(l,x]} = \infty$, is exit (or attracting natural)
when $n(l+;\rho,\rho)=0$ (or $\neq 0$) if $(\varphi_\rho^+,\varphi_\rho^-)_{(l,x]} < \infty$ and
$(\varphi_\rho^-,\varphi_\rho^-)_{(l,x]} = \infty$, and is otherwise regular
if $(\varphi_\rho^-,\varphi_\rho^-)_{(l,x]} < \infty$.

The boundary $r$ is non-attracting (or attracting) natural when
$n(r-;\rho,\rho)=\infty$ (or $<\infty$) if $(\varphi_\rho^+,\varphi_\rho^-)_{[x,r)} = \infty$, and
is entrance (or regular) when $n(r-;\rho,\rho)=\infty$ (or $<\infty$)
if $(\varphi_\rho^+,\varphi_\rho^-)_{[x,r)} < \infty$.

\item $q_1 > 0,\,q_2 = 0$: $r$ is attracting natural if
$(\varphi_\rho^+,\varphi_\rho^-)_{[x,r)} = \infty$, is exit (or attracting natural)
when $n(r-;\rho,\rho)=\infty$ (or $<\infty$) if
$(\varphi_\rho^+,\varphi_\rho^-)_{[x,r)} < \infty$ and $(\varphi_\rho^+,\varphi_\rho^+)_{[x,r)} = \infty$, and
is otherwise regular if $(\varphi_\rho^+,\varphi_\rho^+)_{[x,r)} < \infty$.

The boundary $l$ is non-attracting (or attracting) natural when $n(l+;\rho,\rho)=0$ (or $\neq 0$)
if $(\varphi_\rho^+,\varphi_\rho^-)_{(l,x]} = \infty$, and is entrance
(or regular) when $n(l+;\rho,\rho)=0$ (or $\neq 0$) if $(\varphi_\rho^+,\varphi_\rho^-)_{(l,x]} < \infty$.
\item $q_1 > 0,\,q_2 > 0$: The boundary $l$ has the same
classification as in (i) and $r$ has the same classification as in
(ii).
\end{enumerate}
\end{lemma}
\begin{proof}
Let $x<y$, $x,y\in (l,r)$, and denote the scale measure $\mathcal{S}_\rho[x,y]= \!\int_{x}^{y} \s_\rho(z)dz$,
$\mathcal{S}_\rho(l,y]= \lim\limits_{x\to l+}\mathcal{S}_\rho[x,y]$, $\mathcal{S}_\rho[x,r)= \lim\limits_{y\to r-}\mathcal{S}_\rho[x,y]$, and let
\begin{align*}
  \Sigma_\rho(l) &= \int_l^x \mathcal{S}_\rho(l,z]\m_\rho(z)dz,\; &\Sigma_\rho(r) &= \int_x^r \mathcal{S}_\rho[z,r)\m_\rho(z)dz,\\ N_\rho(l) &= \int_l^x \mathcal{S}_\rho[z,x]\m_\rho(z)dz, \; &N_\rho(r) &= \int_x^r \mathcal{S}_\rho[x,z]\m_\rho(z)dz.\end{align*}
The proof now follows by applying the Feller conditions for the respective $X^{(\rho)}$-diffusions (i)--(iii)
with scale and speed densities in (\ref{smxrho}), That is, $e\in\{l+,r-\}$ is regular if $\Sigma_\rho(e) < \infty$ and $N_\rho(e) < \infty$,
exit if $\Sigma_\rho(e) < \infty$ and $N_\rho(e) = \infty$, entrance if $\Sigma_\rho(e) = \infty$ and $N_\rho(e) < \infty$, and
natural if $\Sigma_\rho(e) = \infty$ and $N_\rho(e) = \infty$; $l$ ($r$) is attracting if and only if
$\mathcal{S}_\rho(l,x]$ ($\mathcal{S}_\rho[x,r)$) is finite. From (\ref{wronskian}) and (\ref{smxrho}), we have
$\s_\rho = {1\over q_2 w_\rho}\big({\varphi_\rho^+\over \hat{u}_\rho}\big)'$, if $q_2 > 0$, and
$\s_\rho = -{1\over q_1 w_\rho}\big({\varphi_\rho^-\over \hat{u}_\rho}\big)'$, if $q_1 > 0$.
Hence, $\mathcal{S}_\rho[x,y] = {1\over q_2 w_\rho}\big[{\varphi_\rho^+(y)\over \hat{u}_\rho(y)} -
{\varphi_\rho^+(x)\over \hat{u}_\rho(x)}\big]$, if $q_2 > 0$;
$\mathcal{S}_\rho[x,y] = {1\over q_1 w_\rho}\big[{\varphi_\rho^-(y)\over \hat{u}_\rho(y)} -
{\varphi_\rho^-(x)\over \hat{u}_\rho(x)}\big]$, if $q_1 > 0$. Consider family (i). Then,
$\mathcal{S}_\rho(l,x] = {1\over q_2^2 w_\rho}\big[n(x;\rho,\rho) - n(l+;\rho,\rho)\big] < \infty$, so $l$ is attracting;
$\mathcal{S}_\rho[x,r) = {1\over q_2^2 w_\rho}\big[n(r-;\rho,\rho) - n(x;\rho,\rho)\big]$, so $r$ is attracting
if and only if $n(r-;\rho,\rho) < \infty$.
$\Sigma_\rho(l) = {1\over w_\rho}\{(\varphi_\rho^+,\varphi_\rho^-)_{(l,x]} - n(l+;\rho,\rho)(\varphi_\rho^-,\varphi_\rho^-)_{(l,x]}\}$
and $\Sigma_\rho(r) = {1\over w_\rho}\{n(r-;\rho,\rho)(\varphi_\rho^-,\varphi_\rho^-)_{[x,r)} - (\varphi_\rho^+,\varphi_\rho^-)_{[x,r)}\}$.
Hence, $\Sigma_\rho(l) < \infty$ if and only if $(\varphi_\rho^+,\varphi_\rho^-)_{(l,x]}< \infty$ when $n(l+;\rho,\rho) = 0$ and
$\Sigma_\rho(l) < \infty$ if and only if $(\varphi_\rho^-,\varphi_\rho^-)_{(l,x]}< \infty$ when $n(l+;\rho,\rho) \ne 0$,
since $(\varphi_\rho^-,\varphi_\rho^-)_{(l,x]} < \infty \implies (\varphi_\rho^+,\varphi_\rho^-)_{(l,x]}< \infty$, while
$\Sigma_\rho(r) < \infty$ if and only if $n(r-;\rho,\rho)< \infty$ and $(\varphi_\rho^-,\varphi_\rho^-)_{[x,r)}< \infty$.
$N_\rho(l) = {1\over w_\rho}\{n(x;\rho,\rho)(\varphi_\rho^-,\varphi_\rho^-)_{(l,x]} - (\varphi_\rho^+,\varphi_\rho^-)_{(l,x]}\}$ and
$N_\rho(r) = {1\over w_\rho}\{(\varphi_\rho^+,\varphi_\rho^-)_{[x,r)} - n(x;\rho,\rho)(\varphi_\rho^-,\varphi_\rho^-)_{[x,r)}\}$. Hence,
$N_\rho(l) < \infty$ if and only if $(\varphi_\rho^-,\varphi_\rho^-)_{(l,x]} < \infty$, since $n(x;\rho,\rho) < \infty$, while
$N_\rho(r) < \infty$ if and only if $(\varphi_\rho^+,\varphi_\rho^-)_{[x,r)} < \infty$, since $(\varphi_\rho^-,\varphi_\rho^-)_{[x,r)} < \infty$.
The above combined conditions are then summarized as stated in the Lemma for family (i).
The stated boundary classification for families (ii) and (iii) is proven by applying similar steps as in family (i).
\end{proof}

\subsection{Generating $F$-Diffusions: Dual Transformations}\label{subsect1.4}

We now consider $F$-diffusions $\{F_t \triangleq \F
(X^{(\rho)}_t), t\ge 0\}$ defined by strictly monotonic real-valued mapping
$\F:\I\to\I_{\sf F}$ with $\F ', \F ''$ continuous on $\I$ with unique inverse $\X \triangleq \F^{-1}$.
Such an elementary (It\^{o}) transformation gives a diffusion process $(F_t)_{t\ge 0}$
with infinitesimal generator
\begin{equation}\label{GeneratorF}
(\mathcal{G}_{\sf F} h)(F) = \frac{1}{2}\,\sigma^2(F)
h''(F) + \alpha(F) h'(F),\quad F\in\I_{\sf F}=(F^l,F^r)
\end{equation}
where $\alpha$ and $\sigma$ are the respective drift and diffusion coefficients:
\begin{equation} \label{F_coefficients}
\alpha(F)=(\mathcal{G}^{(\rho)} \F)(\X(F)),\;
    \sigma(F)=\nu(\X(F))/|\X'(F)|\,. \end{equation}
$(F_t)_{t\ge 0}$ is a regular diffusion on $\I_{\sf F}=(F^l,F^r)$ with endpoints $F^l=\min\{\F(l+),\F(r-)\}$ and
$F^r=\max\{\F(l+),\F(r-)\}$.

The map $\F$ can be specified so as to create a process
with a desired drift or diffusion coefficient.
To obtain a linear-drift $F$-diffusion, the drift coefficient is specified by a linear function, i.e. we set
$\alpha(F) = a+bF$. Hence, by the first relation in (\ref{F_coefficients}), with $x=\X(F)$, $F=\F(x)$,
we see that $\F$ is obtained by solving the 2nd order linear nonhomogeneous ODE:
 \begin{equation} \label{FODE}
  (\G^{(\rho)}\,\F )(x) = a+b\F (x)\,.
 \end{equation}
Given any strictly monotonic smooth solution $\F(x)$, then $(F_t)_{t\ge 0}$ is
a process with specified {\it affine (linear) drift} $\alpha(F) = a+bF$
and having generally nonlinear diffusion coefficient with
infinitesimal generator in (\ref{GeneratorF}),
where $\sigma^2(F) = \nu^2(\X(F))/[\X'(F)]^2$ follows automatically
from the second relation in (\ref{F_coefficients}).

An alternative approach is to specify the diffusion coefficient rather than the drift function.
One way is to directly specify a strictly positive function $\sigma(F)$ with continuous first derivative.
Then, the second equation in (\ref{F_coefficients}), i.e.
${dF\over \sigma(F)} = \pm {dx\over  \nu(x)}$, is integrated to give
$F=\F(x)$, and its inverse relation $x=\X(F)$, where $\sgn ({\sf F}'(x)) = \pm 1$
allows for either a strictly increasing or a decreasing map.
Another way is to explicitly specify a strictly nonzero continuously differentiable function
$\F'(x)$, i.e. specify $\widetilde{\sigma}(x)=\sigma(\F(x)) = \nu(x)|\F '(x)|$, and then integrate giving
$\F(x) = \bar{F}\pm \int_{\bar{x}}^x \frac{\widetilde{\sigma}(y)}{\nu(y)}\,dy$,
with $\bar{F}=\F(\bar{x})\in \I_F$ and $\bar{x}\in \I$ as arbitrary constants. The diffusion function is then given by
$\sigma(F) = \widetilde{\sigma}(\X(F))$. Either way, the resulting strictly monotonic smooth map $\F$
is used to produce $F$-diffusions defined by the infinitesimal generator in (\ref{GeneratorF})
with a specified diffusion coefficient function $\sigma(F)$ and a resulting generally {\it nonlinear drift} function:
\begin{equation}\label{drift_F2}
\alpha(F) = {\lambda(\X(F))\over \X'(F)} + \bigg({\nu(\X(F)) \over \X'(F)}\bigg)^2
\left[ \X'(F) \frac{\hat{u}'_\rho(\X(F))}{\hat{u}_\rho(\X(F))}
- \frac{1}{2}\frac{\X''(F)}{\X'(F)}
\right]\,.
\end{equation}
This expression follows from the first equation in (\ref{F_coefficients}) while using
(\ref{Generator_rho}) where ${\sf F}'(x) = [{\sf X}'(F)]^{-1}$, $\F''(x) = -\X''(F)/(\X'(F))^3$.

By either of the above {\it dual} transformation approaches,
several families of analytically solvable $F$-diffusion models can be constructed using
known solvable underlying $\X$-diffusion processes defined by (\ref{Generator}). The $F$-diffusion models given
by (\ref{GeneratorF}) either have a nonlinear state dependent volatility with a specified affine (linear) drift or
have nonlinear state dependent drift with a volatility that is specified
as either affine or as nonlinear state dependent. We refer to the above general framework
as the \textit{``diffusion canonical transformation''} methodology.

\begin{lemma} \label{Theorem_F_classification} The boundary classification for an $F$-diffusion defined
by $\{F_t=\F(X^{(\rho)}_t),t\geq 0\}$ with strictly monotonic mapping $\F$ is equivalent
to the corresponding $X^{(\rho)}$-diffusion.
\end{lemma}
\begin{proof}
This follows trivially by the diffeomorphism $X^{(\rho)}_t \to F_t=\F(X^{(\rho)}_t)$.
\end{proof}

\section{Three Choices of Underlying Solvable Diffusions} \label{sect2}

\subsection{The Squared Bessel Process} \label{subsect2.1}
Consider a~$\lambda_0$-dimensional squared Bessel (SQB) process
obeying the SDE $dX_t=\lambda_0 dt+\nu_0\sqrt{X_t}dW_t$
with constants $\nu_0>0$ and $\lambda_0\in\R$. This diffusion has regular state space
$\I=(0,\infty)\equiv \R_+$ with generator
$(\G\, f)(x)\triangleq {\nu_0^2\over 2}[xf''(x) + (\mu + 1)f'(x)],\,\,\,\,\mu \in \R$, and scale and speed
densities $\s(x)=x^{-\mu-1}$ and $\m(x)=\frac{2}{\nu_0^2}x^{\mu},$ where
$\mu\equiv\frac{2\lambda_0}{\nu_0^2}-1$.
The origin is entrance if $\mu\geq 0$, regular if $\mu\in(-1,0)$ and exit if $\mu\leq -1$;
$\infty$ is natural (attracting for $\mu>0$). As a pair of fundamental solutions to
$(\G\, \varphi)(x) = s \varphi(x)$, $s\in \C$, for $x\in\R_+$ we choose
\begin{equation} \label{SQBfund}
\varphi^+_s(x) =
x^{-\mu/2}I_{\vert\mu\vert}\big(2\sqrt{2 s x}/\nu_0\big) \mbox{ \ and \ }
\varphi^-_s(x) = x^{-\mu/2}K_{\mu}\big(2\sqrt{2 s x}/\nu_0\big)\,,
\end{equation}
where $I_\mu(z)$ and $K_\mu(z)$ are the modified Bessel functions
(of order $\mu$) of the first and second kind, respectively (see \cite{AS72}).
Note that by symmetry $K_{\mu}(z) = K_{-\mu}(z) = K_{\vert\mu\vert}(z)$.
The pair $\varphi^\pm_s(x)$ satisfies (\ref{wronskian}) where $w_s=1/2$.
For $s_1,s_2\in\C$, all Wronskians $W[\varphi^\pm_{s_1},\varphi^\pm_{s_2}](x)$ and $W[\varphi^-_{s_1},\varphi^+_{s_2}](x)$
are readily obtained using differential recurrences $zI^{\prime}_\mu(z) = \mu I_\mu(z) + zI_{\mu + 1}(z)$
and $zK^{\prime}_\mu(z) = \mu K_\mu(z) - zK_{\mu + 1}(z)$.

The well-known Green function in (\ref{greenfunc}) for the SQB on $\R_+$ is readily formed
by taking appropriate linear combinations of $\varphi^\pm_s$ giving
\begin{equation}\label{SQB_psi_phi}
\psi_s(x) = x^{-\mu/2}I_{\tilde\mu}\big(2\sqrt{2 s x}/\nu_0\big) \mbox{ \ and \ }
\phi_s(x) = x^{-\mu/2}K_{\mu}\big(2\sqrt{2 s x}/\nu_0\big)\,,
\end{equation}
where $\tilde\mu = \mu$ if $\mu \ge 0$ or if $\mu\in(-1,0)$ and 0 is reflecting,
and $\tilde\mu = -\mu = \vert\mu\vert$ if $\mu \le -1$ or if $\mu\in(-1,0)$ and 0 is killing.
In all cases, the Wronskian factor is simply ${\mathcal W}_s = w_s = 1/2$.
Laplace inverting the Green function, while using (\ref{SQB_psi_phi}) and the identity
${\mathcal L}_s^{-1}[I_\nu(x\sqrt{2s})K_\nu(y\sqrt{2s})][t] =
{1\over 2t}e^{-{x^2 + y^2\over 2t}}I_\nu\big({xy\over t}\big)$, for $0 < x\le y$, $t>0$,
readily gives the known transition PDF for $X_t\in\R_+$ as
\begin{equation}
p_X(t;x_0,x) = \left({x\over
x_0}\right)^{\frac{\mu}{2}}
    \,{e^{-2(x + x_0)/\nu_0^2t} \over \nu_0^2t/2}
     I_{\tilde\mu}\left({4\sqrt{xx_0}\over \nu_0^2t}\right)
 \label{PrkernelSQB}
\end{equation}
for all $x,x_0>0$, $t>0$ and $\tilde\mu$ given as above for the respective cases.

The Bessel family of $X^{(\rho)}$-diffusions has generator in (\ref{Generator_rho}) with
$\varphi^+_\rho(x)$ and $\varphi^-_\rho(x)$ in $\hat{u}_\rho(x)$ defined by the functions
in (\ref{SQBfund}) for positive real values of $s=\rho > 0$ and the
following lemma gives the boundary classification for these processes.

\begin{lemma} \label{lemma_class_Bessel}
The Bessel family of regular $X^{(\rho)}$-diffusions on $\R_+$ has the following boundary classification:
The origin is entrance if $q_2=0$, is regular if $q_2>0$ and $|\mu| < 1$, and
is exit if $q_2>0$ and $|\mu|\geq 1$; $\infty$ is non-attracting (or attracting) natural if $q_1=0$ (or $q_1>0$).
\end{lemma}
\begin{proof}
From the asymptotic relations for $\varphi_\rho^\pm$ in (i) of Appendix~\ref{subsect_a11}, we have
$n(0+;\rho,\rho)=0$, $n(\infty;\rho,\rho)=\infty$, $(\varphi_\rho^+,\varphi_\rho^-)_{(0,x]} < \infty$,
$(\varphi_\rho^-,\varphi_\rho^-)_{(0,x]} < \infty$ if and only if $\vert\mu\vert < 1$, and
$(\varphi_\rho^+,\varphi_\rho^-)_{[x,\infty)} = \infty$.
The stated boundary classification then follows by Lemma~\ref{Lemma_Xrho}.
\end{proof}
Hence, in all subfamilies (i)--(iii) the point at infinity is a natural boundary.
The Bessel subfamily of type (ii), where $q_1>0, q_2=0$, is conservative with the origin as an
entrance. For subfamilies (i) and (iii), where $q_2 > 0$, the origin is attainable.

From the theory in Section \ref{subsect1.3},
we can readily construct Green functions in the form of (\ref{greenfunc_rho}) and subsequently obtain the
corresponding transition PDF by Laplace inversion. We now give some examples for the Bessel family of
$X^{(\rho)}$-diffusions on $\R_+$. For subfamily (ii), $q_2=0$ and, without loss of generality, we take $q_1=1$, i.e.
$\hat{u}_\rho(x) = \varphi^+_{\rho}(x)$. The Green function in (\ref{greenfunc_rho}) is uniquely
specified by
\begin{equation*}
\psi^{(\rho)}_s(x) = {\varphi^+_{\rho + s}(x)\over \varphi^+_{\rho}(x)}
= {I_{\vert\mu\vert}\big({2\over\nu_0}\sqrt{2 (\rho + s) x}\big)
\over I_{\vert\mu\vert}\big({2\over\nu_0}\sqrt{2 \rho x}\big)}
\mbox{ \ and \ }
\phi^{(\rho)}_s(x) = {\varphi^-_{\rho + s}(x)\over \varphi^+_{\rho}(x)}
= {K_{\mu}\big({2\over\nu_0}\sqrt{(\rho + s) x}\big) \over I_{\vert\mu\vert}\big({2\over\nu_0}\sqrt{2 \rho x}\big)}\,,
\end{equation*}
where in this case ${\mathcal W}^{(\rho)}_s = w_{s+\rho} = {1\over 2}$ and
$\m_\rho(x) = \m(x)x^{-\mu} I_{\vert\mu\vert}^2\big(2\sqrt{2 \rho x}/\nu_0\big)$.
The transition PDF for the Bessel subfamily (ii) is given explicitly via (\ref{Laplace_inversion}),
upon using the above Laplace inverse identity for order $\nu = {\vert\mu\vert}$ and the property
${\mathcal L}_s^{-1}[f(\rho + s)][t] = e^{-\rho t}{\mathcal L}_s^{-1}[f(s)][t]$):
\begin{align}
p^{(\rho)}_X(t;x_0,x) &= {I_{\vert\mu\vert}\big({2\over\nu_0}\sqrt{2\rho x}\big)
\over I_{\vert\mu\vert}\big({2\over\nu_0}\sqrt{2\rho x_0}\big)}
{e^{-\rho t -2(x + x_0)/\nu_0^2t} \over \nu_0^2t/2}
     I_{\vert\mu\vert}\left({4\sqrt{xx_0}\over \nu_0^2t}\right)\,,
 \label{PDF_Bessel_1}
\end{align}
$x,x_0>0$, $t>0$, $\mu\in\R$. We note that this has the form in (\ref{prho}) where
$p_X(t;x_0,x)$ is given by the r.h.s. of (\ref{PrkernelSQB}) for $\tilde\mu = \vert\mu\vert$.

For subfamilies (i) and (iii), i.e. $q_1\ge 0, q_2 > 0$, the origin is exit for $|\mu|\geq 1$ and regular for
$|\mu| < 1$. In particular, by specifying the origin as killing for $|\mu| < 1$ then
the process is absorbed at the origin for all $\mu \in \R$.
In this case, the Green function in (\ref{greenfunc_rho}) is formed by taking
$\psi^{(\rho)}_s(x) = \varphi^+_{\rho + s}(x)/\hat{u}_{\rho}(x)$ and
$\phi^{(\rho)}_s(x) = \varphi^-_{\rho + s}(x)/\hat{u}_{\rho}(x)$:
\begin{equation*}
\psi^{(\rho)}_s(x) = {I_{\vert\mu\vert}\big({2\over\nu_0}\sqrt{2 (\rho + s) x}\big)
\over q_1 I_{\vert\mu\vert}\big({2\over\nu_0}\sqrt{2 \rho x}\big)
+ q_2 K_\mu\big({2\over\nu_0}\sqrt{2 \rho x}\big)}\,,\,\,\,
\phi^{(\rho)}_s(x) = {K_{\mu}\big({2\over\nu_0}\sqrt{(\rho + s) x}\big)
\over q_1 I_{\vert\mu\vert}\big({2\over\nu_0}\sqrt{2 \rho x}\big)
+ q_2 K_\mu\big({2\over\nu_0}\sqrt{2 \rho x}\big)},
\end{equation*}
with Wronskian ${\mathcal W}^{(\rho)}_s = {1\over 2}$ and
$\m_\rho(x) = \m(x)x^{-\mu} [q_1 I_{\vert\mu\vert}\big(2\sqrt{2 \rho x}/\nu_0\big)
+ q_2 K_\mu\big(2\sqrt{2 \rho x}/\nu_0\big)]^2$. By using the same above Laplace inversion identities,
the corresponding transition PDF is given explicitly via (\ref{Laplace_inversion}):
\begin{eqnarray}
p^{(\rho)}_X(t;x_0,x) = {q_1 I_{\vert\mu\vert}\big({2\over\nu_0}\sqrt{2\rho x}\big)
+ q_2 K_\mu\big({2\over\nu_0}\sqrt{2\rho x}\big)
\over q_1 I_{\vert\mu\vert}\big({2\over\nu_0}\sqrt{2\rho x_0}\big)
+ q_2 K_\mu\big({2\over\nu_0}\sqrt{2\rho x_0}\big)}
{e^{-\rho t -2(x + x_0)/\nu_0^2t} \over \nu_0^2t/2}
     I_{\vert\mu\vert}\left({4\sqrt{xx_0}\over \nu_0^2t}\right)\,,
 \label{PDF_Bessel_2}
\end{eqnarray}
$x,x_0>0$, $t>0$, $\mu\in\R$. Again, note that this has the form in (\ref{prho}).

For all processes with $q_2 > 0$, i.e. subfamilies (i) and (iii), the origin is attainable and the distribution of
the first-hitting time at the origin, $\tau^{(\rho)}_0 = \inf\{t\ge 0: X^{(\rho)}_t = 0\}$,
for the $X^{(\rho)}$-diffusion started at $X^{(\rho)}_0=x_0>0$
is readily computed in closed form.
Following the theory in \cite{CM06,Campolieti2008}, the PDF of $\tau^{(\rho)}_0$ is given by the limit
\begin{equation}\label{FHT_PDF}
f^{(\rho)}(t;x_0,0) = \displaystyle\frac{1}{\s_\rho(x)} \frac{\partial}{\partial
  x}\left(\frac{p^{(\rho)}_X(t;x_0,x)}{\m_\rho(x)}\right)\bigg|_{x=0+}.
\end{equation}
By substituting (\ref{PDF_Bessel_2}) and making use of the small argument
asymptotics of the modified Bessel functions, the above limit is computed explicitly to give
\begin{equation}
f^{(\rho)}(t;x_0,0) = P_0\frac{(b/a)^{\nu/2}}{2K_\nu(\sqrt{ab})}t^{-\nu-1}e^{-(at+b/t)/2}\,,
\end{equation}
where $t>0$, $a=2\rho$, $b=4x_0/\nu_0^2$, $\nu=\vert\mu\vert$,
and $P_0 = \P\left( \tau^{(\rho)}_0 < \infty\right) =\frac{q_2K_\nu(\sqrt{ab})}{q_1I_\nu(\sqrt{ab})+q_2K_\nu(\sqrt{ab})}$ is
the probability for eventually hitting the origin.
For subfamily (i), i.e. $q_1 = 0$, the point at infinity is non-attracting and hence $P_0 = 1$.
For subfamily (iii), $q_1 > 0$, the point at infinity is attracting so $P_0 < 1$.
The first-hitting time at the origin has the generalized inverse Gaussian distribution. The above PDF
generalizes that obtained in \cite{CM06} where only $\mu>0$ was considered.

We simply note here that for $q_2>0$ and $\vert\mu\vert<1$, the regular boundary at $0$ can also be specified as instantaneously
reflecting. In this case, the function $\phi^{(\rho)}_s(x)$ is still given as just above, whereas $\psi^{(\rho)}_s(x)$ must now satisfy
the reflecting condition: $\frac{1}{\s_\rho(0+)}\frac{d\psi^{(\rho)}_\alpha(0+)}{dx}=0$.
The Green function in (\ref{greenfunc_rho}) can then be explicitly constructed from these functions and their Wronskian.
Laplace inversion via (\ref{Laplace_inversion}) leads to the transition PDF for
the Bessel family of $X^{(\rho)}$-diffusions that are reflected at the origin.

\subsection{The CIR Process}\label{subsect2.2}
Consider the Cox-Ingerssol-Ross (CIR) process \cite{CIRref}
$(X_t)_{t\ge 0}$ on the regular state space $\I=(0,\infty)\equiv \R_+$ with SDE
$dX_t=(\lambda_0-\lambda_1X_t)dt+\nu_0\sqrt{X_t}dW_t$,
with parameters $\lambda_0, \lambda_1\in \R$, $\lambda_1\ne 0$, $\nu_0>0$. By defining the parameters
$\mu\equiv\frac{2\lambda_0}{\nu_0^2}-1$ and $\kappa \equiv \frac{2\lambda_1}{\nu_0^2}$,
the generator for this diffusion is then given by
$(\G\, f)(x)\triangleq {1\over 2}\nu_0^2[xf''(x) + (\mu + 1 - \kappa x)f'(x)]$.
The respective scale and speed densities are $\s(x) = x^{-\mu-1}e^{\kappa x}$
and $\m(x) = \frac{2}{\nu_0^2}x^{\mu}e^{-\kappa x}= \frac{\kappa}{\lambda_1}x^{\mu}e^{-\kappa x}$.
[We note that this process is also the squared radial Ornstein-Uhlenbeck process.
In fact, setting $\nu_0=2$, and renaming the parameters $\mu\to\nu$, $\kappa \to \gamma$,
$\lambda_1 \to 2\gamma$ recovers precisely the process discussed on pages 140--142 of \cite{BS02}.]
The endpoint $x=\infty$ is natural. The origin is an entrance if $\mu\geq 0$, is a regular
boundary if $-1<\mu<0$, and is an exit if $\mu\leq -1$.

We begin by fixing a suitable pair of fundamental solutions to $(\G\, \varphi)(x) = s \varphi(x)$ such that,
for positive real values of $s=\rho>0$, $\phi^+_\rho(x)$ and
$\phi^-_\rho(x)$ are respectively increasing and decreasing strictly positive
functions of $x\in\R_+$. We now choose such a pair of fundamental solutions as follows.
For the case (a) $\kappa > 0$ (i.e. $\lambda_1 > 0$) we take
\begin{eqnarray} \label{CIRfund_1_M}
\varphi^+_s(x) = (\kappa x)^{-(\mu+1)/2}e^{\kappa x/2}\,M_{-\frac{s}{\lambda_1} + \frac{\mu+1}{2}\,,\, \frac{\vert \mu \vert}{2}}(\kappa x)
= (\kappa x)^{\mu_-}\!\M\bigg(\frac{s}{\lambda_1} + \mu_-,1 + \vert\mu\vert,\kappa x\bigg)
\end{eqnarray}
and
\begin{eqnarray} \label{CIRfund_1_U}
\varphi^-_s(x) = (\kappa x)^{-(\mu+1)/2}e^{\kappa x/2}\,W_{-\frac{s}{\lambda_1} + \frac{\mu+1}{2}\,,\, \frac{\vert \mu \vert}{2}}(\kappa x)
&=& (\kappa x)^{\mu_-}\U\bigg(\frac{s}{\lambda_1} + \mu_-,1 + \vert\mu\vert,\kappa x\bigg)
\nonumber\\
&=& \U\bigg(\frac{s}{\lambda_1},1 + \mu,\kappa x\bigg),
\end{eqnarray}
where $\mu_- = (\vert\mu\vert - \mu)/2 = \max\{0,-\mu\}$ is the negative part of $\mu$.
In what follows we also denote $\mu_+ = (\vert\mu\vert + \mu)/2 = \max\{0,\mu\}$
for the positive part of $\mu$.
The functions $\M(a,b,z)$ and $\U(a,b,z)$ are confluent hypergeometric functions,
i.e. the standard Kummer and Tricomi functions, respectively.
Note that the alternate forms in (\ref{CIRfund_1_M}) and (\ref{CIRfund_1_U})
follow from the relations $M_{k,m}(z) = z^{m + 1/2}e^{-z/2}\M({1\over 2} + m - k, 1 + 2m,z)$ and
$W_{k,m}(z) = z^{m + 1/2}e^{-z/2}\U({1\over 2} + m - k, 1 + 2m,z)$
where $M$ and $W$ are the Whittaker functions (see \cite{AS72}). The last expression in (\ref{CIRfund_1_U}) follows
from the Kummer transformation identity $z^{b-1}\U(a,b,z) = \U(1+a - b,2-b,z)$.
For $s_1,s_2\in\C$, $W[\varphi^\pm_{s_1},\varphi^\pm_{s_2}](x)$ and $W[\varphi^-_{s_1},\varphi^+_{s_2}](x)$
are obtained by using differential recurrences ${d\over dz}\M(a,b,z) = (a/b)\M(a+1,b+1,z)$ and
${d\over dz}\U(a,b,z) = -a \U(a+1,b+1,z)$.
The above functions $\varphi^\pm_s(x)$ satisfy (\ref{wronskian}) where
$w_s = \kappa^{-\mu}\frac{\Gamma(1 + \vert\mu\vert)}{\Gamma\big(\frac{s}{\lambda_1} + \mu_-\big)}$,
i.e. $w_s = \kappa^{-\mu}\frac{\Gamma(1 + \mu)}{\Gamma(s/\lambda_1)}$ for $\mu\ge 0$ and
$w_s = \kappa^{-\mu}\frac{\Gamma(1 - \mu)}{\Gamma(\frac{s}{\lambda_1} - \mu)}$ for $\mu < 0$.

For case (b) $\kappa < 0$ (i.e. $\lambda_1 < 0$) we take
\footnote{We note typographical errors at the bottom of page 142 in \cite{BS02}.
The factors $e^{-\gamma x/2}$ and $e^{-\gamma y/2}$
should instead be $e^{-\vert\gamma\vert x/2}$ and $e^{-\vert\gamma\vert y/2}$
in both Green functions for case (B) $\gamma < 0$. Also, the parameter $\theta$ should be $\gamma$
in the last Wronskian $\omega_\alpha$ on page 142 and in the Green function on page 141.}
\begin{align} \label{CIRfund_2_M}
\varphi^+_s(x) &= (\vert\kappa\vert x)^{-(\mu+1)/2}e^{-\vert\kappa\vert x/2}\,M_{-\frac{s}{\vert\lambda_1\vert} - \frac{\mu+1}{2}\,,\,
\frac{\vert \mu \vert}{2}}(\vert\kappa\vert x)
\nonumber \\
&= (\vert\kappa\vert x)^{\mu_-}e^{-\vert\kappa\vert x}\M\bigg(\frac{s}{\vert\lambda_1\vert} + 1 + \mu_+,1 +
\vert\mu\vert,\vert\kappa\vert x\bigg)
\end{align}
and
\begin{align} \label{CIRfund_2_U}
\varphi^-_s(x) &= (\vert\kappa\vert x)^{-(\mu+1)/2}e^{-\vert\kappa\vert x/2}\,W_{-\frac{s}{\vert\lambda_1\vert} - \frac{\mu+1}{2}\,,\, \frac{\vert \mu \vert}{2}}(\vert\kappa\vert x)
\nonumber \\
&= (\vert\kappa\vert x)^{\mu_-}e^{-\vert\kappa\vert x}\U\bigg(\frac{s}{\vert\lambda_1\vert} + 1 + \mu_+,1 +
\vert\mu\vert,\vert\kappa\vert x\bigg) =
e^{-\vert\kappa\vert x}\,\U\bigg(\frac{s}{\vert\lambda_1\vert} + 1 + \mu,1 + \mu,\vert\kappa\vert x\bigg).
\end{align}
The Wronskian between these two functions is given by (\ref{wronskian}) where
$w_s = \vert\kappa\vert^{-\mu}\frac{\Gamma(1 + \vert\mu\vert)}{\Gamma\big(\frac{s}{\vert\lambda_1\vert} + 1 + \mu_+\big)}$,
i.e. $w_s = \vert\kappa\vert^{-\mu}\frac{\Gamma(1 + \mu)}{\Gamma\big(\frac{s}{\vert\lambda_1\vert} + 1 + \mu\big)}$ for $\mu\ge 0$ and
$w_s = \vert\kappa\vert^{-\mu}\frac{\Gamma(1 - \mu)}{\Gamma\big(\frac{s}{\vert\lambda_1\vert} + 1\big)}$ for $\mu < 0$.

The Green function in (\ref{greenfunc}) for the CIR on $\R_+$, for the two cases $\lambda_1 > 0$ and $\lambda_1 < 0$,
is readily constructed by taking appropriate linear combinations of the functions $\varphi^\pm_s$.
In all cases, $\phi_s(x) = \varphi^-_s(x)$ since $\infty$ is a natural boundary. In case $\lambda_1 > 0$ then
\begin{equation}\label{CIR_psi_and_phi_1}
\psi_s(x) = \M\bigg(\frac{s}{\lambda_1}, 1 + \mu,\kappa x\bigg),
\,\,\phi_s(x) = \U\bigg(\frac{s}{\lambda_1}, 1 + \mu,\kappa x\bigg)
\end{equation}
if $\mu \ge 0$ or if $\mu\in(-1,0)$ and 0 is reflecting;
\begin{equation}\label{CIR_psi_and_phi_2}
\psi_s(x) = (\kappa x)^{-\mu}\!\M\bigg(\frac{s}{\lambda_1} - \mu,1 - \mu,\kappa x\bigg),\,\,
\phi_s(x) = (\kappa x)^{-\mu}\U\bigg(\frac{s}{\lambda_1} - \mu,1 - \mu,\kappa x\bigg)
\end{equation}
if $\mu \le -1$ or if $\mu\in(-1,0)$ and 0 is killing. The Wronskian
factor is ${\mathcal W}_s = \kappa^{-\mu}\frac{\Gamma(1 + \mu)}{\Gamma(s/\lambda_1)}$
for the pair in (\ref{CIR_psi_and_phi_1}) and
${\mathcal W}_s = \kappa^{-\mu}\frac{\Gamma(1 - \mu)}{\Gamma({s\over \lambda_1} - \mu)}$
for the pair in (\ref{CIR_psi_and_phi_2}).
For the case $\lambda_1 < 0$:
\begin{equation}\label{CIR_psi_and_phi_3}
\psi_s(x) = e^{-\vert\kappa\vert x}\M\bigg(\frac{s}{\vert\lambda_1\vert} + 1 + \mu,1 +
\mu,\vert\kappa\vert x\bigg),\,\,\,
\phi_s(x) = \varphi^-_s(x)
\end{equation}
if $\mu \ge 0$ or if $\mu\in(-1,0)$ and 0 is reflecting;
\begin{equation}\label{CIR_psi_and_phi_4}
\psi_s(x) = (\vert\kappa\vert x)^{-\mu} e^{-\vert\kappa\vert x}
\M\bigg(\frac{s}{\vert\lambda_1\vert} + 1, 1 - \mu,\vert\kappa\vert x\bigg),\,\,
\phi_s(x) = \varphi^-_s(x)
\end{equation}
if $\mu \le -1$ or if $\mu\in(-1,0)$ and 0 is killing,
where $\varphi^-_s(x)$ is defined in (\ref{CIRfund_2_U}). The Wronskian
factor ${\mathcal W}_s = \vert\kappa\vert^{-\mu}\frac{\Gamma(1 + \mu)}
{\Gamma\big({s\over \vert\lambda_1\vert} + \mu + 1\big)}$
for (\ref{CIR_psi_and_phi_3}) and
${\mathcal W}_s = \vert\kappa\vert^{-\mu}\frac{\Gamma(1 - \mu)}
{\Gamma\big({s\over \vert\lambda_1\vert} + 1\big)}$ for (\ref{CIR_psi_and_phi_4}).

In all of the above cases, the transition PDF for $X_t\in\R_+$ is readily obtained by
Laplace inverting the relevant Green function with the use of the identity
\begin{align}\label{Kummer_Laplace}
&{\mathcal L}_s^{-1}[\Gamma(s)\M\big(s,1+\mu,x\big)
\U\big(s,1+\mu,y\big)][t]
\nonumber \\
&= {\Gamma(1+\mu)e^{(1+\mu)t/2}\over 2(xy)^{\mu/2}\sinh(t/2)}
\exp\bigg(-{(x + y)e^{-t/2}\over 2\sinh(t/2)} \bigg)
I_\mu\bigg({\sqrt{xy}\over \sinh(t/2)}\bigg),
\end{align}
for $0 < x\le y, \mu>-1, t>0$ and the property
${\mathcal L}_s^{-1}[F(s/a + b)][t] = a e^{-a bt}{\mathcal L}_s^{-1}[F(s)][a t]$ for $a,b>0$.
In case $\lambda_1 > 0$, Laplace inversion of the Green function in (\ref{greenfunc}) for
$\mu \ge 0$, or $\mu\in(-1,0)$ and 0 as reflecting, gives the known transition PDF:
\begin{align} \label{CIRden1}
 p_X(t;x_0,x) &= \m(x){\kappa^\mu \over \Gamma(1+\mu)}{\mathcal L}_s^{-1}[\Gamma\big({s\over \lambda_1}\big)
 \M\big({s\over \lambda_1},1+\mu,\kappa x_<\big)
\U\big({s\over \lambda_1},1+\mu,\kappa x_>\big)][t]
 \nonumber \\
 &= {\kappa e^{(1+\mu)\lambda_1 t/2}e^{-\kappa x}(x/x_0)^{\mu \over 2} \over 2\sinh(\lambda_1t/2)}
\exp\left(-{\kappa e^{-\lambda_1 t/2}(x + x_0) \over 2\sinh(\lambda_1t/2)} \right)
I_\mu\left({\kappa\sqrt{xx_0}\over \sinh(\lambda_1 t/2)}\right).
\end{align}
Similarly, the known transition PDF for $X_t\in\R_+$ in case $\lambda_1 > 0$ and $\mu \le -1$,
or $\mu\in(-1,0)$ and 0 as killing, is given by
\begin{align} \label{CIRden2}
 p_X(t;x_0,x) &= \m(x){\kappa^{-\mu}(xx_0)^{-\mu} \over \Gamma(1-\mu)}
 {\mathcal L}_s^{-1}[\Gamma({s\over \lambda_1} - \mu)\M\big({s\over \lambda_1}-\mu,1-\mu,\kappa x_<\big)
\U\big({s\over \lambda_1}-\mu,1-\mu,\kappa x_>\big)][t]
 \nonumber \\
 &= {\kappa e^{(1+\mu)\lambda_1 t/2}e^{-\kappa x}(x/x_0)^{\mu \over 2} \over 2\sinh(\lambda_1t/2)}
\exp\left(-{\kappa e^{-\lambda_1 t/2}(x + x_0) \over 2\sinh(\lambda_1t/2)} \right)
I_{\vert\mu\vert}\left({\kappa\sqrt{xx_0}\over \sinh(\lambda_1 t/2)}\right).
\end{align}

In case $\lambda_1 < 0$ ($\kappa < 0$), the transition PDFs follow similarly by Laplace inversion:
\begin{align} \label{CIRden3}
 p_X(t;x_0,x) &= \m(x){\vert\kappa\vert^\mu \over \Gamma(1+\mu)}e^{\kappa (x + x_0)}
 \nonumber \\
 &{\mathcal L}_s^{-1}[\Gamma({s\over \vert\lambda_1\vert}+ 1 + \mu)\M\big({s\over \vert\lambda_1\vert}+ 1 + \mu,1+\mu,\vert\kappa\vert x_<\big)
\U\big({s\over \vert\lambda_1\vert}+ 1 + \mu,1+\mu,\vert\kappa\vert x_>\big)][t]
 \nonumber \\
 &= {\kappa e^{(1+\mu)\lambda_1 t/2}e^{\kappa x_0}(x/x_0)^{\mu \over 2} \over 2\sinh(\lambda_1t/2)}
\exp\left(-{\kappa e^{\lambda_1 t/2}(x + x_0) \over 2\sinh(\lambda_1t/2)} \right)
I_\mu\left({\kappa\sqrt{xx_0}\over \sinh(\lambda_1 t/2)}\right)
\end{align}
for $\mu \ge 0$, or $\mu\in(-1,0)$ and 0 as reflecting, and
\begin{align} \label{CIRden4}
 p_X(t;x_0,x) &= \m(x){\vert\kappa\vert^{-\mu} \over \Gamma(1-\mu)}(xx_0)^{-\mu}e^{\kappa (x + x_0)}
 \nonumber \\
 &{\mathcal L}_s^{-1}[\Gamma({s\over \vert\lambda_1\vert}+ 1)\M\big({s\over \vert\lambda_1\vert}+ 1,1-\mu,\vert\kappa\vert x_<\big)
\U\big({s\over \vert\lambda_1\vert}+ 1,1-\mu,\vert\kappa\vert x_>\big)][t]
 \nonumber \\
 &= {\kappa e^{(1+\mu)\lambda_1 t/2}e^{\kappa x_0}(x/x_0)^{\mu \over 2} \over 2\sinh(\lambda_1t/2)}
\exp\left(-{\kappa e^{\lambda_1 t/2}(x + x_0) \over 2\sinh(\lambda_1t/2)} \right)
I_{\vert\mu\vert}\left({\kappa\sqrt{xx_0}\over \sinh(\lambda_1 t/2)}\right)
\end{align}
for $\mu \le -1$, or $\mu\in(-1,0)$ and 0 as killing.

The confluent hypergeometric (CIR) family of $X^{(\rho)}$-diffusions has generator in (\ref{Generator_rho}) with
respectively increasing and decreasing positive functions
$\varphi^+_\rho(x)$ and $\varphi^-_\rho(x)$ defined by (\ref{CIRfund_1_M}) and (\ref{CIRfund_1_U}) for $\lambda_1 > 0$ and
by (\ref{CIRfund_2_M}) and (\ref{CIRfund_2_U}) for $\lambda_1 < 0$, where $s=\rho > 0$.
The following lemma gives the boundary classification for these processes.

\begin{lemma} \label{lemma_class_CIR}
The confluent hypergeometric family of $X^{(\rho)}$-processes on $\R_+$ have the same boundary classification as the Bessel family of
$X^{(\rho)}$-processes, as stated in Lemma~\ref{lemma_class_Bessel} in terms of the parameters $q_1,q_2$ and $\mu$.
\end{lemma}
\begin{proof}
The results follow from Lemma~\ref{Lemma_Xrho} and the asymptotics for $\varphi_\rho^\pm$
in Appendix~\ref{subsect_a12}. For both cases (a) and (b): $n(0+;\rho,\rho)=0$,
$n(\infty;\rho,\rho)=\infty$, $(\varphi_\rho^+,\varphi_\rho^-)_{(0,x]} < \infty$,
$(\varphi_\rho^-,\varphi_\rho^-)_{(0,x]} < \infty$ if and only if $\vert\mu\vert < 1$, and
$(\varphi_\rho^+,\varphi_\rho^-)_{[x,\infty)} = \infty$.
\end{proof}
As in the Bessel family of $X^{(\rho)}$-processes, the point at infinity is a natural boundary.
The confluent hypergeometric subfamily of type (ii), with $q_1>0, q_2=0$, is conservative with the origin as an
entrance. For subfamilies (i) and (iii), where $q_2 > 0$, the origin is attainable (regular for $|\mu| < 1$ and
exit for $|\mu|\geq 1$). Green functions in the form of (\ref{greenfunc_rho}) are readily obtained and the
corresponding transition PDF for the confluent hypergeometric $X^{(\rho)}$-processes are then given by Laplace inversion.

Consider case (a) $\lambda_1 > 0$. For subfamily (ii), $q_2=0$ and we take
$\hat{u}_\rho(x) = \varphi^+_{\rho}(x)$. The Green function in (\ref{greenfunc_rho}) is uniquely
specified by $\psi^{(\rho)}_s(x) = \varphi^+_{s+\rho}(x) / \varphi^+_{\rho}(x)$ and
$\phi^{(\rho)}_s(x) = \varphi^-_{s+\rho}(x) / \varphi^+_{\rho}(x)$ using (\ref{CIRfund_1_M}) and (\ref{CIRfund_1_U}):
\begin{equation*}
\psi^{(\rho)}_s(x) = {\M\bigg(\frac{s+\rho}{\lambda_1} + \mu_-,1 + \vert\mu\vert,\kappa x\bigg)
\over \M\bigg(\frac{\rho}{\lambda_1} + \mu_-,1 + \vert\mu\vert,\kappa x\bigg)}
\mbox{ \ , \ }
\phi^{(\rho)}_s(x) = {\U\bigg(\frac{s+\rho}{\lambda_1} + \mu_-,1 + \vert\mu\vert,\kappa x\bigg)
\over \M\bigg(\frac{\rho}{\lambda_1} + \mu_-,1 + \vert\mu\vert,\kappa x\bigg)}
\end{equation*}
where ${\mathcal W}^{(\rho)}_s = w_{s+\rho} = \kappa^{-\mu}\frac{\Gamma(1 + \vert\mu\vert)}{\Gamma\big(\frac{s+\rho}{\lambda_1} + \mu_-\big)}$ and
$\m_\rho(x) = \m(x)(\kappa x)^{2\mu_-}\M^2\bigg(\frac{\rho}{\lambda_1} + \mu_-,1 + \vert\mu\vert,\kappa x\bigg)$.
The transition PDF for the confluent subfamily (ii) follows explicitly by (\ref{Laplace_inversion}), using (\ref{Kummer_Laplace}):
\begin{align} \label{CIRden_rho1}
 p^{(\rho)}_X(t;x_0,x) &= \m(x) {\hat{u}_\rho(x) \over \hat{u}_\rho(x_0)}e^{-\rho t}{\kappa^{\mu} \over \Gamma(1+\vert\mu\vert)}
  \nonumber \\
  &\times {\mathcal L}_s^{-1}[\Gamma({s\over \lambda_1} + \mu_-)\M\big({s\over \lambda_1} + \mu_-, 1 + \vert\mu\vert,\kappa x_<\big)
\U\big({s\over \lambda_1} + \mu_-, 1 + \vert\mu\vert,\kappa x_>\big)][t]
 \nonumber \\
 &= {x^{\mu_-}\M\big(\frac{\rho}{\lambda_1} + \mu_-,1 + \vert\mu\vert,\kappa x\big)
 \over x_0^{\mu_-}\M\big(\frac{\rho}{\lambda_1} + \mu_-,1 + \vert\mu\vert,\kappa x_0\big)}e^{-\rho t}p_X(t;x_0,x)
\end{align}
where $p_X(t;x_0,x)$ is given by the expression in (\ref{CIRden2}) where $x,x_0,t > 0, \mu\in\R$.
Note that (\ref{CIRden_rho1}) has the form in (\ref{prho}).

For subfamilies (i) and (iii), i.e. $q_1\ge 0, q_2 > 0$, the origin is exit for $|\mu|\geq 1$ and regular for
$|\mu| < 1$. In particular, by specifying the origin as killing for $|\mu| < 1$ then
the process is absorbed at the origin for all $\mu \in \R$.
In this case, the Green function in (\ref{greenfunc_rho}) is formed by taking
$\psi^{(\rho)}_s(x) = \varphi^+_{\rho + s}(x)/\hat{u}_{\rho}(x)$ and
$\phi^{(\rho)}_s(x) = \varphi^-_{\rho + s}(x)/\hat{u}_{\rho}(x)$ using (\ref{CIRfund_1_M}) and (\ref{CIRfund_1_U}):
\begin{align*}
\psi^{(\rho)}_s(x) &= {\M\big(\frac{s+\rho}{\lambda_1} + \mu_-,1 + \vert\mu\vert,\kappa x\big)
\over q_1 \M\big(\frac{\rho}{\lambda_1} + \mu_-,1 + \vert\mu\vert,\kappa x\big) +
q_2\, \U\big(\frac{\rho}{\lambda_1} + \mu_-,1 + \vert\mu\vert,\kappa x\big)}\,,
\nonumber \\
\phi^{(\rho)}_s(x) &= {\U\big(\frac{s+\rho}{\lambda_1} + \mu_-,1 + \vert\mu\vert,\kappa x\big)
\over q_1 \M\big(\frac{\rho}{\lambda_1} + \mu_-,1 + \vert\mu\vert,\kappa x\big) +
q_2\, \U\big(\frac{\rho}{\lambda_1} + \mu_-,1 + \vert\mu\vert,\kappa x\big)}\,,
\end{align*}
with Wronskian ${\mathcal W}^{(\rho)}_s = w_{s+\rho} = \kappa^{-\mu}\frac{\Gamma(1 + \vert\mu\vert)}{\Gamma\big(\frac{s+\rho}{\lambda_1} + \mu_-\big)}$ and speed measure $\m_\rho(x) = \m(x)\hat{u}^2_\rho(x)$.
The transition PDF for this case is obtained in the same fashion as (\ref{CIRden_rho1}):
\begin{align} \label{CIRden_rho2}
 p^{(\rho)}_X(t;x_0,x) = \left({x \over x_0}\right)^{\mu_-}\!\!{q_1 \M\big(\frac{\rho}{\lambda_1} + \mu_-,1 + \vert\mu\vert,\kappa x\big) +
q_2\, \U\big(\frac{\rho}{\lambda_1} + \mu_-,1 + \vert\mu\vert,\kappa x\big)
 \over q_1 \M\big(\frac{\rho}{\lambda_1} + \mu_-,1 + \vert\mu\vert,\kappa x_0\big) +
q_2\, \U\big(\frac{\rho}{\lambda_1} + \mu_-,1 + \vert\mu\vert,\kappa x_0\big)}
e^{-\rho t}p_X(t;x_0,x)
\end{align}
where $p_X(t;x_0,x)$ is given by the expression in (\ref{CIRden2}) for $x,x_0,t > 0, \mu\in\R$.

For subfamilies (i) and (iii), i.e. $q_2 > 0$, the origin is attainable and the PDF $f^{(\rho)}(t;x_0,0)$ of
the first hitting time $\tau^{(\rho)}_0$ at the origin for the $X^{(\rho)}$-diffusion started at $X^{(\rho)}_0=x_0>0$
is obtained in closed form by substituting the transition density in (\ref{CIRden_rho2}) into (\ref{FHT_PDF}). The limit is computed explicitly by
using the small argument asymptotics of the confluent hypergeometric functions and the Bessel-$I$ function.
This gives the so-called Tricomi PDF:
\begin{align}\label{FHT_PDF_CIR_1}
 f^{(\rho)}(t;x_0,0) = P_0 f^{(\rho)}_{\,\U}(t;x_0,0)
 \end{align}
 where
 \begin{align}\label{FHT_PDF_CIR_2}
 f^{(\rho)}_{\,\U}(t;x_0,0) = |\tau'(t)|\frac{e^{-z\tau(t)} [\tau(t)]^{a-1} [1+\tau(t)]^{b-a-1}}{\mathcal{U}(a,b,z)\Gamma(a)}
 \end{align}
is the corresponding PDF of the first hitting time at the origin for the (confluent-$\U$) subfamily (i) when $q_1 = 0$;
$\tau(t)=(e^{\lambda_1 t}-1)^{-1}$, $|\tau'(t)| = \lambda_1e^{\lambda_1 t}[\tau(t)]^2$,
$t>0$, $z=\kappa x_0$, $a=\rho/\lambda_1 + \mu_-$, $b=1+\vert\mu\vert$. The quantity
$P_0 = \P\big( \tau^{(\rho)}_0 < \infty \big) = \frac{q_2\mathcal{U}(a,b,z)}{q_1\mathcal{M}(a,b,z)+q_2\mathcal{U}(a,b,z)}$
is the probability for eventually hitting the origin.
For subfamily (i), i.e. $q_1 = 0$, the point at infinity is non-attracting and hence $P_0 = 1$.
Indeed, from the integral representation of the Tricomi function, we observe that the PDF integrates to unity, i.e.
$\P\left( \tau_0 < \infty\right) = \int_0^\infty f^{(\rho)}_{\,\U}(t;x_0,0) dt = 1$.
However, if $q_1 > 0$ then $P_0 < 1$ since $\infty$ is attracting. The above first hitting time PDFs
are valid for all $\mu \in \R$ and extend previously derived results for the case $\mu > 0$\cite{CM06}.

For case (b) $\lambda_1 < 0$ the analysis follows in similar fashion as above. For
subfamily (ii), $q_2=0$, $\hat{u}_\rho(x) = \varphi^+_{\rho}(x)$.
The Green function in (\ref{greenfunc_rho}) is uniquely
specified by $\psi^{(\rho)}_s(x) = \varphi^+_{s+\rho}(x) / \varphi^+_{\rho}(x)$ and
$\phi^{(\rho)}_s(x) = \varphi^-_{s+\rho}(x) / \varphi^+_{\rho}(x)$ using (\ref{CIRfund_2_M}) and (\ref{CIRfund_2_U}):
\begin{equation*}
\psi^{(\rho)}_s(x) = {\M\bigg(\frac{s+\rho}{\lambda_1} + 1 + \mu_+, 1 + \vert\mu\vert,\vert\kappa\vert x\bigg)
\over \M\bigg(\frac{\rho}{\lambda_1} + 1 + \mu_+,1 + \vert\mu\vert,\vert\kappa\vert x\bigg)}
\mbox{ \ , \ }
\phi^{(\rho)}_s(x) = {\U\bigg(\frac{s+\rho}{\lambda_1} + 1 + \mu_+, 1 + \vert\mu\vert,\vert\kappa\vert x\bigg)
\over \M\bigg(\frac{\rho}{\lambda_1} + 1 + \mu_+,1 + \vert\mu\vert,\vert\kappa\vert x\bigg)},
\end{equation*}
${\mathcal W}^{(\rho)}_s = w_{s+\rho} = \vert\kappa\vert^{-\mu}\frac{\Gamma(1 + \vert\mu\vert)}{\Gamma\big(\frac{s+\rho}{\lambda_1} + 1 + \mu_+\big)}$,
$\m_\rho(x) = \m(x)[\varphi^+_{\rho}(x)]^2$.
The transition PDF for this confluent subfamily (ii) follows by (\ref{Laplace_inversion}). In this case,
\begin{align} \label{CIRden_rho3}
 p^{(\rho)}_X(t;x_0,x)
 &= \m(x){\hat{u}_\rho(x) \over \hat{u}_\rho(x_0)}e^{-\rho t}{\mathcal L}_s^{-1}[w_s^{-1} \varphi^+_{s}(x_<)\varphi^-_{s}(x_>)][t]
\nonumber \\
 &= {x^{\mu_-}e^{\kappa x}\M\big(\frac{\rho}{\lambda_1} + 1 + \mu_+, 1 + \vert\mu\vert,\vert\kappa\vert x\big)
 \over x_0^{\mu_-}e^{\kappa x_0}\M\big(\frac{\rho}{\lambda_1} + 1 + \mu_+, 1 + \vert\mu\vert,\vert\kappa\vert x_0\big)}e^{-\rho t}p_X(t;x_0,x)
\end{align}
where $p_X(t;x_0,x)$ is given by (\ref{CIRden4}) for $x,x_0,t > 0, \mu\in\R$.
Note that (\ref{CIRden_rho3}) has the form in (\ref{prho}).

For subfamilies (i) and (iii), the origin is exit for $|\mu|\geq 1$ and regular for
$|\mu| < 1$. By specifying the origin as killing for $|\mu| < 1$ then
the process is absorbed at the origin for all $\mu \in \R$.
In this case, the Green function in (\ref{greenfunc_rho}) is formed by taking
$\psi^{(\rho)}_s(x) = \varphi^+_{\rho + s}(x)/\hat{u}_{\rho}(x)$ and
$\phi^{(\rho)}_s(x) = \varphi^-_{\rho + s}(x)/\hat{u}_{\rho}(x)$ using (\ref{CIRfund_2_M}) and (\ref{CIRfund_2_U}):
\begin{align*}
\psi^{(\rho)}_s(x) &= {\M\big(\frac{s+\rho}{\lambda_1} + 1 + \mu_+, 1 + \vert\mu\vert,\vert\kappa\vert x\big)
\over q_1 \M\big(\frac{\rho}{\lambda_1} + 1 + \mu_+,1 + \vert\mu\vert,\vert\kappa\vert x\big) +
q_2\, \U\big(\frac{\rho}{\lambda_1} + 1 + \mu_+,1 + \vert\mu\vert,\vert\kappa\vert x\big)}\,,
\nonumber \\
\phi^{(\rho)}_s(x) &= {\U\big(\frac{s+\rho}{\lambda_1} + 1 + \mu_+, 1 + \vert\mu\vert,\vert\kappa\vert x\big)
\over q_1 \M\big(\frac{\rho}{\lambda_1} + 1 + \mu_+,1 + \vert\mu\vert,\vert\kappa\vert x\big) +
q_2\, \U\big(\frac{\rho}{\lambda_1} + 1 + \mu_+,1 + \vert\mu\vert,\vert\kappa\vert x\big)}\,,
\end{align*}
with Wronskian ${\mathcal W}^{(\rho)}_s = w_{s+\rho} =
\vert\kappa\vert^{-\mu}\frac{\Gamma(1 + \vert\mu\vert)}{\Gamma\big(\frac{s+\rho}{\lambda_1} + 1 + \mu_+\big)}$
and speed measure $\m_\rho(x) = \m(x)\hat{u}^2_\rho(x)$.
The transition PDF for this case follows as in (\ref{CIRden_rho3}):
\begin{align} \label{CIRden_rho4}
 p^{(\rho)}_X(t;x_0,x) =& {x^{\mu_-}e^{\kappa x}[q_1 \M\big(\frac{\rho}{\lambda_1} + 1 + \mu_+,1 + \vert\mu\vert,\vert\kappa\vert x\big) +
q_2\, \U\big(\frac{\rho}{\lambda_1} + 1 + \mu_+,1 + \vert\mu\vert,\vert\kappa\vert x\big)]
 \over x_0^{\mu_-}e^{\kappa x_0}[q_1 \M\big(\frac{\rho}{\lambda_1} + 1 + \mu_+,1 + \vert\mu\vert,\vert\kappa\vert x_0\big) +
q_2\, \U\big(\frac{\rho}{\lambda_1} + 1 + \mu_+,1 + \vert\mu\vert,\vert\kappa\vert x_0\big)]}
\nonumber \\
&\,\,\times e^{-\rho t}p_X(t;x_0,x)
\end{align}
where $p_X(t;x_0,x)$ is given by (\ref{CIRden4}), for $x,x_0,t > 0, \mu\in\R$, and hence (\ref{CIRden_rho4}) has the form in (\ref{prho}).

For $\lambda_1 < 0$, the origin is attainable for subfamilies (i) and (iii) with $q_2 > 0$. The PDF $f^{(\rho)}(t;x_0,0)$ of
the first hitting time at the origin follows by a similar derivation to the above case for $\lambda_1 > 0$.
Substituting (\ref{CIRden_rho4}) into (\ref{FHT_PDF}) and computing the limit explicitly by
using the small argument asymptotics of the confluent hypergeometric functions and the Bessel-$I$ function recovers
(\ref{FHT_PDF_CIR_1}) and (\ref{FHT_PDF_CIR_2}), where (\ref{FHT_PDF_CIR_2}) is the PDF of the first hitting time at the origin
for the (confluent-$\U$) subfamily (i) when $q_1 = 0$. However, now
$\tau(t)=(e^{\vert\lambda_1\vert t}-1)^{-1}$, $|\tau'(t)| = \vert\lambda_1\vert e^{\vert\lambda_1\vert t}[\tau(t)]^2$,
$t>0$, $z=\vert\kappa\vert x_0$, $a=\rho/\vert\lambda_1\vert + 1 + \mu_+$, $b =1+\vert\mu\vert$.
The quantity $P_0 = \P\big( \tau^{(\rho)}_0 < \infty\big)
= \frac{q_2\mathcal{U}(a,b,z)}{q_1\mathcal{M}(a,b,z)+q_2\mathcal{U}(a,b,z)}$
is again the probability for eventually hitting the origin.
The above first hitting time PDFs for both cases $\lambda_1 > 0$ (or $< 0$) can hence be combined into one expression
for $f^{(\rho)}(t;x_0,0)$ and $f_\U^{(\rho)}(t;x_0,0)$ where $\tau(t) = (e^{\vert\lambda_1\vert t}-1)^{-1}$ and
$a = \rho/\vert\lambda_1\vert + \tilde{\mu}$ with $\tilde{\mu} = \mu_-$ for $\lambda_1 > 0$ and $\tilde{\mu} = 1 + \mu_+$ for $\lambda_1 < 0$.

\subsection{The Ornstein-Uhlenbeck Process}\label{subsect2.3}
Consider the Ornstein-Uhlenbeck (OU) process with SDE
$dX_t=(\lambda_0 -\lambda_1X_t)dt+\nu_0dW_t,$ where
$\lambda_0,\lambda_1,\nu_0 > 0$. Both boundaries, $l=-\infty$ and $r=\infty$,
of the state space $\I=(-\infty,\infty)$ are non-attracting natural.
Without loss in generality, we set
$\lambda_0=0$. Otherwise we can consider the shifted
process $Y_t=X_t-\frac{\lambda_0}{\lambda_1}$ and the formulas
follow by simply shifting $x\to x -\frac{\lambda_0}{\lambda_1}$,
$x_0\to x_0 -\frac{\lambda_0}{\lambda_1}$. We define the positive constant
$\kappa\equiv 2\lambda_1 / \nu_0^2$ so that the generator for the diffusion takes the form
$(\G\, f)(x)\triangleq {1\over 2}\nu_0^2[f''(x) - \kappa x f'(x)]$.
The speed and scale densities are $\s(x)=e^{\kappa x^2/2}$ and
$\m(x)=(2/\nu_0^2)e^{-\kappa x^2/2}.$ We note that this corresponds to an OU process indexed by two positive
parameters. For brevity, we omit the case where $\lambda_1 < 0$ (i.e. $\kappa < 0$) as the analytical treatment follows very similarly
(e.g. see page 137 of \cite{BS02}). The case where $\lambda_1 = 0$ simply corresponds to Brownian motion.

A pair of fundamental solutions to $(\G\, \varphi)(x) = s \varphi(x)$, such that
for real values of $s=\rho>0$ are respectively increasing and decreasing positive
functions for $x\in\R_+$, are
\begin{equation} \label{OUfund}
\varphi^+_s(x) = e^{\kappa x^2\!/4}\, D_{-s/\lambda_1}(-\sqrt{\kappa}\,x)
\quad\mbox{and}\quad
\varphi^-_s(x)= e^{\kappa x^2\!/4}\,D_{-s/\lambda_1}(\sqrt{\kappa}\,x)
\end{equation}
where $D_{-\upsilon}(x)$ is Whittaker's parabolic cylinder function (see \cite{AS72} for definitions and properties).
Note the symmetry $\varphi^+_s(x) = \varphi^-_s(-x)$
The Wronskian constant in equation (\ref{wronskian}) is
$w_s=\frac{\sqrt{2\kappa\pi}}{\Gamma(s/\lambda_1)}.$
For $s_1,s_2\in\C$, $W[\varphi^\pm_{s_1},\varphi^\pm_{s_2}](x)$ and $W[\varphi^-_{s_1},\varphi^+_{s_2}](x)$
are obtained using differential recurrences ${d\over dz}D_{-\upsilon}(z) = -(z/2)D_{-\upsilon}(z) - \upsilon D_{-\upsilon - 1}(z)$.

The boundaries $\pm\infty$ are natural and hence the Green function in (\ref{greenfunc})
for $x,x_0\in \R$ is uniquely given by taking $\psi_s(x) = \varphi^+_s(x)$ and $\phi_s(x) = \varphi^-_s(x)$ where
${\cal W}_s = w_s$. The well-known (Gaussian) transition PDF on $\R$ follows by
Laplace inverting the Green function with the use of the identity
${\mathcal L}_s^{-1}[\Gamma(s)D_{-s}(x)D_{-s}(y)][t] =
\frac{e^{t/2}}{\sqrt{2\sinh t}}\exp\bigg(-{(x^2 + y^2)\cosh t + 2xy \over 4 \sinh t} \bigg)$, for $x\le y$, $t>0$, giving
\begin{equation} \label{OUden}
 p_X(t;x_0,x) = \sqrt{\frac{\kappa}{2\pi(1-e^{-2\lambda_1 t})}}
 \exp\left(-\frac{\kappa(x-x_0 e^{-\lambda_1 t})^2}{2(1-e^{-2\lambda_1
 t})}\right).
\end{equation}

The OU family of $X^{(\rho)}$-diffusions has generator in (\ref{Generator_rho}) with
$\varphi^\pm_\rho(x)$ given by (\ref{OUfund}) for real $s=\rho > 0$ and the
following lemma gives the boundary classification for these processes.

\begin{lemma} \label{lemma_class_OU}
The OU family of regular diffusions $X_t^{(\rho)}$, have the following boundary classification. The endpoint $l=-\infty$ is non-attracting natural if $q_2=0$ and is attracting natural if $q_2>0$. The endpoint $r=\infty$ is non-attracting natural if $q_1=0$ and is attracting natural if $q_1>0$.
\end{lemma}
\begin{proof}
The results follow from Lemma~\ref{Lemma_Xrho} and the asymptotic relations
in (i) of Appendix~\ref{subsect_a13} that give
$n(-\infty;\rho,\rho)=0$, $n(\infty;\rho,\rho)=\infty$, $(\varphi_\rho^+,\varphi_\rho^-)_{(-\infty,x]} = \infty$,
$(\varphi_\rho^+,\varphi_\rho^-)_{[x,\infty)} = \infty$.
\end{proof}

The OU subfamilies (i)--(iii) of $X^{(\rho)}$-processes on $\R$ are all conservative with both endpoints as natural.
The unique Green function in the form of (\ref{greenfunc_rho}) is given by
$\psi^{(\rho)}_s = \varphi_{s+\rho}^+/\hat{u}_\rho$, $\phi^{(\rho)}_s = \varphi_{s+\rho}^-/\hat{u}_\rho$,
${\cal W}^{(\rho)}_s = w_{s+\rho}$. Hence, Laplace inverting gives the transition PDF in the form of (\ref{prho})
with $p_X(t;x_0,x)$ given by (\ref{OUden}):
\begin{align} \label{OUden_rho}
 p^{(\rho)}_X(t;x_0,x) = {e^{{\kappa \over 4}x^2}\big[q_1 D_{-\rho/\lambda_1}(-\sqrt{\kappa} x) + q_2 D_{-\rho/\lambda_1}(\sqrt{\kappa} x)\big]
 \over e^{{\kappa \over 4}x_0^2}\big[q_1 D_{-\rho/\lambda_1}(-\sqrt{\kappa} x_0) + q_2 D_{-\rho/\lambda_1}(\sqrt{\kappa} x_0)\big] }
e^{-\rho t}p_X(t;x_0,x),
\end{align}
$x,x_0\in\R$, $t > 0$.

\section{$F$-Diffusions with Linear Drift}\label{sect3}

\subsection{Construction of the Mapping}\label{subsect3.1}

We now consider $F$-diffusions $\{F_t \triangleq \F
(X^{(\rho)}_t), t\ge 0\}$ having infinitesimal generator (\ref{GeneratorF}).
For driftless diffusions, see \cite{ACCL,CM06,CM07}. The diffusion coefficient function $\sigma(F)$, as given by (\ref{sigmaF})
below, is generally nonlinear and where $a$ and $b$ are arbitrary real constants {\it such that $b=0$ implies $a=0$}.

The transition PDF $p_F$ for an $F$-diffusion $(F_t)_{t\ge 0}$ is related to
the transition PDF for the underlying $X$ (or $X^{(\rho)}$) diffusion as follows:
\begin{equation}\label{PrKernelF}
  p_F(t;F_0,F) =
\displaystyle\frac{\nu(\X (F))}{\sigma(F)}p_X^{(\rho)}(t;\X (F_0),\X (F)) =
  \frac{\nu(\X (F))}{\sigma(F)}
  \frac{\hat u_\rho\left(\X (F) \right)}{\hat u_\rho\left(\X (F_0) \right)}
  e^{-\rho t} p_X(t;\X (F_0),\X (F))
  \,.
\end{equation}
where $F,F_0\in \I_F$, $t>0$. Here $\X \triangleq\F^{-1}$ is the inverse
map so that $\left|\X'(F)\right|=\displaystyle\frac{\nu(\X (F))}{\sigma(F)}.$

This methodology was originally developed for \textit{driftless}
$F$-diffusions where $a=b=0$. For such cases, the volatility function has the form $\sigma(F) = {\sigma_0\nu(x)\s(x) \over
  \hat{u}^2_\rho(x)},$ $x=\X (F),$ $\sigma_0>0.$
The map~$\F (x)$ (that solves equation (\ref{FODE}) for the special case $a=b=0$) admits the general quotient form:
 \begin{equation} \label{FMAPdl}
    \F (x) =
   \ds\frac{c_1\varphi^+_\rho(x)+c_2\varphi^-_\rho(x)}{q_1\varphi^+_\rho(x)+q_2\varphi^-_\rho(x)}\,,
 \end{equation}
where $c_1,c_2,q_1,q_2\in\R$ are parameters such that
$q_1c_2-q_2c_1\neq 0\,.$

\begin{lemma} \label{lemma1}
 Let $b\neq 0$ and $\rho, \rho+b>0$ hold. Then the solution to equation~(\ref{FODE}) takes the general form
 \begin{equation} \label{FMAP} \F (x) = -\frac{a}{b} + \frac{c_1 \varphi^+_{\rho+b}(x)+c_2
 \varphi^-_{\rho+b}(x)}{q_1 \varphi^+_{\rho}(x)+q_2
 \varphi^-_{\rho}(x)}\equiv -\frac{a}{b}+\frac{\hat{v}_{\rho+b}(x)}{\hat{u}_\rho(x)}
 \end{equation}
where $c_1$ and $c_2$ are arbitrary real constants.
\end{lemma}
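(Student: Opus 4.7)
The plan is to reduce the ODE $(\G^{(\rho)}\F)(x) = a + b\F(x)$ to a standard Sturm–Liouville problem for $\G$ by conjugating with $\hat{u}_\rho$, then write down a particular solution and invoke the known two-dimensional solution space spanned by $\varphi^\pm_{\rho+b}$.

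First I would establish the intertwining identity
\begin{equation*}
\hat{u}_\rho(x)\,(\G^{(\rho)} f)(x) \;=\; \G\bigl(f\,\hat{u}_\rho\bigr)(x) - \rho\,f(x)\,\hat{u}_\rho(x),
\end{equation*}
valid for any $C^2$ function $f$. This follows by a direct computation: expanding $\G(f\hat{u}_\rho)$ using the product rule gives $\hat{u}_\rho\,\G f + f\,\G\hat{u}_\rho + \nu^2 f'\hat{u}_\rho'$, and since $\hat{u}_\rho$ is a linear combination of $\varphi^\pm_\rho$, one has $\G\hat{u}_\rho = \rho\hat{u}_\rho$, while $\hat{u}_\rho\,\G^{(\rho)}f = \hat{u}_\rho\,\G f + \nu^2 f'\hat{u}_\rho'$ by the definition of $\G^{(\rho)}$ in (\ref{Generator_rho}).

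Second, substitute $g \triangleq \F\,\hat{u}_\rho$. The ODE (\ref{FODE}) multiplied by $\hat{u}_\rho$ becomes, after the identity above,
\begin{equation*}
\G g(x) - (\rho + b)\,g(x) \;=\; a\,\hat{u}_\rho(x).
\end{equation*}
This is a second-order linear inhomogeneous ODE whose homogeneous part $\G g = (\rho+b) g$ is precisely (\ref{eq:phi}) at spectral parameter $\rho+b > 0$. Since $\rho+b > 0$ lies in the admissible regime, the homogeneous equation has a two-dimensional solution space spanned by the fundamental solutions $\varphi^+_{\rho+b}$ and $\varphi^-_{\rho+b}$.

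Third, I would construct a particular solution. Using $\G\hat{u}_\rho = \rho\hat{u}_\rho$, one finds
\begin{equation*}
\bigl(\G - (\rho+b)\bigr)\hat{u}_\rho \;=\; (\rho - \rho - b)\hat{u}_\rho \;=\; -b\,\hat{u}_\rho,
\end{equation*}
so (using $b \neq 0$) the function $g_p(x) = -\tfrac{a}{b}\,\hat{u}_\rho(x)$ satisfies $(\G - (\rho+b))g_p = a\,\hat{u}_\rho$. Adding the general homogeneous solution yields
\begin{equation*}
g(x) \;=\; -\frac{a}{b}\,\hat{u}_\rho(x) + c_1\varphi^+_{\rho+b}(x) + c_2\varphi^-_{\rho+b}(x),
\end{equation*}
and dividing by $\hat{u}_\rho(x)$ (which is strictly positive on $\I$, so this is legitimate throughout the state space) recovers (\ref{FMAP}). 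The main obstacle is the clean verification of the intertwining identity in the first step; everything that follows is routine once the ODE for $g$ has been decoupled from the $\hat{u}_\rho$-weight.
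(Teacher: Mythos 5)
Your proof is correct and rests on essentially the same idea as the paper's: both hinge on $\G\hat{u}_\rho=\rho\hat{u}_\rho$, $\G\hat{v}_{\rho+b}=(\rho+b)\hat{v}_{\rho+b}$ and the conjugation identity relating $\G^{(\rho)}$ to $\G$ (your intertwining identity applied to $f=\hat{v}_{\rho+b}/\hat{u}_\rho$ is exactly the paper's computation of $\G^{(\rho)}(\hat{v}/\hat{u})$). The only organizational difference is that you solve constructively for $g=\F\,\hat{u}_\rho$ and absorb the inhomogeneity into $g_p=-\tfrac{a}{b}\hat{u}_\rho$, so generality is immediate from the two-dimensional solution space of $\G g=(\rho+b)g$, whereas the paper verifies the quotient ansatz directly, takes the constant $-a/b$ as particular solution, and confirms generality via the nonvanishing Wronskian of $\varphi^{\pm}_{\rho+b}/\hat{u}_\rho$.
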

\begin{proof}
The numerator $\hat{v}=\hat{v}_{\rho+b}(x)$, defined in (\ref{FMAP}), is a linear combination of $\varphi^\pm_{\rho+b}$ and hence
solves $\G\hat{v}  = (\rho + b) \hat{v}$. The denominator $\hat{u}=\hat{u}_\rho(x)$ solves $\G\hat{u} = \rho\hat{u}$.
Differentiating and using the identity $\G^{(\rho)}f(x) = \G f(x) + \nu^2(x)(\hat{u}'_\rho(x)/\hat{u}_\rho(x))f^{\prime}(x)$ readily gives
\[\G^{(\rho)}\frac{\hat{v}}{\hat{u}} = \frac{1}{\hat{u}}\left(\G\hat{v} - \frac{\hat{v}}{\hat{u}}\G\hat{u} \right)
= \frac{1}{\hat{u}}\left((\rho + b) \hat{v} -
\frac{\hat{v}}{\hat{u}}(\rho\hat{u}) \right) =
b\frac{\hat{v}}{\hat{u}}\,.\]
Hence, $\hat{v}/\hat{u}$ is a general solution to the corresponding homogeneous ODE (eq.~(\ref{FODE}) for $a=0$)
since the Wronskian
$W\left[\frac{\varphi^-_{\rho+b}}{\hat{u}_\rho}, \frac{\varphi^+_{\rho+b}}{\hat{u}_\rho}\right](x) =
\frac{W[\varphi^-_{\rho+b}, \varphi^+_{\rho+b}](x)}{\hat{u}^2_\rho(x)}\ne 0$ from (\ref{wronskian}). The constant function
$\F_p(x) =-a/b$ is a particular solution of (\ref{FODE}).
\end{proof}

The derivative of the mapping in (\ref{FMAP}) is simply
 \begin{equation} \label{Fder} \F '(x) = \frac{\hat{v}'_{\rho+b}(x)\hat{u}_\rho(x) -
   \hat{u}'_\rho(x)\hat{v}_{\rho+b}(x)}{\hat{u}^2_\rho(x)}
    =\frac{W(x)}{\hat{u}^2_\rho(x)}\,,
 \end{equation}
where we define the Wronskian
 \begin{equation} \label{Wfunc}
   W(x) \equiv W(x;\rho,\rho+b)\triangleq W[\hat{u}_\rho,\hat{v}_{\rho+b}](x)\,.
 \end{equation}
Assuming $\F$ is strictly monotonic, and given an $X$-diffusion, the $F$-diffusion coefficient function $\sigma(F)$
is then given by substituting (\ref{Fder}) into $\sigma(F) = \nu(x)|\F'(x)|$, giving
\begin{equation} \label{sigmaF}
 \sigma(F)=\frac{\nu(x)|W(x)|}{\hat{u}^2_\rho(x)}\,,\quad x=\X (F)\,,\quad F\in\I_{\sf F}\,.
\end{equation}
We note that this expression holds for all parameter choices $a,b$ except when $a\ne 0$ and $b=0$.
For the latter special case (i.e. constant nonzero drift function) equation (\ref{FODE})
reads $\G^{(\rho)}\,\F (x) = a$ and hence simply reduces to a linear first order ODE in $\F'$.
Solving leads to various monotonic maps which in turn give rise to nonzero
constant drift $F$-diffusions with various nonlinear specifications for the diffusion coefficients. In this paper,
we shall not discuss the details of such special families as we focus on linear drift functions with $b\ne 0$.

\subsection{Monotonic Maps}\label{subsect3.2}

The map $\F:\I\to\I_{\F}$ in (\ref{FMAP}) does not generally satisfy $\F '(x)
\ne 0$. To guarantee that $(F_t)_{t\ge 0}$ is a regular diffusion
on $\I_{\sf F}=(F^l,F^r)$ the map
$\F$ has to be strictly monotonic. Then $\F '(x) \ne 0$ and hence the diffusion
coefficient function $\sigma(F)$ is strictly positive on $\I_F$.

From (\ref{Fder}) we observe that $\sgn (\F '(x)) = \sgn (W(x))$.
Using the representations of $\hat{u}_\rho$ and~$\hat{v}_{\rho + b}$ in terms of
$\varphi^\pm_\rho$, $\varphi^\pm_{\rho + b}$ gives
\begin{equation} \label{Wfuncpm}
 \begin{array}{rcl}
  W(x) &=&
  q_1c_1\,W[\varphi^+_\rho,\varphi^+_{\rho+b}](x)+q_1c_2\,W[\varphi^+_\rho,\varphi^-_{\rho+b}](x)\\
    && \mbox{} +q_2c_1\,W[\varphi^-_\rho,\varphi^+_{\rho+b}](x)+q_2c_2\,W[\varphi^-_\rho,\varphi^-_{\rho+b}](x)\,.
  \end{array}
\end{equation}
There are two important cases where $\F $ is strictly monotonic. Recall that the fundamental solutions
$\varphi^+_\rho(x)$ and $\varphi^-_\rho(x)$ are correspondingly
strictly increasing and decreasing functions of $x$. Therefore, the ratios
 $ \frac{\varphi^+_{\rho+b}(x)}{\varphi^-_\rho(x)} \mbox{ and }
    \frac{\varphi^-_{\rho+b}(x)}{\varphi^+_\rho(x)} $
are strictly increasing and decreasing functions, respectively. In particular, we have the strict inequalities
$W[\varphi^+_\rho,\varphi^-_{\rho+b}](x)<0$ and $W[\varphi^-_\rho,\varphi^+_{\rho+b}](x)>0$.
Thus, the two choices of parameters $c_2=q_1=0,$ $c_1/q_2=\pm c$ or $c_1=q_2=0,$
$c_2/q_1=\pm c$ in equation~(\ref{FMAP}) lead to {\it dual subfamilies} of strictly
monotonic maps defined by $\F=\F ^{(1)}_\pm$:
 \begin{equation} \label{FMAPpm1} \F ^{(1)}_\pm(x) \triangleq  -\frac{a}{b} + \epsilon\,c\frac{\varphi^\pm_{\rho+b}(x)}{\varphi^\mp_{\rho}(x)}\,,
 \end{equation}
where $\epsilon=\pm 1$ and $c>0$ is constant. Other parameter choices that lead to other
families of monotonic maps are discussed in Section~\ref{subsect4.2}.
The following propositions are useful in verifying whether or not $W(x)$, and hence $\F'(x)$, changes sign in $\I$.
\begin{proposition} \label{prop1}
 $W(x)$ in (\ref{Wfunc}) satisfies
 $\frac{1}{2}\nu^2(x) W'(x) + \lambda(x) W(x)  = b \hat{u}_\rho\hat{v}_{\rho + b}$, and for any $x,x_0\in \I$ the solution admits the
 following representation:
 \begin{equation}\label{wronskian_solution} \frac{W(x)}{\s(x)} = \frac{W(x_0)}{\s(x_0)}
    + b\int\limits_{x_0}^x \m(y) \hat{u}_\rho(y)
    \hat{v}_{\rho+b}(y)\,dy\,.
 \end{equation}
\end{proposition}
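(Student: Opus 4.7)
The plan is to work directly with the definition $W(x) = \hat u_\rho(x)\hat v_{\rho+b}'(x) - \hat u_\rho'(x)\hat v_{\rho+b}(x)$ and exploit the ODEs that $\hat u_\rho$ and $\hat v_{\rho+b}$ satisfy by construction. First I would differentiate $W$: the cross terms cancel and one finds $W'(x) = \hat u_\rho(x)\hat v_{\rho+b}''(x) - \hat u_\rho''(x)\hat v_{\rho+b}(x)$. Then I would multiply by $\tfrac{1}{2}\nu^2(x)$ and invoke the relations $\tfrac{1}{2}\nu^2\hat v_{\rho+b}'' = (\rho+b)\hat v_{\rho+b} - \lambda \hat v_{\rho+b}'$ and $\tfrac{1}{2}\nu^2\hat u_\rho'' = \rho\hat u_\rho - \lambda\hat u_\rho'$, which follow from $(\G\hat v_{\rho+b})(x) = (\rho+b)\hat v_{\rho+b}(x)$ and $(\G\hat u_\rho)(x) = \rho\hat u_\rho(x)$ respectively (these are exactly the identities used in the proof of Lemma~\ref{lemma1}). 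Substituting, the terms $\rho\hat u_\rho\hat v_{\rho+b}$ cancel, leaving $\tfrac{1}{2}\nu^2(x)W'(x) = b\hat u_\rho(x)\hat v_{\rho+b}(x) - \lambda(x)W(x)$, which rearranges to the claimed ODE.

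Next I would convert the ODE into its integrated form using $\s(x)$ as the integrating factor. From the definition of the scale density in (\ref{smx}), $\s'(x)/\s(x) = -2\lambda(x)/\nu^2(x)$, so
\[
\left(\frac{W(x)}{\s(x)}\right)' = \frac{1}{\s(x)}\left(W'(x) + \frac{2\lambda(x)}{\nu^2(x)}W(x)\right).
\]
Dividing the ODE by $\tfrac{1}{2}\nu^2(x)\s(x)$ and recognizing $\m(y) = 2/(\nu^2(y)\s(y))$ gives $\bigl(W/\s\bigr)'(x) = b\,\m(x)\hat u_\rho(x)\hat v_{\rho+b}(x)$. Integrating from $x_0$ to $x$ yields (\ref{wronskian_solution}).

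There is no serious obstacle: every function in sight is smooth on $\I$ by the blanket regularity assumptions of Subsection~\ref{subsect1.1} (continuity of $\lambda,\lambda',\nu,\nu''$ and strict positivity of $\nu$), and $\hat u_\rho$, $\hat v_{\rho+b}$ inherit the needed smoothness as linear combinations of $\varphi^\pm_\rho$ and $\varphi^\pm_{\rho+b}$, so the integrand in (\ref{wronskian_solution}) is continuous on $\I$ and the fundamental theorem of calculus applies on any compact subinterval. The only thing to keep track of is the sign convention for the Wronskian and the consistent use of $\rho$ versus $\rho+b$ when invoking the eigenvalue equations; this is precisely what makes the constant $b$ (rather than $0$) appear on the right-hand side, distinguishing this result from the standard Wronskian identity (\ref{wronskian}) where both factors share the same eigenvalue.
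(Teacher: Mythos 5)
Your proof is correct and is exactly the ``direct verification'' that the paper's one-line proof alludes to: differentiate $W=\hat u_\rho\hat v_{\rho+b}'-\hat u_\rho'\hat v_{\rho+b}$, substitute the eigenvalue relations $\G\hat u_\rho=\rho\hat u_\rho$ and $\G\hat v_{\rho+b}=(\rho+b)\hat v_{\rho+b}$, and then integrate using $\s$ as the integrating factor with $\m=2/(\nu^2\s)$. The sign conventions and the appearance of the factor $b$ are handled correctly, so nothing is missing.
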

\begin{proof}
The proof follows by direct verification upon using $\G\hat{u}_\rho = \rho\hat{u}_\rho$ and $\G\hat{v}_{\rho + b} = (\rho + b)\hat{v}_{\rho + b}$.
\end{proof}

\noindent Note that for the driftless case, with $a=b=0$, the function $\frac{W(x)}{\s(x)}$ is constant and from (\ref{wronskian}) and (\ref{Wfuncpm}):
$W(x) = (c_1q_2 - c_2q_1) w_\rho\s(x)$. Therefore, the equation defining $\sigma(F)$ for all families of driftless $F$-diffusions
is recovered as a particular case of the more general
specification given by equation~(\ref{sigmaF}). Moreover, the specification in (\ref{sigmaF}) gives rise to state
dependent volatility functions that can also have a dependence on the drift parameters $a$ and $b$.

\begin{proposition} \label{prop2}
Assume that $c_1$ and $c_2$ in (\ref{FMAP}) are both nonnegative or nonpositive (with at least one of them being nonzero)and that $W(x)$ in (\ref{Wfunc}) preserves its sign as $x$
approaches either endpoint $l$ or $r$; that is, $\sgn \big(W(l+)\big) = \sgn \big(W(r-)\big)$. Then, $W(x)\ne 0$ for all $x\in\I$, i.e. $W(x)$
is either strictly positive or negative on $\I$.
\end{proposition}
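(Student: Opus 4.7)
The plan is to reduce the claim to a monotonicity argument for $W(x)/\s(x)$ and then exploit the endpoint sign-preservation hypothesis. More precisely, I would show that under the stated sign conditions on $c_1,c_2$ the integrand in the representation~(\ref{wronskian_solution}) from Proposition~\ref{prop1} is strictly signed, which forces $W/\s$ to be strictly monotonic on $\I$; a zero in the interior would then force the two endpoint limits to have opposite signs, contradicting the hypothesis.

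First, I would verify the sign-definiteness of the generating functions on $\I$. Since $q_1,q_2\ge 0$ in~(\ref{uhat}) with at least one strictly positive, and since $\varphi^\pm_\rho>0$ on $\I$, one has $\hat{u}_\rho(x)>0$ for every $x\in\I$. By the same reasoning, the hypothesis on $c_1,c_2$ together with the positivity of $\varphi^\pm_{\rho+b}$ gives that $\hat{v}_{\rho+b}(x)$ has strict constant sign on $\I$; this is the point at which the full strength of the assumption ``at least one nonzero'' is used. Combining with $\m(y)>0$, the integrand $b\,\m(y)\hat{u}_\rho(y)\hat{v}_{\rho+b}(y)$ in~(\ref{wronskian_solution}) is continuous and of strict constant sign on $\I$ whenever $b\ne 0$. (If $b=0$, the representation~(\ref{wronskian_solution}) reduces $W/\s$ to a constant and the conclusion follows at once from the endpoint sign hypothesis.)

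Second, I would read off from~(\ref{wronskian_solution}) that $x\mapsto W(x)/\s(x)$ is a primitive of a strictly signed continuous function, and therefore is strictly monotonic on $\I$; strictly increasing if $b\hat{u}_\rho\hat{v}_{\rho+b}>0$ and strictly decreasing otherwise. Since $\s(x)>0$, the signs of $W(x)$ and $W(x)/\s(x)$ agree pointwise. To close the argument, I would suppose for contradiction that $W(x_\ast)=0$ for some $x_\ast\in\I$; then strict monotonicity of $W/\s$ would yield strictly opposite signs on the two sides of $x_\ast$, and passing to the limits $x\to l+$ and $x\to r-$ would give $\sgn\bigl(W(l+)\bigr)\ne\sgn\bigl(W(r-)\bigr)$, contradicting the hypothesis. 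No step looks genuinely delicate; the main bookkeeping concern is keeping track of the signs of the various factors so that the strict monotonicity of $W/\s$ is preserved, which in turn is why the ``at least one nonzero'' clause on $c_1,c_2$ cannot be dropped.
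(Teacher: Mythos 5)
Your proof is correct and follows essentially the same route as the paper's: both observe that the hypotheses on $c_1,c_2$ force $\hat v_{\rho+b}$ to have strict constant sign, deduce monotonicity of $W/\s$ from the representation (\ref{wronskian_solution}) of Proposition~\ref{prop1}, and conclude from the endpoint sign condition that $W$ cannot vanish. The only differences are cosmetic (your explicit contradiction framing and the aside on the degenerate case $b=0$).
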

\begin{proof}
Under the assumed conditions on $c_1$ and $c_2$, the
function $\hat{v}_{\rho+b}$ is either strictly positive or strictly
negative on $\I$. Recall that $\hat{u}_\rho$ is strictly positive. Hence, the function $W(x)/\s(x)$
given by (\ref{wronskian_solution}) is monotonic in $x$ and, since $\s(x) > 0$, $W(x)$ has at most one zero
in $\I$. Then $\sgn \big(W(l+)\big) = \sgn \big(W(r-)\big)$ implies $W(x)\ne 0$, i.e. either $W(x) > 0$ or $W(x) < 0$
for all $x\in\I$.
\end{proof}

\begin{proposition} \label{prop3}
Assume that $c_1$ and $c_2$ for the map $\F$ in (\ref{FMAP})
are both nonzero and have opposite signs.
  \begin{enumerate}[(i)]
    \item If $b>0$, then $W(x)$ is either strictly positive or negative on $\I$.
    \item Let $b<0$ and $x\in\I$. If $\sgn(W(l+)) = \sgn(W(r-)) = +1$ and $c_1>0$, then $W(x) > 0$.
If $\sgn(W(l+)) = \sgn(W(r-)) =-1$ and $c_1<0$, then $W(x) < 0$.
  \end{enumerate}
 \end{proposition}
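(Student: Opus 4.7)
The plan is to analyse the shape of $f(x):=W(x)/\s(x)$ using Proposition~\ref{prop1} and to reduce the question of sign preservation of $W$ on $\I$ to a sign check at one distinguished interior point together with, in part~(ii), a matching boundary check. The key preparatory observation is that when $c_1$ and $c_2$ have opposite signs, $\hat v_{\rho+b}=c_1\varphi^+_{\rho+b}+c_2\varphi^-_{\rho+b}$ is strictly monotonic on $\I$: its derivative is $c_1(\varphi^+_{\rho+b})'+c_2(\varphi^-_{\rho+b})'$, in which both terms have the sign of $c_1$ because $(\varphi^+_{\rho+b})'>0$ and $(\varphi^-_{\rho+b})'<0$. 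Combined with the asymptotic relation~(\ref{bndrcond2}), which forces $\hat v_{\rho+b}/\varphi^-_{\rho+b}\to c_2$ at $l$ and $\hat v_{\rho+b}/\varphi^+_{\rho+b}\to c_1$ at $r$, this yields a unique zero $x^*\in\I$ of $\hat v_{\rho+b}$.

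Next I evaluate $W(x^*)$ directly from~(\ref{Wfunc}): since $\hat v_{\rho+b}(x^*)=0$, one has $W(x^*)=\hat u_\rho(x^*)\,\hat v'_{\rho+b}(x^*)$, whose sign equals $\sgn(c_1)$ by the computation above. Using Proposition~\ref{prop1}, $f'(x)=b\,\m(x)\hat u_\rho(x)\hat v_{\rho+b}(x)$, so $f$ has its unique stationary point at $x^*$; and since $\hat v_{\rho+b}$ changes sign there from $\sgn(-c_1)$ to $\sgn(c_1)$, the nature of this extremum is determined by $\sgn(bc_1)$.

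For part~(i), with $b>0$: if $c_1>0$ then $f'$ is negative on $(l,x^*)$ and positive on $(x^*,r)$, so $x^*$ is a strict global minimum of $f$, and $f(x^*)=W(x^*)/\s(x^*)>0$ forces $W>0$ throughout $\I$. If $c_1<0$, the situation is mirrored: $x^*$ is a strict global maximum with $f(x^*)<0$, giving $W<0$. Either way, $W$ preserves its sign.

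For part~(ii), with $b<0$, the roles swap: $x^*$ becomes a strict global \emph{maximum} of $f$ when $c_1>0$ (and a strict global minimum when $c_1<0$), so the value at $x^*$ alone no longer controls the sign of $f$ on all of $\I$, which is why the boundary hypotheses are needed. Assume $c_1>0$ and $\sgn W(l+)=\sgn W(r-)=+1$. Since $W$ is continuous and $\s>0$, there exist one-sided neighborhoods of $l$ and $r$ on which $f>0$. If $f$ had a zero $x_0\in(l,x^*)$, strict monotonic increase of $f$ on $(l,x^*)$ would force $f<0$ on $(l,x_0)$, contradicting positivity of $f$ near $l$; a symmetric argument rules out a zero in $(x^*,r)$, and $f(x^*)>0$ handles the extremum itself. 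Hence $W>0$ on $\I$. The case $c_1<0$ with negative boundary signs is entirely analogous, yielding $W<0$. The only delicate point I anticipate is this boundary step: $\s$ may blow up or vanish at the endpoints, so one must not attempt to evaluate $f(l+)$ as $W(l+)/\s(l+)$; the clean fix is the contradiction argument above, which only uses continuity of $W$ and positivity of $\s$ on $\I$.
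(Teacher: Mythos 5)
Your proof is correct and follows essentially the same route as the paper: locate the unique zero $x^*$ of $\hat v_{\rho+b}$ via its strict monotonicity and (\ref{bndrcond2}), evaluate $W(x^*)=\hat u_\rho(x^*)\hat v'_{\rho+b}(x^*)$, and use Proposition~\ref{prop1} to control the sign of $W/\s$ on either side of $x^*$ (the paper writes this via the integral representation centered at $x^*$, you via the equivalent derivative form). Your part~(ii) contradiction argument is just a slightly more explicit spelling-out of the paper's remark that $W$ has no zeros iff its sign at $x^*$ matches its signs at both endpoints.
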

\begin{proof}
By definition of $\hat{v}_{\rho+b}$, and the fact that $\varphi^+_{\rho+b}$ and $\varphi^-_{\rho+b}$
are respectively increasing and decreasing functions, we have $\hat{v}_{\rho+b}'(x) > 0$ (or $\hat{v}_{\rho+b}'(x) < 0$)
when $c_1>0$, $c_2<0$ (or $c_1<0$, $c_2>0$), i.e. $\hat{v}_{\rho+b}(x)$ is either a strictly increasing (or decreasing)
function. Moreover, it follows from such monotonicity and the boundary conditions of $n(x;\rho+b,\rho)$ that $\hat{v}_{\rho+b}(x)$
has exactly one zero, at $x=\hat{x}_0\in \I$ where $\varphi^+_{\rho+b}(\hat{x}_0)/\varphi^-_{\rho+b}(\hat{x}_0)
= \vert c_2/c_1\vert$. Then, $W(\hat{x}_0)=\hat{u}_\rho(\hat{x}_0)\hat{v}_{\rho+b}'(\hat{x}_0)$
is accordingly strictly positive (or negative). Setting $x_0=\hat{x}_0$ in (\ref{wronskian_solution}) now gives
$$ \frac{W(x)}{\s(x)} = \frac{W(\hat{x}_0)}{\s(\hat{x}_0)}
    + \epsilon \,b\int_{\min(\hat{x}_0,x)}^{\max(\hat{x}_0,x)} \m(y) \hat{u}_\rho(y)
    \vert \hat{v}_{\rho+b}(y)\vert\,dy
$$
where $\epsilon = +1 (-1)$ if $c_1>0$ ($c_1<0$). Hence, if $b>0$ then either $W(x) > 0$ or $W(x) < 0$, for all $x\in\I$,
in the respective cases. If $b<0$, then $W(x)$, as given in the last expression, can either have no zeros or at most two zeroes in $\I.$
$W(x)$ has no zeros if and only if $\sgn (W(\hat{x}_0))= \sgn (W(l+)) = \sgn (W(r-))$. This hence proves part (ii) for the
respective cases.
\end{proof}

From the above propositions, the monotonicity of a map $\F$ is determined simply by
examining the asymptotic behaviour of $W(x)$, as $x$ approaches endpoint $l$ or $r$.
Below we consider three families of $F$-diffusions arising from
the three separate underlying diffusions discussed in Sections \ref{subsect2.1} - \ref{subsect2.3};
namely, the Bessel, the confluent hypergeometric and the OU families of process. For all these families, the
asymptotic properties of the fundamental solutions $\varphi^\pm_s$ and
of the corresponding Wronskian functions, for the respective underlying $X$-diffusions, are presented
in Appendix~\ref{sect_a1} .

\subsection{Martingale Property} \label{subsect3.3}

For any time-homogeneous $F$-diffusion defined by (\ref{GeneratorF}),
we introduce the rate of change of the conditional expectation:
 \begin{equation} \label{mathexp_rate1}
  \frac{\partial}{\partial
  t}\E[F_{\tau+t}\mid F_\tau=Y]=\int_{F^l}^{F^r} F\,\frac{\partial p_F}{\partial
  t}(t;Y,F)\,dF\,,\quad
  Y\in\I_{\F},\;t>0,\tau\ge 0\,.
 \end{equation}
The transition PDF $p_F$ given by (\ref{PrKernelF}) satisfies the
forward Kolmogorov equation
\begin{equation}\label{forward_KPDE}
\frac{\partial p_F}{\partial t}=\frac{\partial}{\partial F}\left(\frac{1}{\s_{\sf F}}\frac{\partial}
{\partial F}\left(\frac{p_F}{\m_{\sf F}}\right)\right),
\end{equation}
with scale and speed densities given in terms of those for the $X^{(\rho)}$-diffusion:
 \begin{equation} \label{speed_scale_FandX}
\m_{\sf F}(F)=\m_\rho(\X (F))\,\left|\X'(F)\right|
   \quad\mbox{and}\quad
   \s_{\sf F}(F)=\s_\rho(\X (F))\,\left|\X'(F)\right|\,.
 \end{equation}
Here we consider $F$-diffusions with affine drift $\alpha(F) = a + bF$.
Then, using (\ref{forward_KPDE}) within (\ref{mathexp_rate1}), applying integration by parts
twice and making use of the derivative ${d\over dF}\left({1\over \s_{\sf F}(F)}\right) = (a+bF)\m_{\sf F}(F)$,
gives the rate in (\ref{mathexp_rate1}) expressed as a sum:
\begin{equation} \label{mathexp_conserve}
  \frac{\partial}{\partial
  t}\E[F_{\tau+t}\mid F_\tau=Y]=\int_{F^l}^{F^r} (a+bF)\,p_F(t;Y,F)\,dF+
  \mathcal{E}(Y,t)\,.
\end{equation}
The ``bias'' term $\mathcal{E}(Y,t)$ is given by the difference of two limits:
\begin{equation} \label{mebias}
  \begin{array}{rcl}
  \mathcal{E}(Y,t) &=&
   \left[ \ds\frac{F}{\s_{\sf F}(F)}\, \frac{\partial}{\partial F}
          \left(\frac{p_F(t;Y,F)}{\m_{\sf F}(F)}\right)
          -
          \frac{1}{\s_{\sf F}(F)}\,\frac{p_F(t;Y,F)}{\m_{\sf F}(F)}\right]_{F=F^l}^{F=F^r}\\[15pt]
  &=& \left[ \ds\frac{\F (x)}{\s_\rho(x)} \, \frac{\partial}{\partial x}
          \left(\frac{p_X^{(\rho)}(t;y,x)}{\m_\rho(x)}\right)
          -
          \frac{\F'(x)}{\s_\rho(x)}\,\frac{p_X^{(\rho)}(t;y,x)}{\m_\rho(x)}\right]_{x=l+}^{x=r-}\,.
 \end{array}
\end{equation}
The last expression follows by changing variables $x=\X(F), y = \X(Y)$ ($\X \equiv \F^{-1}$) and
by combining (\ref{speed_scale_FandX}) and (\ref{PrKernelF}).

Consider the case where $\mathcal{E}(Y,t)\equiv 0$, for any $F_\tau=Y\in\I_{\F},t>0$.
Then, for such diffusions one obtains a simple representation for the rate of change in (\ref{mathexp_conserve}) as
\begin{equation} \label{mathexp_rate2}
 \frac{\partial}{\partial t}\E[F_{\tau+t}\mid F_\tau]
 = a\E[\indfunc_{F_{\tau+t}\in\I_{\F}} \mid F_\tau]
 + b\E[F_{\tau+t}\mid F_\tau]\,,
\end{equation}
i.e. the rate of change of the conditional expectation equals the conditional expectation of the drift function for the process.

If $(F_t)_{t\ge 0}$ conserves probability, i.e. $\E[\indfunc_{F_{\tau+t}\in\I_\F} \mid F_\tau]\equiv\P\{F_{\tau+t}\in\I_{\F} | F_\tau\in\I_\F\} = 1$ for all $\tau\ge 0,t>0$, or if $a=0$ holds, then $E(t)\triangleq\E[F_{\tau+t}\mid F_\tau = Y]$ satisfies the trivial linear ODE $E'(t)=a + b\,E(t)$, subject to $E(0)=Y$. Let $a=0$, then there exists a well-known
\textit{geometric drift} solution $E(t) = Ye^{bt}$, i.e. $\E[F_{\tau+t}\mid
F_\tau=Y]=e^{bt} Y\,\,\,\mbox{ for all }Y\in\I_\F,\tau\ge 0,t>0.$
In other words, the ``discounted
process'' $(e^{-bt} F_t)_{t\ge 0}$ is a martingale in case $a=0$ and $\mathcal{E}\equiv 0$. Note also that
the discounted process is a strict supermartingale (submartingale) when $a=0$ and $\mathcal{E} < 0$ ($\mathcal{E} > 0$) for all $t>0$.
Setting $\tau=0$ recovers the unconditional expectation $\E[F_t] = F_0 e^{bt}$.

The following theorem now gives necessary and sufficient conditions for the validity of the relation (\ref{mathexp_rate2}). These conditions
involve limit expressions that can be readily evaluated from the boundary asymptotic properties of the fundamental solutions $\varphi^\pm$ for the underlying
$X$-diffusion. It is implied that the diffusion process $(X_t^{(\rho)})_{t\ge 0}$ has the generator in
(\ref{Generator_rho}) with boundary conditions specified via (\ref{greenfunc_rho}). Moreover, we assume that $\E[\vert F_t\vert] < \infty$, i.e.
$\int_l^r \vert\F(x)\vert p_X^{(\rho)}(t;x_0,x)dx < \infty$, $x_0\in\I, t > 0$.
The map $\F$ is assumed to be monotonic and given by (\ref{FMAP}) for $b\ne 0$ and by (\ref{FMAPdl}) for $a=b=0$.
We note that a similar result and proof for special cases of driftless $F$-diffusions is given in \cite{CM06}.

\begin{theorem} \label{theorem1}
The diffusion $F_t=\F(X_t^{(\rho)}), t\ge 0$, conserves the expectation
rate, i.e. the relation (\ref{mathexp_rate2}) is true if and only if the
following boundary conditions hold:
\begin{eqnarray}
\label{cond_expectation}
     \lim\limits_{x\to l+} {W[\F , \psi^{(\rho)}_{s}](x) \over \s_\rho(x)} = 0 \,\,
     \text{  and  }\,\,
     \lim\limits_{x\to r-} {W[\F , \phi^{(\rho)}_{s}](x) \over \s_\rho(x)} = 0.
\end{eqnarray}
for all complex-valued $s$ such that $\mathrm{Re\,}s>c$, for some real constant $c$.
\end{theorem}
\begin{proof}
Assume $\mathrm{Re\,}s>c$ with sufficiently large $c$ such that $G_X^{(\rho)}(x,y,s)$ is analytic in $s$.
The condition $\mathcal{E}(Y,t)\equiv 0$ is then equivalent to the Laplace transform condition
${\mathcal L}_t[\mathcal{E}(Y,t)][s] \equiv 0$.
Laplace transforming the second expression within the limits in (\ref{mebias}), while changing
the order of ${\mathcal L}_t$ and differentiation, using (\ref{greenfunc_rho}) and (\ref{Laplace_inversion}),
and finally combining both left and right limits gives the condition ${\mathcal L}_t[\mathcal{E}(Y,t)][s] \equiv 0$ in the form:
\begin{eqnarray}\label{lap_mart_cond}
\psi^{(\rho)}_{s}(y) \cdot {W[\F , \phi^{(\rho)}_{s}](x) \over \s_\rho(x)}\bigg\vert_{x=r-}
- \phi^{(\rho)}_{s}(y) \cdot {W[\F , \psi^{(\rho)}_{s}](x) \over \s_\rho(x)}\bigg\vert_{x=l+} \equiv 0.
\end{eqnarray}
For (\ref{lap_mart_cond}) to hold true, for all $y\in\I$, the two limits must vanish
since $\{\psi^{(\rho)}_{s},\phi^{(\rho)}_{s}\}$ is a linearly independent pair on $\I$.
\end{proof}
\noindent {\it Remark} : We recall from (\ref{greenfunc_rho})
that the fundamental solutions $\{\psi^{(\rho)}_{s}, \phi^{(\rho)}_{s}\}$ for $X^{(\rho)}$ have the form
$\psi^{(\rho)}_{s} = \hat\psi_{s + \rho}/\hat{u}_\rho$  and
$\phi^{(\rho)}_{s} = \hat\phi_{s + \rho}/\hat{u}_\rho$, where we conveniently define
$\hat\psi_{s + \rho}\triangleq  \hat{A}_1\varphi^+_{\rho + s} + \hat{B}_1\varphi^-_{\rho + s}$ and
$\hat\phi_{s + \rho}\triangleq  \hat{A}_2\varphi^+_{\rho + s} + \hat{B}_2\varphi^-_{\rho + s}$. Hence, if
the left boundary is singular (exit, entrance or natural) or regular killing then $\hat\psi_{s + \rho} = \varphi^+_{s + \rho}$.
Similarly, for a singular or regular killing right boundary we have $\hat\phi_{s + \rho} = \varphi^-_{s + \rho}$.

An important general class of $F$-diffusions follows by setting $a=0$, i.e. with
drift function is now $\alpha(F) = bF$. Such diffusions are useful, for example, for modelling
asset prices and for equity option pricing in finance. The corollary below gives necessary and
sufficient conditions for the discounted process $(e^{-bt} F_t)_{t\ge 0}$ to be
a martingale when $a=0$. As in Theorem \ref{theorem1}, we assume $\F$ is monotonic and that $\E[\vert F_t\vert] < \infty$.

\begin{corollary} \label{corollary1}
Let $\F = \hat{v}_{\rho + b}/\hat{u}_\rho$, i.e. the map in (\ref{FMAP}) with $a=0,b\ne 0$ and
by (\ref{FMAPdl}) for $a=b=0$. Then, the discounted process $(e^{-bt} F_t)_{t\ge 0}$ is a martingale if and only if
\begin{eqnarray}
\label{cond_martingale}
     \lim\limits_{x\to l+} {W[\hat{v}_{\rho + b}, \hat{\psi}_{\rho + s}](x) \over \s(x)} = 0 \,\,
     \text{  and  }\,\, \lim\limits_{x\to r-} {W[\hat{v}_{\rho + b}, \hat{\phi}_{\rho + s}](x) \over \s(x)} = 0,
\end{eqnarray}
for all complex-valued $s$ such that $\mathrm{Re\,}s>c$, for some real constant $c$.
\end{corollary}
\begin{proof} Setting $a=0$ in Theorem \ref{theorem1}, we have that the discounted process is a martingale, i.e. conserves the expectation rate,
if and only if conditions in (\ref{cond_expectation}) hold.
Then, using $\F(x) = {\hat{v}_{\rho + b}(x)\over \hat{u}_\rho(x)}$,
$\psi^{(\rho)}_{s}(x) = {\hat\psi_{s + \rho}(x) \over \hat{u}_\rho(x)}$,
$\phi^{(\rho)}_{s}(x) = {\hat\phi_{s + \rho}(x) \over \hat{u}_\rho(x)}$ and
$\s(x) = \hat{u}_\rho^2(x)\s_\rho(x)$ reduces the Wronskian conditions to (\ref{cond_martingale}).
\end{proof}

\section{The Bessel, Confluent Hypergeometric, and Ornstein-Uhlen\-beck Families of Affine-Drift $F$-Diffusions} \label{sect4}

\subsection{Three Main Families of Affine Drift Diffusions: Classification and Properties} \label{subsect4.1}
Using the general construction presented in Section~\ref{sect1} and the
three underlying $X$-diffusions from Section~\ref{sect2}, we can
construct three new families of $F$-diffusions with affine drift as defined by~(\ref{GeneratorF}) and (\ref{sigmaF}). For all such new families, the
transition PDFs are given in analytically closed form. Generally, a transition PDF $p_F$ is
given by (\ref{PrKernelF}). It is trivially related to the corresponding transition PDF of the $X^{(\rho)}$-diffusion, where the monotonic map
$\F(x)$ (with its inverse map $\X(F)$) is given in (\ref{FMAP}). In turn, the transition PDF for an $X^{(\rho)}$-diffusion is expressible in terms of a transition PDF for the underlying $X$ via the generating function $\hat{u}_\rho$ in (\ref{uhat}).
By choosing the SQB, CIR, or OU diffusions as underlying
$X$-diffusion, we respectively obtain the so-called {\it Bessel, confluent hypergeometric, or Ornstein-Uhlenbeck (OU) families of $F$-diffusions with affine drift}. Some properties and classification of the corresponding driftless diffusions are discussed in \cite{CM06}.
In the respective sections \ref{subsect2.1}--\ref{subsect2.3}, we have constructed analytically exact transition PDFs $p_X^{(\rho)}$
on the respective regular state spaces $\I=(0,\infty)$ or $\I=(-\infty,\infty)$ for the corresponding three main
families of $X^{(\rho)}$-diffusions for subfamilies of type (i) $q_1=0, q_2 > 0$, (ii) $q_1>0, q_2 = 0$  and (iii) $q_1>0, q_2 > 0$.
The fundamental elementary solutions $\varphi^\pm$ used in generating these three main families of transformed processes are
given by either (\ref{SQBfund}), (\ref{CIRfund_1_M})--(\ref{CIRfund_1_U}) or (\ref{CIRfund_2_M})--(\ref{CIRfund_2_U}), or (\ref{OUfund}).
The boundary classification for the three respective main families of $X^{(\rho)}$ processes is given in
Lemmas \ref{lemma_class_Bessel}, \ref{lemma_class_CIR} and \ref{lemma_class_OU}. Hence,
the boundary classification for the Bessel, confluent hypergeometric and OU families of $F$-diffusions follows immediately via Lemma \ref{Theorem_F_classification}.

Each $F$-diffusion is described by the set of
parameters that are inherited from the chosen underlying
diffusion. In addition to that set, the nonnegative parameters
$q_1,$ $q_2$, and $\rho>0$ are added due to the
measure change $X\to X^{(\rho)}$. Two parameters $a$ and $b$ describe the affine drift
coefficient in (\ref{GeneratorF}). Finally, up to two other parameters
$c_1$ and $c_2$ are used in the map function.
As observed from the diffusion coefficient function $\sigma(F)$ in (\ref{sigmaF}), the combination of all such parameters
make the new diffusions quite flexible for modelling various stochastic processes.
Different choices of monotonic maps $\F=\F^{(i)}_\pm$, $i=1,\ldots,5$, lead to different $F$-diffusions. In any case, the diffusion function
is specified by (\ref{sigmaF}).

The choice $\F=\F^{(1)}_\pm$ leads to $F$-diffusions that have applications
in finance. In particular, the dual maps defined by (\ref{FMAPpm1}) with $\epsilon = +1$ give rise to sets of {\it dual subfamilies} (i) and (ii) of
affine drift $F$-diffusions with respective volatility specification:
\begin{equation}
\sigma(F) = c \nu(x) \left\{\begin{array}{ll}
{W[\varphi^-_\rho,\varphi^+_{\rho+ b}](x) \over [\varphi^-_\rho(x)]^2} & \mbox{(i)},\\[5pt]
{W[\varphi^-_{\rho + b},\varphi^+_\rho](x) \over [\varphi^+_\rho(x)]^2} & \mbox{(ii)}, \end{array}\right.
\label{dual_vol_Fmap}\!
\end{equation}
where $x = {\sf X}(F)$, with ${\sf X}\triangleq {\sf F}^{-1}$ as unique inverse map for the respective subfamilies
(i) $\F(x) = -\frac{a}{b} + c \frac{\varphi^+_{\rho+ b}(x)}{\varphi^-_{\rho}(x)}$ and
(ii) $\F(x) = -\frac{a}{b} + c \frac{\varphi^-_{\rho+ b}(x)}{\varphi^+_{\rho}(x)}$.
Both subfamilies have regular state space $F\in (-{a\over b},\infty)$.
Computing the respective Wronskians in equation (\ref{dual_vol_Fmap}) gives the diffusion coefficient function for three main dual
subfamilies (i) and (ii) as follows.

For the Bessel family, the maps are (i)
$\F(x) = \F^{(1)}_+ \equiv -\frac{a}{b} + c \frac{I_{\vert\mu\vert}\left(\frac{2}{\nu_0}\sqrt{2(\rho + b) x}\right)}{K_{\mu}\left(\frac{2}{\nu_0}\sqrt{2\rho x}\right)}$
and (ii) $\F(x) = \F^{(1)}_- \equiv -\frac{a}{b} + c \frac{K_\mu\left(\frac{2}{\nu_0}\sqrt{2(\rho + b) x}\right)}{I_{\vert\mu\vert}\left(\frac{2}{\nu_0}\sqrt{2\rho x}\right)}$, where
\begin{equation}\label{vol_Bessel}
  \sigma(F)= c\sqrt{2}\left\{\!\begin{array}{ll}
     \frac{\sqrt{\rho}\,I_{\vert\mu\vert}\left(\frac{2}{\nu_0}\sqrt{2(\rho + b) x}\right)K_{\vert\mu\vert+1}\left(\frac{2}{\nu_0}\sqrt{2\rho x}\right)}
{K_{\mu}^2\left(\frac{2}{\nu_0}\sqrt{2\rho x}\right)}
+ \frac{\sqrt{\rho + b}\,I_{\vert\mu\vert+1}\left(\frac{2}{\nu_0}\sqrt{2(\rho + b)x}\right)}{K_{\mu}\left(\frac{2}{\nu_0}\sqrt{2\rho x}\right)} &
                 \mbox{(i)},\\[5mm]
     \frac{\sqrt{\rho}\,K_\mu\left(\frac{2}{\nu_0}\sqrt{2(\rho + b) x}\right)I_{\vert\mu\vert+1}\left(\frac{2}{\nu_0}\sqrt{2\rho x}\right)}
{I_{\vert\mu\vert}^2\left(\frac{2}{\nu_0}\sqrt{2\rho x}\right)}
+ \frac{\sqrt{\rho + b}\,K_{\vert\mu\vert+1}\left(\frac{2}{\nu_0}\sqrt{2(\rho + b)x}\right)}{I_{\vert\mu\vert}\left(\frac{2}{\nu_0}\sqrt{2\rho x}\right)} &
                 \mbox{(ii)}.
  \end{array} \right.
\end{equation}

For the confluent hypergeometric family, we define $\upsilon \triangleq  \frac{\rho}{\lambda_1} + \mu_-$,
$\upsilon_b \triangleq \upsilon + {b\over \lambda_1}$, for the case $\lambda_1 > 0$, and $\upsilon \triangleq  \frac{\rho}{\vert\lambda_1\vert} + 1 + \mu_+$,
$\upsilon_b \triangleq \upsilon + {b\over \vert\lambda_1\vert}$ for $\lambda_1 < 0$. The dual maps are
(i) $\F(x) = -\frac{a}{b} + c \frac{\M\left(\upsilon_b,1 + \vert\mu\vert,\vert\kappa\vert x\right)}
{\U\left(\upsilon,1 + \vert\mu\vert,\vert\kappa\vert x\right)}$
and (ii) $\F(x) = -\frac{a}{b} + c \frac{\U\left(\upsilon_b,1 + \vert\mu\vert,\vert\kappa\vert x\right)}
{\M\left(\upsilon,1 + \vert\mu\vert,\vert\kappa\vert x\right)}$,
where $\vert\kappa\vert = \kappa$ ($-\kappa$) for $\lambda_1 > 0$ ($<0$).
The respective volatility functions for the dual subfamilies are
\begin{equation}
    \sigma(F)= c\vert\kappa\vert\nu_0\sqrt{x}\left\{\!\begin{array}{ll}
    \frac{\upsilon \M\left(\upsilon_b,1 + \vert\mu\vert,\vert\kappa\vert x\right)
    \U\left(\upsilon + 1,2 + \vert\mu\vert,\vert\kappa\vert x\right)}{\U^2\left(\upsilon,1 + \vert\mu\vert,\vert\kappa\vert x\right)} +
    \frac{\upsilon_b\,\M\left(\upsilon_b+1,2 + \vert\mu\vert,\vert\kappa\vert x\right)}
    {(1 + \vert\mu\vert)\,\U\left(\upsilon,1 + \vert\mu\vert,\vert\kappa\vert x\right)}
    & \mbox{(i)}, \\[5mm]
    \frac{\upsilon\,\M\left(1 + \upsilon,2 + \vert\mu\vert,\vert\kappa\vert x\right)\U\left(\upsilon_b,1 + \vert\mu\vert,\vert\kappa\vert x\right)}
    {(1 + \vert\mu\vert)\,\M^2\left(\upsilon,1 + \vert\mu\vert,\vert\kappa\vert x\right)}
    + \frac{\upsilon_b\,\U\left(\upsilon_b+1,2 + \vert\mu\vert,\vert\kappa\vert x\right)}{\M\left(\upsilon,1 + \vert\mu\vert,\vert\kappa\vert x\right)}
    & \mbox{(ii)}. \end{array} \right.
    \label{vol_Confluent}
\end{equation}

For the OU family of $F$-diffusions the dual subfamilies (i) and (ii) coalesce into a single family of processes.
This follows by the reflection symmetry $\varphi^+_s(x)=\varphi^-_s(-x)$. In this case we have
$W[\varphi^-_{\rho},\varphi^+_{\rho + \vartheta}](x) = W[\varphi^-_{\rho + \vartheta},\varphi^+_{\rho}](-x)$. Moreover, the respective
maps ${\sf F}(x)$ of subfamilies (i) and (ii) coincide upon interchanging $x\to -x$. Hence, the diffusion functions (i)
and (ii) in (\ref{dual_vol_Fmap}) are identical. We can therefore consider a single map
defined by the increasing function ${\sf F}(x)= -{a\over b} + c\frac{\varphi^+_{\rho + b}(x)}{\varphi^-_\rho(x)}
= -{a\over b} + c\frac{D_{-\upsilon_b}(-\sqrt{\kappa}\,x)}
{D_{-\upsilon}(\sqrt{\kappa}\,x)}$ giving the volatility function
\begin{equation}\label{vol_OU}
  \sigma(F) = c\nu_0\sqrt{\kappa} \left\{\upsilon_b \frac{D_{-(\upsilon_b+1)}(-\sqrt{\kappa}\,x)}
{D_{-\upsilon}(\sqrt{\kappa}\,x)}
+ \upsilon \frac{D_{-\upsilon_b}(-\sqrt{\kappa}\,x)D_{-(\upsilon+1)}(\sqrt{\kappa}\,x)}
{D^2_{-\upsilon}(\sqrt{\kappa}\,x)}\right\}\,,
\end{equation}
where $\upsilon\triangleq \rho/\lambda_1$, $\upsilon_b \triangleq \upsilon + {b\over \lambda_1}$.
In all of the above volatility functions, the value $x = {\sf X}(F) \equiv {\sf F}^{-1}(F)$ is given by the respective inverse map.


Subsets of these diffusion families with $a=0, b\ne 0$ have regular state space $F\in (0,\infty)$ and are useful for
modelling asset prices in finance. Figures~\ref{fig1} and~\ref{fig2}
display some computed curves of the local volatility function $\sigma_{loc}(F) \triangleq  \sigma(F)/F$ for the subfamilies in
equations (\ref{vol_Bessel}i), (\ref{vol_Confluent}i) and (\ref{vol_OU}) when $a=0, b\ne 0$. These three main subfamilies (respectively named here as the Bessel-$\mathsf{K}$, Confluent-$\mathcal{U}$ and OU models) are of interest since,
according to Proposition \ref{prop_class} below, the discounted processes $(e^{-bt} F_t)_{t\ge 0}$ obey the martingale property. By using the respective asymptotic properties of the fundamental functions provided in Appendix~\ref{sect_a1}, we derive the following asymptotic relations
for the local volatility functions of these subfamilies. For $\mu \ne 0$:
\[ \begin{array}{llll}
  \text{Bessel-}\mathsf{K}: & \text{Confluent-}\mathcal{U}: & \text{Ornstein-Uhlenbeck}: &\\
\sigma_{loc}(F) \sim C_0 F^{-\frac{1}{2\vert\mu\vert}} &  \sigma_{loc}(F) \sim C_1 F^{-\frac{1}{2\vert\mu\vert}} & \sigma_{loc}(F) \sim  C_2 \sqrt{\ln (1/ F)} & \text{as } F\to 0+,\\
\sigma_{loc}(F) \to \sqrt{2\rho}+\sqrt{2(\rho+b)} &  \sigma_{loc}(F) \sim C_3 \sqrt{\ln F} & \sigma_{loc}(F) \sim  C_4 \sqrt{\ln F} & \text{as } F\to \infty, \end{array} \]
where $C_i$ denote some positive constants. For the case $\mu=0$, $\sigma_{loc}(F) \sim c_0 e^{c_1/F}$,
as $F\to 0+$, for both Bessel-$\mathsf{K}$ and Confluent-$\mathcal{U}$ models where $c_0, c_1$ are positive constants.
As duals to the Bessel-$\mathsf{K}$ and Confluent-$\mathcal{U}$ models, we refer to the respective subfamilies of type (ii)
above as the the Bessel-$\mathsf{I}$ and Confluent-$\mathcal{M}$ models. By similar analysis, asymptotic expressions for
the local volatility functions of such models can also be readily derived.

As seen in Figures~\ref{fig1} and~\ref{fig2}, by adjusting parameters,
the models are all readily calibrated to attain a prescribed level of local volatility for a given value of $F$.
The respective sets of freely adjustable model parameters for the Bessel, Confluent and OU subfamilies are:
($\rho,b, \mu,c,\nu_0$), ($\rho,b,\kappa,\mu,c,\nu_0$)
and ($\rho,b,\kappa,c,\nu_0$).

The following proposition characterizes the Bessel, confluent hypergeometric, and Ornstein-Uhlenbeck families of $F$-diffusions with linear drift
in terms of the conservation of the expectation rate and the martingale property of the discounted process.
This result generalizes the special results obtained previously for the driftless families of $F$-diffusions (see \cite{CM06}).
\begin{proposition} \label{prop_class}
Consider the Bessel, Confluent and OU regular diffusions $F_t=\F(X_t^{(\rho)})\in\R_+$,
with linear drift function $\alpha(F) = bF$ and nonlinear diffusion functions defined by (\ref{vol_Bessel})--(\ref{vol_OU}), respectively.
Then,
\begin{enumerate}[(1)]
   \item the discounted processes $(e^{-bt} F_t)_{t\ge 0}$ of the Bessel-$\mathsf{K}$ and Confluent-$\mathcal{U}$ subfamilies (i),
   with origin specified as killing in case $\vert\mu\vert < 1$, are martingales for all allowable choices of the model parameters;
   \item the discounted processes $(e^{-bt} F_t)_{t\ge 0}$ of the Bessel-$\mathsf{I}$ and Confluent-$\mathcal{M}$ subfamilies (ii), are strict supermartingales for all allowable choices of the model parameters;
   \item the discounted processes $(e^{-bt} F_t)_{t\ge 0}$ of the OU family are martingales for all allowable choices of the model parameters.
\end{enumerate}
\end{proposition}
\begin{proof}
The martingale property, for the discounted Bessel and Confluent subfamilies (i) and for the discounted
OU family, is proven by applying Corollary \ref{corollary1}, where $\hat\psi_{s + \rho} = \varphi^+_{s + \rho}$ and $\hat\phi_{s + \rho} = \varphi^-_{s + \rho}$, and $\F = c{\varphi^{+}_{\rho + b}\over \varphi^{-}_\rho}$, i.e. $\hat{v}_{\rho + b} = c\varphi^{+}_{\rho + b}$.
The boundary conditions in (\ref{cond_martingale}) now read ${W[\varphi^{+}_{\rho + b}, \varphi^+_{\rho + s}](l+) \over \s(l+)} = 0$ and ${W[\varphi^{+}_{\rho + b}, \varphi^-_{\rho + s}](r-) \over \s(r-)} = 0$, which hold true by letting $\mathrm{Re\,}s > b$ in the asymptotic Wronskian relations in
Appendix~\ref{sect_a1} for the respective SQB, CIR and OU processes (where $l=0, r=\infty$ for the SQB and CIR and $l=-\infty, r=\infty$ for the OU).
For the discounted Bessel and Confluent subfamilies (ii): $\hat{v}_{\rho + b} = c\varphi^{-}_{\rho + b}$, $\F = c{\varphi^{-}_{\rho + b}\over \varphi^{+}_\rho}$. In this case the left and right limits in (\ref{cond_martingale}) evaluate to
${W[\varphi^{-}_{\rho + b}, \varphi^+_{\rho + s}](0+) \over \s(0+)} = const. > 0$ and
${W[\varphi^{-}_{\rho + b}, \varphi^-_{\rho + s}](\infty) \over \s(\infty)} = 0$, for real values of $s > 0$.
This implies that $\mathcal{E} < 0$ in (\ref{mebias}), i.e. the strict supermartingale property holds.
\end{proof}

\begin{figure}[ht]
\begin{center}
\includegraphics[width=0.45\linewidth]{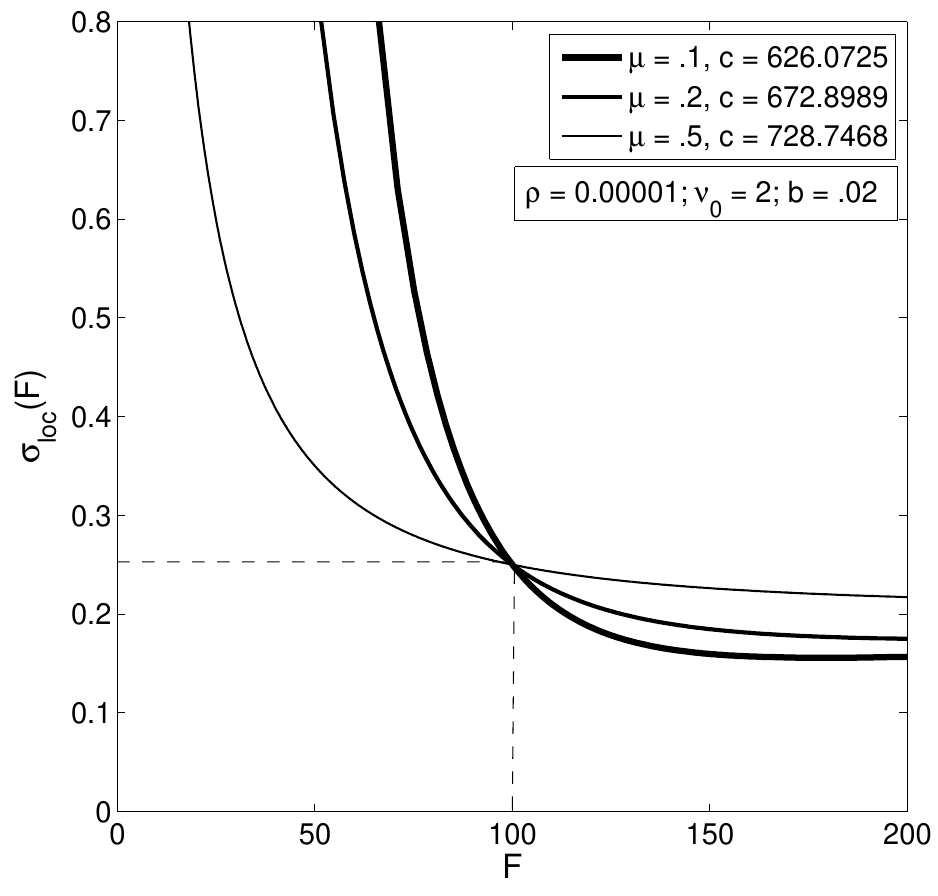}
\caption{Sample local volatility curves for the drifted Bessel-$\mathsf{K}$ model with drift parameters $a=0, b\ne 0$ and other parameters
calibrated such that $\sigma_{loc}(F) = 0.25$ at $F=100$.}%
\label{fig1}
\end{center}
\end{figure}

\begin{figure}
\begin{center}
\subfigure[]{
\resizebox*{0.45\linewidth}{!}{\includegraphics[width=\linewidth]{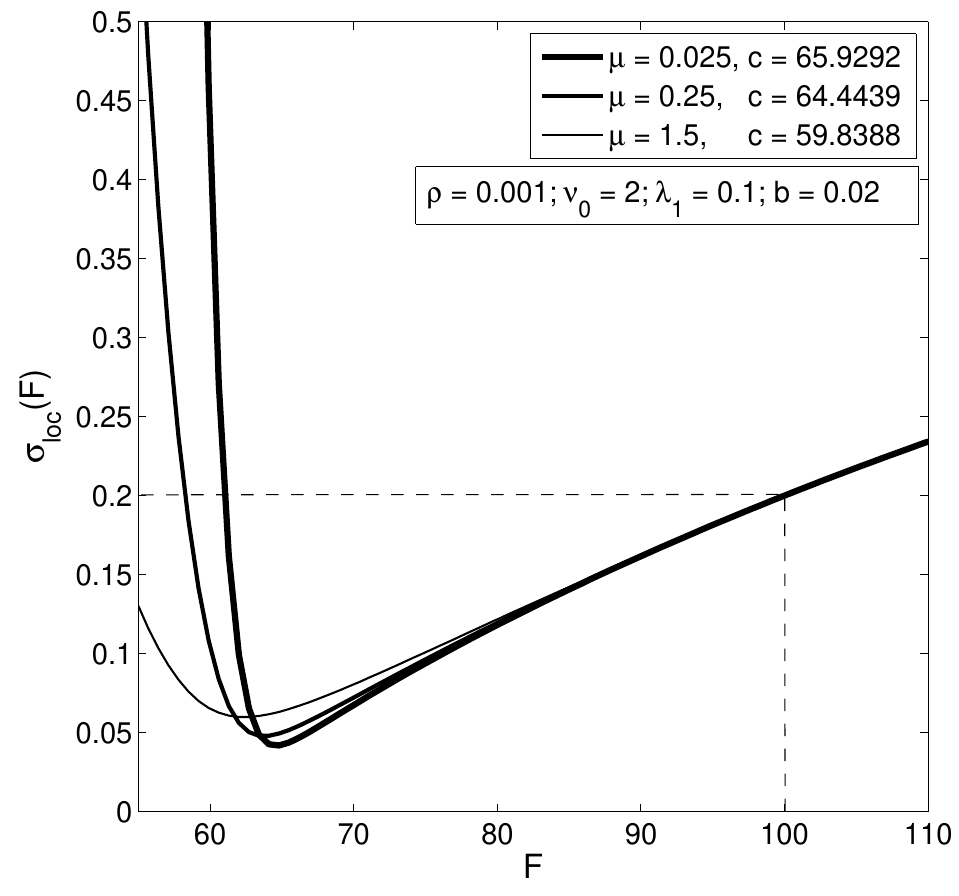}}}%
\subfigure[]{
\resizebox*{0.45\linewidth}{!}{\includegraphics[width=\linewidth]{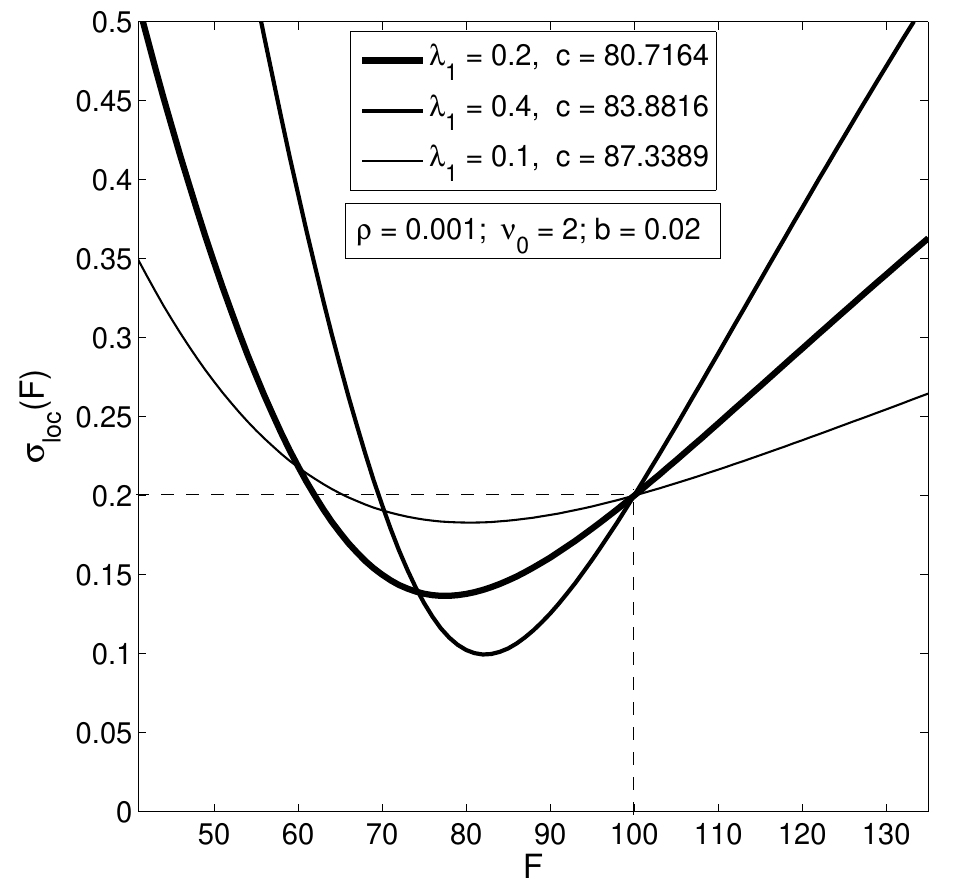}}}%
\caption{Sample local volatility curves for the Confluent-$\U$ (a) and OU (b) models with drift parameters $a=0, b\ne 0$ and other parameters calibrated such that $\sigma_{loc}(F) = 0.20$ at $F=100$.}%
\label{fig2}
\end{center}
\end{figure}

\subsection{Classification of Monotonic Maps} \label{subsect4.2}
In addition to the set of monotonic maps~$\F ^{(1)}$ defined by
(\ref{FMAPpm1}), there are four other classes of
maps, as follows:
\begin{eqnarray}
 \F^{(2)}_\pm(x) \triangleq \epsilon\,c\frac{\varphi^\pm_{\rho+b}(x)}{\varphi^\pm_{\rho}(x)}\,,
 && \F^{(3)}_\pm(x) \triangleq
 \epsilon\,\frac{c_1\varphi^+_{\rho+b}(x)+c_2\varphi^-_{\rho+b}(x)}{\varphi^\mp_{\rho}(x)}\,, \label{FMAPpm23}  \\
 \F^{(4)}_\pm(x) \triangleq
 \epsilon\,\frac{\varphi^\pm_{\rho+b}(x)}{q_1\varphi^+_{\rho}(x)+q_2\varphi^-_{\rho}(x)}\,,
 && \F^{(5)}(x) \triangleq  \epsilon\,\frac{c_1\varphi^+_{\rho+b}(x)-c_2\varphi^-_{\rho+b}(x)}{q_1\varphi^+_{\rho}(x)+q_2\varphi^-_{\rho}(x)}\,, \label{FMAPpm45}
\end{eqnarray}
where $\rho,\rho + b,c,c_1,c_2>0$, $\epsilon=\pm 1$, and $q_1,q_2>0$ with the only exception of $\F ^{(5)}$ for which one of $q_1,q_2$ may be zero.
Notice that we omit the additive term $-\frac{a}{b}$.

As follows from Proposition~\ref{prop2}, a function $\F^{(k)}$ for $k=1,2,3,4$ defines a monotonic map if and only if the
Wronskian $W(x)= W[\hat{u}_\rho,\hat{v}_{\rho+b}](x)$
has the same sign in neighbourhoods of both endpoints $l$ and $r$. For the map $\F^{(5)}$ in (\ref{FMAPpm45}), monotonicity follows from Proposition~\ref{prop3} when $b>0$. If
$b<0$, then we again need to analyse the asymptotic behaviour of
$W(x)$, and hence of $W[\varphi^\pm_{\rho},\varphi^\pm_{\rho + b}](x)$, as $x\to l+$ and $x\to r-$. The asymptotics of such Wronskians
are given in Proposition~\ref{aprop1} of Appendix~\ref{sect_a1}. The following lemma summarizes the monotonicity properties for all
families of maps presented so far.
\begin{lemma} \label{lemma_map_class}
Let $\varphi^\pm$ be the fundamental solutions for the SQB, CIR, or OU diffusion process.
The maps $\F^{(l)}$, $l=1,2,\ldots,5$, defined in (\ref{FMAPpm1}), (\ref{FMAPpm23}) and (\ref{FMAPpm45}), are strictly monotonic under the following conditions:
$\F^{(1,2)}$ --- for all choices of parameters;
$\F^{(3)}$  --- if $b<0$;
$\F^{(4)}$ --- if $b>0$;
$\F^{(5)}$ --- if and only if $b>0$.
The derivative $\mathrm{d\F}^{(l)}_\pm/\mathrm{d}x$, $l=1,3,4$, has sign = $\pm \epsilon$.
For cases 2 and 5 we have: $\sgn(\mathrm{d\F}^{(2)}_\pm/\mathrm{d}x)=\pm \epsilon\,\sgn(b)$ and  $\sgn(\mathrm{d\F}^{(5)}/\mathrm{d}x)=\epsilon$.
 \end{lemma}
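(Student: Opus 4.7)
My plan is to dispatch each family $\F^{(l)}_\pm$ by identifying the relevant $\hat{u}_\rho$ and $\hat{v}_{\rho+b}$ in the quotient representation \eqref{FMAP}, applying Proposition~\ref{prop2} or~\ref{prop3} to reduce the monotonicity question to the sign of the Wronskian $W(x)=W[\hat{u}_\rho,\hat{v}_{\rho+b}](x)$ in neighbourhoods of the endpoints, and finally reading off those endpoint signs from the asymptotic formulas for $W[\varphi^{\pm}_\rho,\varphi^{\pm}_{\rho+b}](x)$ collected in Appendix~\ref{sect_a1} (Proposition~\ref{aprop1}). Once $\sgn W(x)$ is known on $\I$, the sign of $\F'(x)$ follows from \eqref{Fder} since $\hat{u}^2_\rho(x)>0$, and the announced overall signs $\pm\epsilon$ (resp.\ $\pm\epsilon\,\sgn(b)$, $\epsilon$) come out after factoring the constants in front of each Wronskian decomposition in \eqref{Wfuncpm}.

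For $\F^{(1)}_\pm$, the assertion has essentially already been obtained in the discussion preceding \eqref{FMAPpm1}: the relevant Wronskian is a single term $W[\varphi^-_\rho,\varphi^+_{\rho+b}]$ or $W[\varphi^+_\rho,\varphi^-_{\rho+b}]$, whose strict sign on all of $\I$ is a consequence of the strict monotonicity of the ratios $\varphi^+_{\rho+b}/\varphi^-_\rho$ and $\varphi^-_{\rho+b}/\varphi^+_\rho$. For $\F^{(2)}_\pm$, we have $\hat{v}_{\rho+b}=\epsilon c\varphi^\pm_{\rho+b}$, a positive multiple (up to the sign $\epsilon$) of one of the fundamental solutions, so Proposition~\ref{prop2} applies; I then invoke the asymptotics of $W[\varphi^+_\rho,\varphi^+_{\rho+b}]$ and $W[\varphi^-_\rho,\varphi^-_{\rho+b}]$ at both endpoints from Appendix~\ref{sect_a1} to confirm that their signs agree at $l+$ and $r-$ regardless of the sign of $b$. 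The sign of $\F^{(2)\prime}_\pm$ requires a small extra step: the integral identity \eqref{wronskian_solution} together with $\hat{u}_\rho,\hat{v}_{\rho+b}$ of the same sign forces $W/\s$ to be monotone in $x$ with monotonicity dictated by $\sgn b$, which yields the announced $\pm\epsilon\,\sgn(b)$.

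For $\F^{(3)}_\pm$ and $\F^{(4)}_\pm$, one of the two pairs $(c_1,c_2)$ or $(q_1,q_2)$ has both entries positive, so one of $\hat{u}_\rho$, $\hat{v}_{\rho+b}$ is a sign-definite combination of $\varphi^+$ and $\varphi^-$. Proposition~\ref{prop2} again reduces matters to comparing $\sgn W(l+)$ and $\sgn W(r-)$ of the four-term sum \eqref{Wfuncpm}. Using Appendix~\ref{sect_a1}, one term dominates at each endpoint; for the SQB, CIR, and OU cases the dominant terms at $l+$ and $r-$ carry the same sign precisely when $b<0$ for $\F^{(3)}$ and precisely when $b>0$ for $\F^{(4)}$. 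The main bookkeeping here is to identify which term in \eqref{Wfuncpm} is dominant at each boundary; this is straightforward case-by-case from the tables in the appendix. I expect this step to be the main (but entirely mechanical) obstacle.

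Finally, $\F^{(5)}$ falls into the setting of Proposition~\ref{prop3}: the coefficients $(c_1,-c_2)$ have opposite signs, so $\hat{v}_{\rho+b}$ has a unique interior zero $\hat{x}_0$. For $b>0$, part~(i) of Proposition~\ref{prop3} immediately gives $W\ne 0$ on $\I$, so $\F^{(5)}$ is strictly monotonic with sign $\epsilon$. For $b<0$ one must check part~(ii): using the asymptotics of $W[\varphi^\pm_\rho,\varphi^\pm_{\rho+b}]$ from Appendix~\ref{sect_a1}, in each of the three underlying processes the signs of $W(l+)$ and $W(r-)$ disagree with $\sgn W(\hat{x}_0)$, so the integral representation \eqref{wronskian_solution} forces $W$ to vanish somewhere in $\I$, proving non-monotonicity. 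This ``if and only if'' direction for $\F^{(5)}$ is the subtlest point of the lemma, and it is the part I expect to require the most careful endpoint accounting, but no new ideas beyond those already collected in Propositions~\ref{prop1}--\ref{prop3} and Appendix~\ref{sect_a1}.
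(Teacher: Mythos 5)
Your proposal follows exactly the route the paper takes: the paper's entire proof of Lemma~\ref{lemma_map_class} is the single sentence that the result ``follows directly from Propositions~\ref{prop2}, \ref{prop3} and \ref{aprop1}'', and your case-by-case reduction of each map to the sign of $W[\hat{u}_\rho,\hat{v}_{\rho+b}]$ at the endpoints via Propositions~\ref{prop2}--\ref{prop3} together with the Appendix asymptotics is precisely the elaboration of that sentence. Your writeup is in fact more detailed than the paper's own, and the endpoint bookkeeping you identify as the remaining mechanical work is exactly what the paper leaves implicit.
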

\begin{proof}
The proof follows directly from
Propositions~\ref{prop2}, \ref{prop3} and \ref{aprop1}, and asymptotics provided in Appendix~\ref{sect_a1}.
\end{proof}

\section{The Bessel Family of Mean-Reverting $F$-Diffusions} \label{sect5}

Let us come back to the $X^{(\rho)}$-process with generator~(\ref{Generator_rho}). Consider a strictly monotonic twice continuously differentiable map
$\F$ with inverse $\X$. Such a map generates a diffusion process $\{F_t \triangleq \F(X^{(\rho)}_t), t\ge 0\}$ with generator in (\ref{GeneratorF}).
We recall that a linear-drift $F$-diffusion is obtained by requiring that $\F$ solves equation (\ref{FODE}). An alternative approach is to fix the diffusion coefficient $\widetilde{\sigma}(x)=\sigma(\F(x))$ and then find $\F$ by integrating the derivative $\F'$. For simplicity we assume that $\widetilde{\sigma}(x)$ is a combination of elementary functions such as power and exponential functions.

Here we are interested in diffusions with nonlinear mean-reverting drift $\alpha(F)$. That is there exists $F_0\in(F^l,F^r)$ so that $\alpha(F_0)=0$ and $\alpha(F_1)>0>\alpha(F_2)$ for all $F_1,F_2\in(F^l,F^r)$ with $F_1<F_0<F_2$. The drift coefficient $\alpha$ is a continuous function and it is sufficient to check whether $\sgn(\alpha(F^l+)=1$ and $\sgn(\alpha(F^r-))=-1$ hold.
To find maps that generate such diffusions, we analyze the asymptotic behavior of the function $\widetilde{\alpha}(x)=\alpha(\F(x))=(\mathcal{G}^{(\rho)} \F)(x)$ as $x$ approaches $l$ or $r$. There are two cases. If $\F'>0$ (i.e. the mapping $\F$ is increasing), then we need to verify that $\sgn(\widetilde{\alpha}(l+))=1$ and $\sgn(\widetilde{\alpha}(r-))=-1$. If $\F'<0$ (i.e. the mapping $\F$ is decreasing), we verify that $\sgn(\widetilde{\alpha}(l+))=-1$ and $\sgn(\widetilde{\alpha}(r-))=1$. Here we assume that if $f(x)\downarrow 0$ (or $f(x)\uparrow 0$) as $x\to e$ (where $e\in\{l+,r-\}$), then $\sgn(f(e))=1$ (or $\sgn(f(e))=-1$).

Below, as explicit examples, we consider families of nonlinear mean-reverting $F$-diffusions generated from the SQB process. Clearly any other underlying diffusion generates its own family of models.

\subsection{One Special Case with a Linear Mapping} \label{sect5.1}
Consider one special case where the derivative $\F'$ is constant, i.e. $F_t=a + bX_t^{(\rho)}$ for some real $a$ and $b$, with $X^{(\rho)}$-process
of subfamily (ii) generated from the SQB process with $\hat{u}_\rho(x)=\varphi^-_\rho(x) = x^{-\mu/2}K_{\mu}(2\sqrt{2\rho x}/\nu_0)$,
$\mu\in\R, \nu_0>0$. Without loss of generality, let $a=0,b=1$ and consider the case where $F_t=X_t^{(\rho)}\in\R_+$, i.e. with map $\F(x) = x$, $\X(F) = F$.
By (\ref{drift_F2}), the drift is now $\alpha(F) = \lambda_0 + \nu_0^2 F \hat{u}_\rho'(F)/\hat{u}_\rho(F)$, $\lambda_0 = \nu_0^2(1+\mu)/2$.
By the differential recurrence property of the Bessel-K function, $\hat{u}_\rho'(x)
= -(\sqrt{2\rho x}/\nu_0)x^{-1-\mu/2}K_{\mu+1}(2\sqrt{2\rho x}/\nu_0)$. Hence, the $F$-diffusion satisfies the SDE
\[ dF_t = \alpha(F_t) dt + \nu_0\sqrt{F_t}dW_t,\,\,\mbox{  where  }\,\alpha(F) = \lambda_0 - \nu_0\sqrt{2\rho F} \frac{K_{\mu+1}(2\sqrt{2\rho F}/\nu_0)}{K_{\mu}(2\sqrt{2\rho F}/\nu_0)}. \]

Using large and small argument asymptotic properties of the Bessel $K$ function gives the following limits
 for the drift coefficient:
\[ \alpha(0+) = \left\{ \begin{array}{ll}
                            \nu_0^2\left(1-\mu\right)/2, & \mu\ge 0  \\
                            \nu_0^2\left(1+\mu\right)/2, & \mu< 0
                          \end{array} \right. \quad\mbox{and}\quad
   \alpha(F) \sim \lambda_0-\nu_0\sqrt{2\rho F}, \mbox{ as }F\to+\infty. \]
Therefore, the above model admits a mean-reverting drift coefficient for all $\mu\in(-1,1)$: the~drift approaches a positive constant at the left endpoint $F^l=0$ and becomes increasingly negative in proportion to the square-root of the process value as it approaches the right endpoint $F^r=\infty$. The diffusion coefficient is a square-root function: $\sigma(F)=\nu_0\sqrt{F}$. Thus, this model can be viewed as a (nonlinear drift) modification of the CIR short-rate model.

We recall from Lemma \ref{lemma_class_Bessel} that the above process $X_t^{(\rho)}\in\R_+$ (with $q_1=0, q_2=1$) has origin as
regular when $|\mu| < 1$ and exit when $|\mu|\geq 1$. As an interest (short)-rate model with mean-reversion,
we take $|\mu| < 1$. In this case, the origin is regular and may be specified as either killing or reflecting. Specifically,
we now specify the origin as killing for $|\mu| < 1$. Hence, the transition PDF is given explicitly by (\ref{PDF_Bessel_2}), i.e.
$p^{(\rho)}_X(t;x_0,x)$ has the form in (\ref{prho}) with $\hat{u}_\rho(x) = x^{-\mu/2}K_{\mu}(2\sqrt{2\rho x}/\nu_0)$ and
$p_X(t;x_0,x)$ given by (\ref{PrkernelSQB}) with $\tilde{\mu} = |\mu|$.
Based on this connection between the $X^{(\rho)}$ and SQB processes, the following result provides us with a closed-form integral expression for all $\mu\in\R$ that may be useful for pricing a zero-coupon bond in case $|\mu| < 1$ and wherein the interest rate process $R_t$ is modeled as $R_t \equiv F_t\triangleq   X_t^{(\rho)}, t\ge 0$.
\begin{lemma}\label{lemma_ZCBP4BK}
Consider the Bessel subfamily (ii) of $X^{(\rho)}$-processes started at $X_0^{(\rho)}= x_0>0$,
with transition PDF in (\ref{PDF_Bessel_2}) where $q_1=0, q_2=1$.
Let $f:\R_+ \to \R$ and assume $\E_{x_0}[|f(X^{(\rho)}_t)| ] < \infty$, $t\ge 0$, $\lambda > 0$, then
 \begin{equation}\label{xZCBP4BK}
    \E_{x_0}\left[f(X^{(\rho)}_t) e^{-\lambda \int_0^t X^{(\rho)}_u du}\right]
    = {A_t e^{-\rho t -x_0 B_t}\over K_{\mu}(2\sqrt{2\rho x_0}/\nu_0)}
    \int_0^\infty  e^{-x B_t} f(x) I_{|\mu|}\big(2A_t\sqrt{x_0x}\big)K_{\mu}\big(2\sqrt{2\rho x}/\nu_0\big)\,dx,
 \end{equation}
where $A_t = \frac{\sqrt{2\lambda}/\nu_0}{\sinh\big(\nu_0 t\sqrt{\lambda/2}\big)}$, $B_t =\frac{\sqrt{2\lambda}}{\nu_0}\coth\big(\nu_0 t\sqrt{\lambda/2}\big)$.
\end{lemma}
\begin{proof}
By conditioning on the terminal value, $\E_{x_0}[f(X^{(\rho)}_t) e^{-\lambda \int_0^t X^{(\rho)}_u du}]$ is given by
$$\E_{x_0}\left[f(X^{(\rho)}_t) \,\E_{x_0}\big[e^{-\lambda \int_0^t X^{(\rho)}_u du}\mid X_t^{(\rho)}\big]\,\right]
= \int_0^\infty p^{(\rho)}_X(t;x_0,x) f(x) \E_{x_0}\big[e^{-\lambda \int_0^t X^{(\rho)}_u du}\mid X_t^{(\rho)} = x\big]dx$$
where the conditional expectation is equivalent to the expectation of the
negative exponential of the time integrated {\it Bessel Bridge $X^{(\rho)}$-process} that is started at $x_0$ and pinned at $x$ at time $t$.
Now, since $p^{(\rho)}_X(t;x_0,x)$ has the form in (\ref{prho}), where $p_X(t;x_0,x)$ is given by (\ref{PrkernelSQB}) with $\tilde{\mu} = |\mu|$, it follows
that the Bessel Bridge $X^{(\rho)}$-process has the same probability law as the corresponding Squared Bessel (SQB) Bridge process.
That is, given any partition $0<t_1< t_2 < ... <t_n < t$, the path probability densities of the
Bridge $X^{(\rho)}$-process and corresponding SQB Bridge $X$-process are equivalent:
$p^{(\rho)}_X(t_1;x_0,x_1) p^{(\rho)}_X(t_2-t_1;x_1,x_2)\times ...\times p^{(\rho)}_X(t-t_n;x_n,x)/p^{(\rho)}_X(t;x_0,x) =
p_X(t_1;x_0,x_1) p_X(t_2-t_1;x_1,x_2)\times ...\times p_X(t-t_n;x_n,x)/p_X(t;x_0,x)$.
Hence, the conditional expectation reduces to that of the SQB Bridge process, i.e.
$\E_{x_0}[e^{-\lambda\int_0^t X^{(\rho)}_u du}\mid X_t^{(\rho)} = x] = \E_{x_0}[e^{-\lambda\int_0^t X_u du}\mid X_t = x]$.
The latter is given explicitly (e.g. see equation (2.m) in \cite{Pitman} or page 76 in \cite{BS02}, which we adapt for all $\nu_0>0$; see also \cite{RY99}):
$$\E_{x_0}\left[e^{-\lambda \int_0^t X_u du}\mid X_t = x\right] =
\frac{\sqrt{\lambda}\tau}{\sinh \sqrt{\lambda}\tau}
\exp\left\{{2(x+x_0)\over \nu_0^2 t}\left[1 - \sqrt{\lambda}\tau\coth\sqrt{\lambda}\tau\right]\right\}
\frac{I_{|\mu|}(\frac{2\sqrt{2\lambda x_0 x}}{\nu_0\sinh \sqrt{\lambda}\tau } )}
{I_{|\mu|}\big(4\sqrt{x_0x}/\nu_0^2 t \big)},$$
$\tau = \nu_0 t/\sqrt{2}$.
Equation (\ref{xZCBP4BK}) is now obtained by inserting this expression and the explicit form of $p^{(\rho)}_X(t;x_0,x)$ into the above integral.
\end{proof}

\subsection{The Case with a Power Mapping} \label{sect5.2}

Let us consider the important case of a power-type mapping function. Suppose that $\widetilde\sigma(x)=|\delta|x^{\beta}$ with $\delta,\beta\in\R$, $\delta\neq 0$, then the derivative of the mapping is
\[ \F'_p(x) = \pm\frac{\widetilde\sigma(x)}{\nu_0\sqrt{x}}=\frac{\delta x^{\beta}}{\nu_0\sqrt{x}} = \frac{\delta}{\nu_0}x^{\beta-\frac{1}{2}}.\]
As is seen from the above equation, $\F_e$ is monotonically \emph{increasing} if $\delta>0$; it is monotonically \emph{decreasing} if $\delta<0$.
Assume that $\beta\neq \frac{1}{2}$ since for $\beta=\frac{1}{2}$ the power-type mapping reduces to a linear mapping studied in Section~\ref{sect5.1}. The mapping is then obtained by integration of $\F'_p$ over $(0,x)$ if $\F_p$ is increasing (or over $(x,\infty)$ if $\F_p$ is decreasing):
\begin{equation}\label{powerFmap}
    \F_p(x) = \frac{\delta x^{\beta+0.5}}{\nu_0(\beta+0.5)}.
\end{equation}
The condition $(\beta+0.5)\delta>0$ guarantees that the function $\F_p: \R_+ \to \R_+$ is monotonic.
Such a mapping produces a mean-reverting model under the conditions stated in Lemma~\ref{lemma_SQBpwr}. Plots of typical drift and diffusion coefficients,  $\alpha$ and $\sigma$, are given in Figure~\ref{fig3}.

\begin{figure}[ht]
\begin{center}
\subfigure[]{
\resizebox*{0.45\linewidth}{!}{\includegraphics[width=\linewidth]{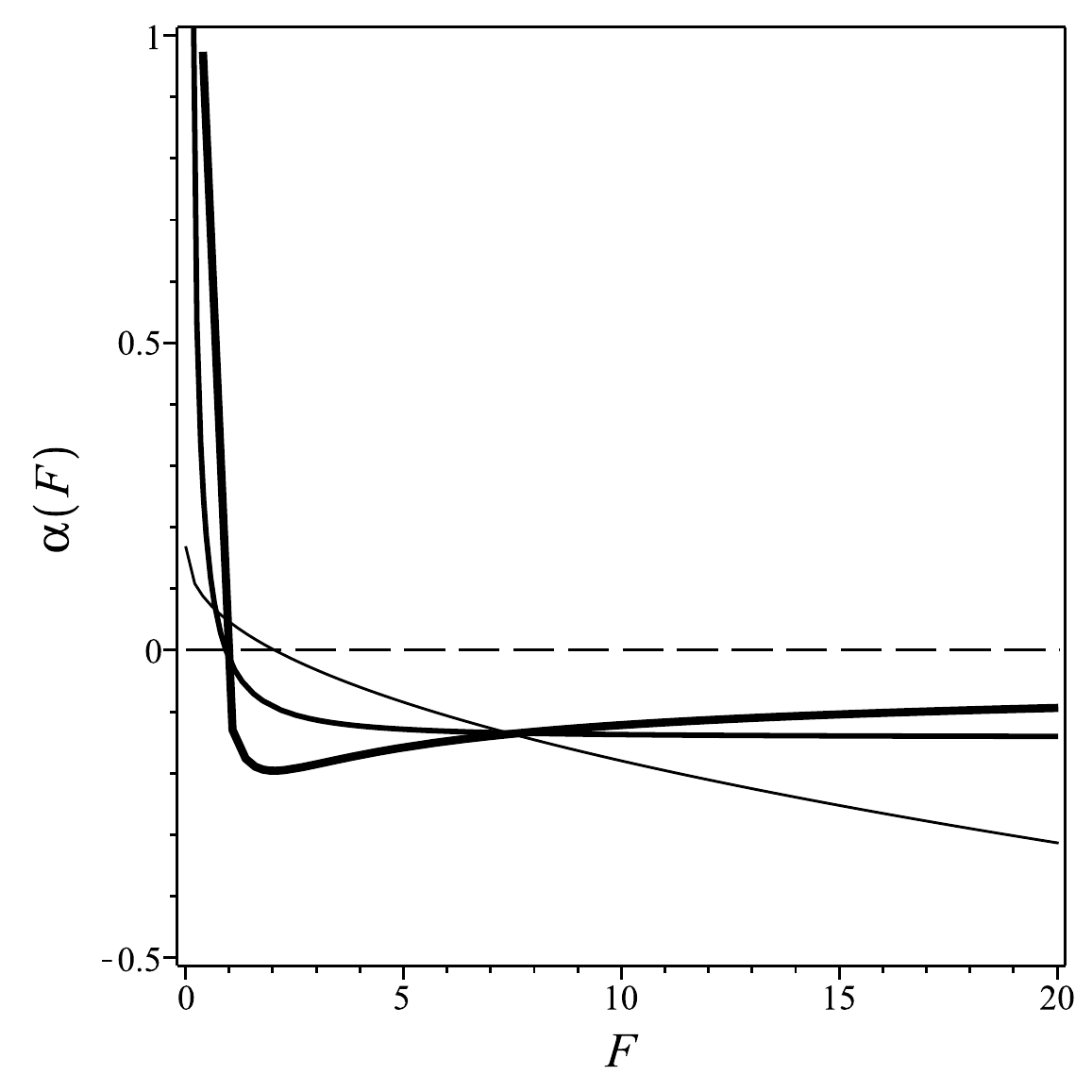}}}%
\subfigure[]{
\resizebox*{0.45\linewidth}{!}{\includegraphics[width=\linewidth]{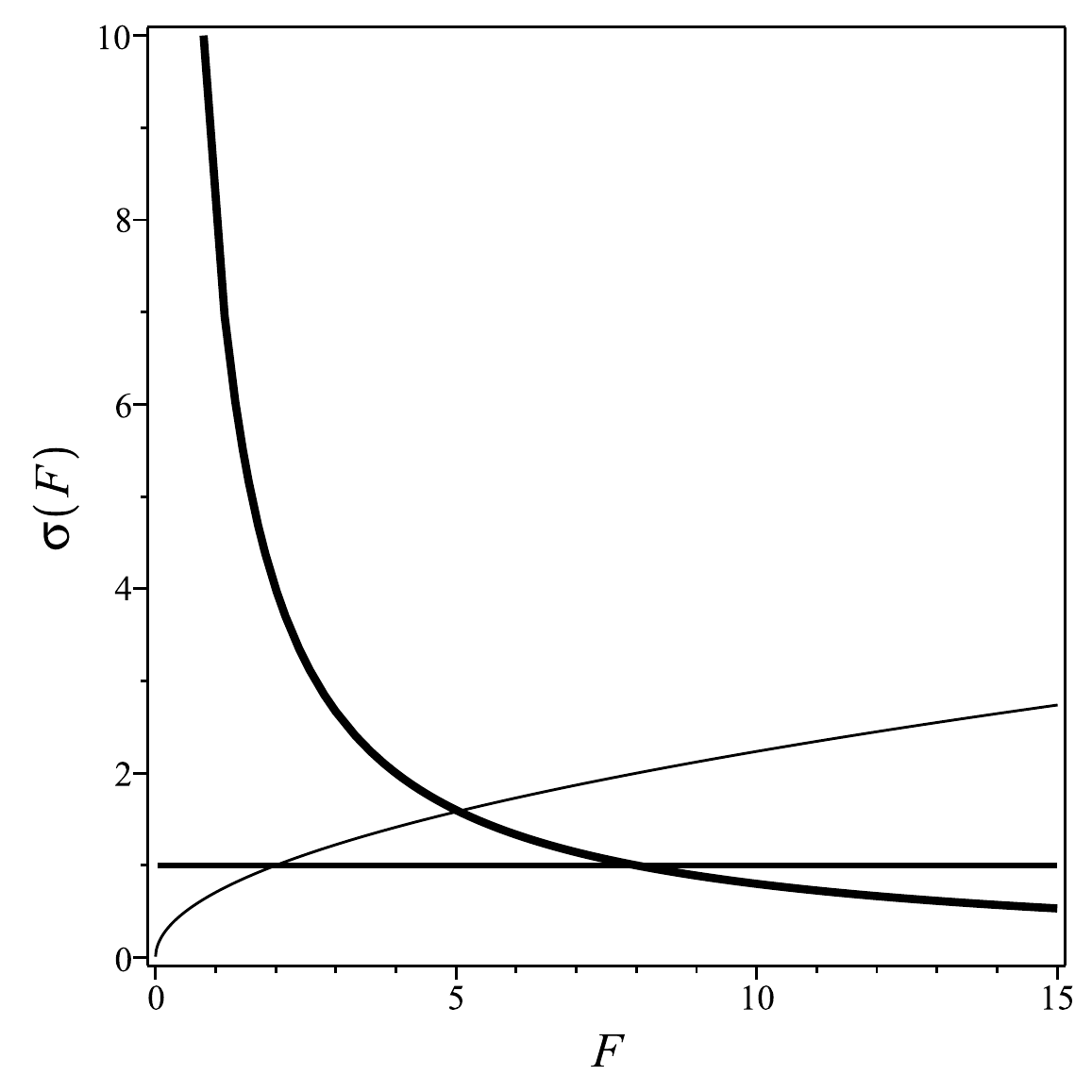}}}%
\caption{The drift and diffusion coefficient functions, $\alpha$ and $\sigma$, of the Bessel mean-reverting models with parameters $q_1=0$, $q_2=1$, $\nu_0=0.5$, $\mu=0.25$, $\rho=0.01$, $\delta=1$, and $\beta=0.5$ (thin line), or $\beta=0$ (moderate line) or $\beta=-0.125$ (thick line).}%
\label{fig3}
\end{center}
\end{figure}

\begin{lemma} \label{lemma_SQBpwr}
  $F$-diffusions generated from the SQB $X^{(\rho)}$-process with the use of the mapping function $\F_p$ defined in~(\ref{powerFmap}) have the following properties.
  \begin{enumerate}[(i)]
    \item $\{q_1=0,\,q_2>0\}$: The diffusion $F_t=\F_p(X_t^{(\rho)})$ is non-exploding and mean-reverting if $\delta>0$, $\beta>-0.5$ (i.e. $\F$ is monotonically increasing), and $|\mu|<\beta+0.5$. The diffusion coefficient is
        \[ \sigma(F)=\sigma_0 F^{1-(2\beta+1)^{-1}},\mbox{ where } \sigma_0=\delta(\nu_0(\beta+0.5)/\delta)^{(\beta+0.5)^{-1}}.\]
        The drift coefficient $\alpha$ has the following asymptotic properties:
        \[ \alpha(F)\sim \alpha_0 F^{1-(\beta+0.5)^{-1}}, \mbox{ as }F\to 0+,\mbox{ and } \alpha(F)\sim -\alpha_1 F^{1-(2\beta+1)^{-1}}, \mbox{ as }F\to \infty, \]
        where $\alpha_0=\nu_0\delta(\beta-|\mu|+0.5)/2$ and $\alpha_1=\delta\sqrt{2\rho}$ are positive constants.
    \item $\{q_1>0,q_2\geq 0\}$: The diffusion $F_t=\F_p(X_t^{(\rho)})$ is not mean-reverting for every choice of $\F_p$.
  \end{enumerate}
\end{lemma}
\begin{proof}
The proof follows directly from asymptotic properties provided in Appendix~\ref{sect_a1}.
\end{proof}

\subsection{The Case with an Exponential Mapping} \label{sect5.3}

Another possible choice of the mapping is a combination of power and exponential functions. Let us consider a mapping $\F_e:\R_+\to\R_+$ with derivative
\begin{equation}\label{expFmapder}
    \F'_e(x)  = \frac{\delta x^\beta \exp(\theta x^\gamma)}{\nu_0\sqrt{x}} = \frac{\delta}{\nu_0} x^{\beta-\frac{1}{2}} \exp(\theta x^\gamma),
\end{equation}
where $\theta\neq 0$ and $\gamma\neq 0$ (otherwise this case reduces to that with a power-type mapping considered in Section~\ref{sect5.2}). Here we assume that $\beta\neq \frac{1}{2}$. The case with a purely exponential mapping can be studied similarly.
The mapping is monotonically \emph{increasing} if $\delta>0$; it is monotonically \emph{decreasing} if $\delta<0$.
$\F_e$ is obtained by integrating its derivative in (\ref{expFmapder}). If $\F_e$ is increasing ($\delta>0$) then $\F_e$ obtains via integration of (\ref{expFmapder}) over $(0,x)$. Zero is an integrable singularity if either $\{\beta>-\frac{1}{2}, \theta>0, \gamma>0\}$ or $\{\theta<0, \gamma<0\}$. If $\F_e$ is decreasing ($\delta<0$) then the mapping is given by integration of (\ref{expFmapder}) over $(x,\infty)$. The derivative $\F_e'$ is integrable at infinity if
either $\{\beta<-\frac{1}{2}, \theta>0, \gamma<0\}$ or $\{\theta<0, \gamma>0\}$.

Therefore, to guarantee that the function $\F_e$ with its derivative given in (\ref{expFmapder}) is monotonic and maps $\R_+$ onto $\R_+$,
we have the following sets of conditions.
For a monotonically increasing $\F_e$, $\delta>0$, $\beta> -\frac{1}{2}$, and either $\{\theta>0, \gamma>0\}$ or $\{\theta<0, \gamma<0\}$ hold. For a monotonically decreasing $\F_e$, $\delta<0$, $\beta<-\frac{1}{2}$, and either $\{\theta>0, \gamma<0\}$ or $\{\theta<0, \gamma>0\}$ hold.
The resulting mapping has the following properties:
\begin{equation}
    \label{expFmap}
    \F_e(x)   \propto x^{\beta-\gamma+0.5} e^{\theta x^\gamma} \; \begin{array}{l} \text{ as }x\to 0+ \text{ and }\gamma<0 \text{ or}\\ \text{ as }x\to \infty\text{ and }\gamma>0;  \end{array} \quad 
    \F_e(x)   \propto x^{\beta+0.5} \; \begin{array}{l}\text{ as }x\to 0+ \text{ and }\gamma>0\text{ or}\\ \text{ as }x\to \infty \text{ and }\gamma<0. \end{array}
\end{equation}

\begin{lemma} \label{lemma_SQBexp}
  The $F$-diffusion generated from the SQB $X^{(\rho)}$-process with the use of the mapping function $\F_e$ is non-exploding and mean-reverting in the following cases.
  \begin{enumerate}[(i)]
    \item $\{q_1=0,\,q_2>0\}$: The mapping $\F_e$ is monotonically increasing with $\delta>0$, $\beta>-\frac{1}{2}$, $\theta>0$, and $\gamma\in(0,\frac{1}{2})$. The diffusion and drift coefficients, $\sigma$ and $\alpha$, have the following asymptotic properties:
        \[ \begin{array}{l}
             \sigma(F)\sim\sigma_0 F^{1-(2\beta+1)^{-1}}\\
             \alpha(F)\sim \alpha_0 F^{1-(\beta+0.5)^{-1}}
            \end{array} \mbox{ as } F\to 0+; \mbox{  }
            \begin{array}{l}
              \sigma(F)\sim\sigma_1 F (\ln F)^{1-(2\gamma)^{-1}}\\
              \alpha(F)\sim -\alpha_1 F (\ln F)^{1-(2\gamma)^{-1}}
            \end{array}
            \mbox{ as }F\to \infty,
        \]
        where $\sigma_0$, $\sigma_1$, $\alpha_0$, and $\alpha_1$ are positive constants ($\alpha_0$ and $\sigma_0$ are the same as those in Lemma~\ref{lemma_SQBpwr}).
    \item $\{q_1>0,q_2=0\}$: The mapping $\F_e$ is monotonically decreasing with $\delta<0$, $\beta<-0.5$, $\theta<0$, $\gamma\in(0,\frac{1}{2})$, and $\mu>-(\beta+0.5)$ (hence, $\mu>0$). The coefficients $\sigma$ and $\alpha$ satisfy
        \[ \begin{array}{l}
               \sigma(F)\sim\sigma_0 F |\ln F|^{1-(2\gamma)^{-1}}\\
               \alpha(F)\sim \alpha_0 F |\ln F|^{1-(2\gamma)^{-1}}
            \end{array} \mbox{ as } F\to 0+; \mbox{ }
            \begin{array}{l}
               \sigma(F)\sim\sigma_1 F^{1-(2\beta+1)^{-1}}\\
               \alpha(F)\sim -\alpha_1 F^{1-(\beta+0.5)^{-1}}
            \end{array}
            \mbox{ as }F\to \infty,
        \]
         where $\sigma_0$, $\sigma_1$, $\alpha_0$, and $\alpha_1$ are positive constants ($\sigma_1$ is the same as $\sigma_0$ in Lemma~\ref{lemma_SQBpwr}).
    \item $\{q_1\geq 0,q_2\geq 0\}$: The mapping $\F_e$ is monotonically increasing with $\delta>0$, $\beta>-\frac{1}{2}$, $\theta<0$, and $\gamma<0$. The coefficients $\sigma$ and $\alpha$ satisfy
        \[ \begin{array}{l}
             \sigma(F)\sim\sigma_0 F |\ln F|^{1-(2\gamma)^{-1}}\\
             \alpha(F)\sim \alpha_0 F |\ln F|^{2-1/\gamma}
            \end{array} \mbox{ as } F\to 0+; \mbox{  }
            \begin{array}{l}
               \sigma(F)\sim\sigma_1 F^{1-(2\beta+1)^{-1}}\\
               \alpha(F)\sim -\alpha_1 F^{1-(\beta+0.5)^{-1}}
            \end{array}
            \mbox{ as }F\to \infty,
        \]
         where $\sigma_0$, $\sigma_1$, $\alpha_0$, and $\alpha_1$ are positive constants ($\sigma_1$ is the same as $\sigma_0$ in Lemma~\ref{lemma_SQBpwr}).

    \end{enumerate}
\end{lemma}
\begin{proof}
The proof follows directly from asymptotic properties provided in Appendix~\ref{sect_a1}.
\end{proof}


\section{Conclusions} \label{sect7}
By applying a diffusion canonical transformation method, which combines special classes of monotonic mappings
and measure changes via Doob-h transforms, this paper developed various new families of exactly solvable multiparameter
one-dimensional time-homogeneous diffusion models.
These multiparameter families of solvable diffusions are
generally divided into two main classes; the first class is specified by having affine (linear) drift
with various resulting nonlinear diffusion coefficient functions, while the second class
allows for various specifications of a (generally nonlinear) diffusion
coefficient with a resulting nonlinear drift function. The present paper hence
significantly extends the diffusion canonical transformation methodology to include all of the more restrictive families
of {\it driftless} diffusions that were obtained and applied in previous literature (e.g., see \cite{ACCL,CM06}).
Moreover, the second main class of diffusions gives rise to various solvable models with
both nonlinear diffusion and nonlinear drift coefficients. In particular, within this second class of diffusions we have found
some explicitly solvable diffusion families with a nonlinear mean-reverting drift.
By combining the closed-form transition probability densities for the Doob-h transformed processes and
the fact that an underlying bridge process and its Doob-h transformed bridge process have equivalent probability laws, we derived some
closed-form integral formulas for conditional expectations of functionals involving the discount factor of the process and the process terminal value.
An explicit example involving the Squared Bessel process is given. The formulas are readily applicable to standard bond and bond option pricing.

This paper has also presented the construction of the Green functions for the Doob-h transformed processes and has given
a complete boundary classification of such processes that can generally have singular and/or non-singular (regular) endpoints.
Various closed-form transition densities for these transformed processes then followed simply by Laplace inverting the relevant Green functions.
For Doob-h transformed processes with imposed boundary conditions (e.g. regular killing or reflecting) at any interior point(s) of the diffusion,
the Laplace inversion formulation in this paper is also readily applicable. This method produces various closed-form spectral expansions of the
transition densities for such cases. Moreover, solvability (e.g. closed-form spectral expansions) is not restricted to only transition densities.
In fact, analytically closed-form spectral expansions of the densities and distributions for the
first hitting (exit) times at any interior level(s), for the extrema of the process, and for
various joint extrema and process terminal value, have recently been derived for all the families of Doob-h transformed
processes developed in this paper. The derivations of such closed-form spectral expansions and their applications in finance
is the subject of a related paper (\cite{Campolieti2008}).

For the first class of affine drift diffusions, yet another important component of this paper is Theorem \ref{theorem1}. In particular,
Corollary \ref{corollary1} provides us with a simple way to classify the respective discounted processes in terms of the martingale property.
This is of importance in the standard realm of arbitrage-free equity derivative pricing where discounted asset price processes are required to be
martingales in a given (risk-neutral) equivalent martingale measure.
We have presented three new explicit families of such solvable diffusions,
named \emph{Bessel}, \emph{confluent hypergeometric}, and \emph{Ornstein-Uhlenbeck} families.
These affine drift processes, having nonlinear (skew and smile-like) local volatility functions,
are useful for modeling asset prices and derivative pricing.
In particular, for a number of subfamilies of these models the discounted asset prices are martingales while for
other (dual) subfamilies the discounted asset prices are strict supermartingales. Analytically exact
closed-form expressions for transition densities and state-price
densities are obtained for these families of nonlinear local
volatility models in terms of known special functions (i.e., modified Bessel, confluent hypergeometric, hypergeometric).
Subfamilies of these new models nest the constant elasticity of variance (CEV) diffusion model
and other related models as special cases, and have been shown to exhibit a wide range of implied volatility surfaces with pronounced smiles and skews (see
\cite{CM06,CM07,CM08} for further details). Moreover, exact path sampling methods are available for all models presented in this paper (see \cite{CM07,MG10}). Examples of actual calibrations of the affine \emph{Ornstein-Uhlenbeck} family of affine drift diffusions to various market option data are contained in \cite{CMV11}.

\appendix

\section{Asymptotics of Fundamental Solutions and Wronskians} \label{sect_a1}
Throughout this appendix, we let $s$, $s_1$ and $s_2$ be
complex-valued parameters with positive real parts and
obtain the leading term asymptotic expressions for $\varphi^\pm_s(x)$, as $x$
approaches either left or right boundary point of the respective process. The expressions follow by using
known asymptotic formulas for either the modified Bessel, confluent hypergeometric,
or parabolic cylinder functions. The corresponding asymptotic forms for the Wronskians then follow:
for the OU process in A.3, these obtain directly from
the asymptotic expressions of $\varphi^\pm_s(x)$ and their derivatives,
whereas for A.1 - A.2 we first make use of
the differential recurrence relations of the modified Bessel (or Kummer functions)
and then apply corresponding asymptotic forms of $\varphi^\pm_s(x)$.

Throughout parts A.1, case (a) of A.2, and A.3 of this appendix we denote
$\upsilon \equiv s/\lambda_1$, $\upsilon_i\equiv s_i/\lambda_1$, $i=1,2$. Throughout this appendix we also adopt the usual notation
$\mu_{+}=\max\{0,\mu\}$, $\mu_{-}=\max\{0,-\mu\}$ and $\Psi$ is the standard digamma function.

\subsection{The SQB process} \label{subsect_a11}
 \begin{enumerate}[(i)]
   \item Asymptotic forms of the fundamental solutions defined by (\ref{SQBfund}):
     \begin{align*}
       &\varphi^+_s(x)\sim \textstyle{\frac{(2s/\nu_0^2)^{|\mu|/2}}{\Gamma(|\mu|+1)}}
       x^{\mu_{-}}, \quad
       \varphi^-_s(x)\sim \textstyle{\frac{\Gamma(|\mu|)x^{-\mu_+}}{2 (2s/\nu_0^2)^{|\mu|/2}}}\indfunc_{\mu\neq 0}+\frac{1}{2}\ln( x^{-1})\indfunc_{\mu=0}
       &\mbox{as } x\to 0\,,  \\
       &\varphi^+_s(x)\sim \textstyle{ \frac{1}{2\sqrt{\pi\sqrt{2s}/\nu_0}}} \ds\frac{e^{2\sqrt{2s x}/\nu_0}}{x^{\mu/2+1/4}}, \quad
       \varphi^-_s(x)\sim \textstyle{\frac{\sqrt{\pi}/2}{\sqrt{\sqrt{2s}/\nu_0}}} \ds\frac{e^{-2\sqrt{2s x}/\nu_0}}{x^{\mu/2+1/4}}
       &\mbox{as } x\to \infty\,,
     \end{align*}

   \item Asymptotic forms of the Wronskian functions as $x\to 0$:
     \begin{align*}
       W[\varphi^+_{s_1},\varphi^+_{s_2}](x) &\sim
        \left(\frac{2\sqrt{s_1s_2}}{\nu_0^2}\right)^{|\mu|}\,
        \frac{2(s_2-s_1)}{\nu_0^2\Gamma(|\mu|+1)\Gamma(|\mu|+2)}x^{2\mu_{-}}\,,
        \\
       W[\varphi^+_{s_1},\varphi^-_{s_2}](x) &\sim
        -\frac{1}{2}\left(\frac{s_1}{s_2}\right)^{|\mu|/2}x^{-\mu-1}\,,
        \\
       W[\varphi^-_{s_1},\varphi^-_{s_2}](x) &\sim
        \left\{\begin{array}{ll}
         \frac{\Gamma(|\mu|)\Gamma(|\mu|-1)}{2\nu_0^2}\,
         \left(\frac{\nu_0^2}{2\sqrt{s_1s_2}}\right)^{|\mu|}\,
         (s_1-s_2)\,x^{-2\mu_+} & \mbox{ if }|\mu|>1\,,\\
         \frac{s_1-s_2}{4\sqrt{s_1s_2}}\,
          x^{-\mu-1}\,\ln\left(\frac{1}{x}\right)& \mbox{ if }|\mu|=1\,,\\
         \frac{1}{4}\,(\ln s_1-\ln s_2)x^{-1}& \mbox{ if }\mu=0\,,\\
         \frac{\Gamma(|\mu|)\Gamma(1-|\mu|)}{4}\,
         \frac{s_1^{|\mu|}-s_2^{|\mu|}}{(s_1s_2)^{|\mu|/2}}\,
          x^{-\mu-1}& \mbox{ if }|\mu|\in(0,1)\,.\\
         \end{array}\right.
     \end{align*}
   \item Asymptotic forms of the Wronskian functions as $x\to
   \infty$:
     \begin{align*}
       W[\varphi^+_{s_1},\varphi^+_{s_2}](x) &\sim
        \frac{1}{4\pi}\frac{\sqrt{s_2}-\sqrt{s_1}}{(s_1s_2)^{1/4}}
        x^{-\mu-1}\,
        e^{2(\sqrt{s_1}+\sqrt{s_2})\sqrt{2 x}/\nu_0}\,,
        \\
       W[\varphi^+_{s_1},\varphi^-_{s_2}](x) &\sim
        -{1\over 4}\frac{\sqrt{s_1}+\sqrt{s_2}}{(s_1s_2)^{1/4}}
        x^{-\mu-1}e^{2(\sqrt{s_1}-\sqrt{s_2})\sqrt{2 x}/\nu_0}\,,
        \\
       W[\varphi^-_{s_1},\varphi^-_{s_2}](x) &\sim
        {\pi\over 4}\frac{\sqrt{s_1}-\sqrt{s_2}}{(s_1s_2)^{1/4}}
        x^{-\mu-1}\,
        e^{-2(\sqrt{s_1}+\sqrt{s_2})\sqrt{2 x}/\nu_0}\,.
     \end{align*}
 \end{enumerate}
 Note that for power functions of the form $s^a$, with $\mathrm{Re\,}s>0$ and $a>0$, the principal value is used.

\subsection{The CIR process} \label{subsect_a12}

{\bf Case: (a)} $\lambda_1 > 0$ ($\kappa > 0$).

 \begin{enumerate}[(i)]
   \item Asymptotic forms of the fundamental solutions defined by (\ref{CIRfund_1_M}) and (\ref{CIRfund_1_U}):
     \begin{align*}
       \varphi^+_s(x)\sim (\kappa x)^{\mu_-}, \qquad
       \varphi^-_s(x)\sim\left\{\begin{array}{ll}
       \frac{\Gamma(\mu)}{\Gamma(\upsilon)}(\kappa x)^{-\mu} & \mbox{ if }\mu>0\\
       \frac{1}{\Gamma(\upsilon)}\ln\left(\frac{1}{x}\right) & \mbox{ if }\mu=0\\
       \frac{\Gamma(-\mu)}{\Gamma(\upsilon-\mu)} & \mbox{ if }\mu<0 \end{array}\right.
       &\mbox{, as } x\to 0\,,  \\
       \varphi^+_s(x)\sim \frac{\Gamma(\vert\mu\vert+1)}{\Gamma(\upsilon + \mu_-)}e^{\kappa x}(\kappa x)^{\upsilon-\mu-1},\qquad
       \varphi^-_s(x)\sim (\kappa x)^{-\upsilon}
       &\mbox{, as } x\to \infty\,. \\
     \end{align*}
   \item Asymptotic forms of the Wronskian functions as $x\to 0$:
     \begin{align*}
       W[\varphi^+_{s_1},\varphi^+_{s_2}](x) &\sim \kappa
       \frac{(\upsilon_2-\upsilon_1)}{1+\vert\mu\vert} (\kappa x)^{2\mu_-}
       \,,\\
       W[\varphi^+_{s_1},\varphi^-_{s_2}](x) &\sim -\kappa
       \frac{\Gamma(1+\vert\mu\vert)}{\Gamma(\upsilon_2 + \mu_-)}(\kappa x)^{-\mu-1}
       \,,\\
       W[\varphi^-_{s_1},\varphi^-_{s_2}](x) &\sim
       \left\{ \begin{array}{ll}
          \frac{\Gamma(|\mu|)\Gamma(|\mu|-1)}{\Gamma(\upsilon_1+\mu_{-})\Gamma(\upsilon_2+\mu_{-})}\,
          \kappa(\upsilon_1-\upsilon_2)\,(\kappa x)^{-2\mu_+} & \mbox{ if }|\mu|>1\,,\\
          \frac{\kappa(\upsilon_1-\upsilon_2)}{\Gamma(\upsilon_1+\mu_{-}) \Gamma(\upsilon_2+\mu_{-})}\,
          (\kappa x)^{-2 + 2\mu_-}\ln\left(\frac{1}{x}\right)& \mbox{ if }|\mu|=1\,,\\
          \frac{\Psi(\upsilon_1)-\Psi(\upsilon_2)}{\Gamma(\upsilon_1)\Gamma(\upsilon_2)}x^{-1} & \mbox{ if }\mu=0\,,\\
          \frac{\kappa\Gamma(\vert\mu\vert)\Gamma(1-\vert\mu\vert)}{\Gamma(\upsilon_1 + \mu_-)\Gamma(\upsilon_2 + \mu_-)}\,
          \left(\frac{\Gamma(\upsilon_1 + \mu_-)}{\Gamma(\upsilon_1-\mu_+)}- \frac{\Gamma(\upsilon_2 + \mu_-)}{\Gamma(\upsilon_2-\mu_+)}
          \right)\, (\kappa x)^{-\mu-1} & \mbox{ if }|\mu|\in(0,1)\,.
       \end{array} \right.
     \end{align*}
   \item Asymptotic forms of the Wronskian functions as $x\to \infty$:
     \begin{align*}
       W[\varphi^+_{s_1},\varphi^+_{s_2}](x) &\sim
       \frac{\kappa \Gamma^2(1+\vert\mu\vert) (\upsilon_2 - \upsilon_1)}
	{\Gamma(\upsilon_1+\mu_-)\Gamma(\upsilon_2 + \mu_-)}e^{2\kappa x}(\kappa x)^{\upsilon_1+\upsilon_2-2\mu-3}
       \,,\\
       W[\varphi^+_{s_1},\varphi^-_{s_2}](x) &\sim
       -\kappa \frac{\Gamma(1 + \vert\mu\vert)}{\Gamma(\upsilon_1 + \mu_-)}e^{\kappa x}(\kappa x)^{\upsilon_1-\upsilon_2-\mu-1}
       \,,\\
       W[\varphi^-_{s_1},\varphi^-_{s_2}](x) &\sim
       \kappa(\upsilon_1-\upsilon_2)(\kappa x)^{-\upsilon_1-\upsilon_2-1}
       \,.
     \end{align*}
 \end{enumerate}

\noindent {\bf Case: (b)} $\lambda_1 < 0$ ($\kappa < 0$). In (i)--(iii) we denote $\upsilon\equiv s/\vert\lambda_1\vert$,
$\upsilon_i\equiv s_i/\vert\lambda_1\vert$, $i=1,2$.

 \begin{enumerate}[(i)]
   \item Asymptotic forms of the fundamental solutions defined by (\ref{CIRfund_2_M}) and (\ref{CIRfund_2_U}):
     \begin{align*}
       \varphi^+_s(x)\sim (\vert\kappa\vert x)^{\mu_-}, \qquad
       \varphi^-_s(x)\sim\left\{\begin{array}{ll}
       \frac{\Gamma(\mu)}{\Gamma(\upsilon + 1 + \mu)}(\vert\kappa\vert x)^{-\mu} & \mbox{ if }\mu>0\\
       \frac{1}{\Gamma(\upsilon + 1)}\ln\left(\frac{1}{x}\right) & \mbox{ if }\mu=0\\
       \frac{\Gamma(-\mu)}{\Gamma(\upsilon + 1)} & \mbox{ if }\mu<0 \end{array}\right.
       &\mbox{, as } x\to 0\,,  \\
       \varphi^+_s(x)\sim \frac{\Gamma(\vert\mu\vert+1)}{\Gamma(\upsilon + 1 + \mu_+)}
	 (\vert\kappa\vert x)^{\upsilon},\qquad
       \varphi^-_s(x) \sim (\vert\kappa\vert x)^{-\upsilon -1 -\mu}e^{-\vert\kappa\vert x}
       &\mbox{, as } x\to \infty\,. \\
     \end{align*}
   \item Asymptotic forms of the Wronskian functions as $x\to 0$:
     \begin{align*}
       W[\varphi^+_{s_1},\varphi^+_{s_2}](x) &\sim \vert\kappa\vert
       \frac{(\upsilon_2-\upsilon_1)}{1+\vert\mu\vert} (\vert\kappa\vert x)^{2\mu_-}
       \,,\\
       W[\varphi^+_{s_1},\varphi^-_{s_2}](x) &\sim -\vert\kappa\vert
       \frac{\Gamma(1+\vert\mu\vert)}{\Gamma(\upsilon_2 + 1 + \mu_+)}(\vert\kappa\vert x)^{-\mu-1}
       \,,\\
       W[\varphi^-_{s_1},\varphi^-_{s_2}](x) &\sim
       \left\{ \begin{array}{ll}
          \frac{\Gamma(|\mu|)\Gamma(|\mu|-1)}{\Gamma(\upsilon_1 + 1 + \mu_{+})\Gamma(\upsilon_2 + 1 + \mu_{+})}\,
          \vert\kappa\vert(\upsilon_1-\upsilon_2)\,(\vert\kappa\vert x)^{-2\mu_+} & \mbox{ if }|\mu|>1\,,\\
          \frac{\vert\kappa\vert(\upsilon_1-\upsilon_2)}{\Gamma(\upsilon_1 + 1 + \mu_{+}) \Gamma(\upsilon_2 + 1 + \mu_{+})}\,
          (\vert\kappa\vert x)^{-2 + 2\mu_-}\ln\left(\frac{1}{x}\right)& \mbox{ if }|\mu|=1\,,\\
          \frac{\Psi(\upsilon_1 + 1)-\Psi(\upsilon_2 + 1)}{\Gamma(\upsilon_1 + 1)\Gamma(\upsilon_2 + 1)}x^{-1} & \mbox{ if }\mu=0\,,\\
          \frac{\vert\kappa\vert\Gamma(\vert\mu\vert)\Gamma(1-\vert\mu\vert)}
	    {\Gamma(\upsilon_1 + 1 + \mu_+)\Gamma(\upsilon_2 + 1 + \mu_+)}\,
          \left(\frac{\Gamma(\upsilon_1 + 1 + \mu_+)}{\Gamma(\upsilon_1 +1 - \mu_-)}
		- \frac{\Gamma(\upsilon_2 + 1 + \mu_+)}{\Gamma(\upsilon_2 +1 -\mu_-)}
          \right)\, (\vert\kappa\vert x)^{-\mu-1} & \mbox{ if }|\mu|\in(0,1)\,.
       \end{array} \right.
     \end{align*}
   \item Asymptotic forms of the Wronskian functions as $x\to \infty$:
     \begin{align*}
       W[\varphi^+_{s_1},\varphi^+_{s_2}](x) &\sim
       \frac{\vert\kappa\vert \Gamma^2(1+\vert\mu\vert) (\upsilon_2 - \upsilon_1)}
	{\Gamma(\upsilon_1 + 1 + \mu_+)\Gamma(\upsilon_2 + 1 + \mu_+)}(\vert\kappa\vert x)^{\upsilon_1+\upsilon_2-1}
       \,,\\
       W[\varphi^+_{s_1},\varphi^-_{s_2}](x) &\sim
       -\vert\kappa\vert \frac{\Gamma(1 + \vert\mu\vert)}{\Gamma(\upsilon_1 + 1 + \mu_+)}
	  e^{-\vert\kappa\vert x}(\vert\kappa\vert x)^{\upsilon_1-\upsilon_2-\mu-1}
       \,,\\
       W[\varphi^-_{s_1},\varphi^-_{s_2}](x) &\sim
       \vert\kappa\vert(\upsilon_1-\upsilon_2)(\vert\kappa\vert x)^{-\upsilon_1-\upsilon_2-2\mu - 3}e^{-2\vert\kappa\vert x}
       \,.
     \end{align*}
 \end{enumerate}

\subsection{The Ornstein-Uhlenbeck process} \label{subsect_a13}
 \begin{enumerate}[(i)]
   \item Asymptotic forms of the fundamental solutions:
     \begin{align*}
       \varphi^\pm_s(x)\sim \frac{\sqrt{2\pi}}{\Gamma(\upsilon)}\,
       (\sqrt{\kappa}|x|)^{\upsilon-1}\,e^{\kappa x^2/2}
       &\mbox{ , as } x\to \pm\infty\,,  \\
       \varphi^\pm_s(x)\sim (\sqrt{\kappa}|x|)^{-\upsilon}
       &\mbox{ , as } x\to \mp\infty\,.
     \end{align*}
   \item Asymptotic forms of the Wronskian functions as $x\to
   -\infty$:
     \begin{align*}
       W[\varphi^+_{s_1},\varphi^+_{s_2}](x) &\sim
       (\upsilon_2-\upsilon_1)\,\sqrt{\kappa}\,(\sqrt{\kappa}|x|)^{-\upsilon_1-\upsilon_2-1}
       \,,\\
       W[\varphi^+_{s_1},\varphi^-_{s_2}](x) &\sim
       -\frac{\sqrt{2\pi\kappa}}{\Gamma(\upsilon_2)}\,(\sqrt{\kappa}|x|)^{\upsilon_2-\upsilon_1}\,e^{\kappa
       x^2/2}
       \,,\\
       W[\varphi^-_{s_1},\varphi^-_{s_2}](x) &\sim
       (\upsilon_1-\upsilon_2)\,\frac{2\pi\sqrt{\kappa}}{\Gamma(\upsilon_1)\Gamma(\upsilon_2)}\,
       (\sqrt{\kappa}|x|)^{\upsilon_1+\upsilon_2-3}\,e^{\kappa x^2}
       \,.
     \end{align*}

   \item Asymptotic forms of the Wronskian functions as $x\to
   \infty$:
     \begin{align*}
       W[\varphi^+_{s_1},\varphi^+_{s_2}](x) &\sim
       (\upsilon_2-\upsilon_1)\,\frac{2\pi\sqrt{\kappa}}{\Gamma(\upsilon_1)\Gamma(\upsilon_2)}\,
       (\sqrt{\kappa}x)^{\upsilon_1+\upsilon_2-3}\,e^{\kappa x^2}
       \,,\\
       W[\varphi^+_{s_1},\varphi^-_{s_2}](x) &\sim
       -\frac{\sqrt{2\pi\kappa}}{\Gamma(\upsilon_1)}\,(\sqrt{\kappa}x)^{\upsilon_1-\upsilon_2}\,e^{\kappa
       x^2/2}
       \,,\\
       W[\varphi^-_{s_1},\varphi^-_{s_2}](x) &\sim
       (\upsilon_1-\upsilon_2)\,\sqrt{\kappa}\,(\sqrt{\kappa}x)^{-\upsilon_1-\upsilon_2-1}
       \,.
     \end{align*}
 \end{enumerate}

\subsection{Asymptotic Properties of Wronskians of the Fundamental Solutions} \label{subsect_a14}

\begin{proposition} \label{aprop1}
 Let $\rho_1$ and $\rho_2$ be positive real parameters.
 For the three underlying diffusions, namely, the SQB, CIR, and OU
 diffusions defined in Sections \ref{subsect2.1}--\ref{subsect2.3},
we have the following limits for the Wronskians of the fundamental solutions:
 \begin{align*}
   & W[\varphi^+_{\rho_1},\varphi^+_{\rho_2}](0+) =
     \left\{\begin{array}{ll} \sgn(\rho_2-\rho_1)\cdot C &\text{ if }\mu\ge 0\,,\\
     \sgn(\rho_2-\rho_1)\cdot 0 &\text{ if }\mu < 0\,, \end{array}\right.\text{ for SQB and CIR,} \\
   & W[\varphi^+_{\rho_1},\varphi^+_{\rho_2}](-\infty) =
     \sgn(\rho_2-\rho_1)\cdot 0\,,\text{ for OU,} \\
   & W[\varphi^+_{\rho_1},\varphi^-_{\rho_2}](0+) =
     \left\{\begin{array}{ll} -\infty &\text{ if } \mu>-1\,,\\-C &\text{ if }\mu=-1\,,\\-0 &\text{ if }\mu<-1\,, \end{array}\right.\text{ for SQB and CIR,} \\
   & W[\varphi^+_{\rho_1},\varphi^-_{\rho_2}](-\infty) =
     -\infty\,,\text{ for OU,} \\
 \end{align*}
 \begin{align*}
   & W[\varphi^-_{\rho_1},\varphi^-_{\rho_2}](0+) =
     \left\{\begin{array}{ll} -\sgn(\rho_2-\rho_1)\cdot\infty &\text{ if } \mu\geq-1\,,\\-\sgn(\rho_2-\rho_1)\cdot C &\text{ if }\mu<-1\,, \end{array}\right.\text{ for SQB and CIR,} \\
   & W[\varphi^-_{\rho_1},\varphi^-_{\rho_2}](-\infty) =
     -\sgn(\rho_2-\rho_1)\infty\,,\text{ for OU,} \\
   & W[\varphi^+_{\rho_1},\varphi^+_{\rho_2}](+\infty) = \left\{\begin{array}{ll} \sgn(\rho_2-\rho_1)\cdot \infty\,, &\text{for OU, SQB and
   CIR if $\kappa > 0$}\\
   & \text{or $(\kappa < 0 \,\,and\,\, \rho_1 + \rho_2 > 1$)}\,,\\
   \sgn(\rho_2-\rho_1)\cdot C & \text{for CIR if $\kappa < 0$ and $\rho_1 + \rho_2 = 1$}\,,\\
   \sgn(\rho_2-\rho_1)\cdot 0 & \text{for CIR if $\kappa < 0$ and $\rho_1 + \rho_2 < 1$} \end{array}\right.
   \\
   & W[\varphi^+_{\rho_1},\varphi^-_{\rho_2}](+\infty) = \left\{\begin{array}{ll} -\infty\,, &\text{ for OU, and CIR if $\kappa > 0$,}
   \\ -0\,, &\text{ for CIR if $\kappa < 0$,} \end{array}\right.
   \\
   & W[\varphi^+_{\rho_1},\varphi^-_{\rho_2}](+\infty) =
     \left\{\begin{array}{ll} -0 &\text{ if  $\rho_2>\rho_1$ or ($\rho_1=\rho_2$ and $\mu>-1$)}\,,\\
      -C &\text{ if $\rho_1 = \rho_2$ and $\mu=-1$}\,,\\
     -\infty &\text{ if $\rho_2 < \rho_1$ or ($\rho_1=\rho_2$ and $\mu<-1$)}\,, \end{array}\right.\text{ for SQB,} \\
   & W[\varphi^-_{\rho_1},\varphi^-_{\rho_2}](+\infty) =
     -\sgn(\rho_2-\rho_1)\cdot 0\,, \text{ for all cases of SQB, CIR and OU},
 \end{align*}
 with constant $C>0$ for the SQB and CIR diffusions. Here $\sgn(x)= +1 (-1)$ for $x>0$ ($x<0$) and we define
$\sgn(0)\equiv 0$ as well as $0\cdot\infty \equiv 0$. The convergences are monotonic, i.e.
$W(x)\to+0$ and $W(x)\to+\infty$ (or $W(x)\to-0$ and $W(x)\to-\infty$) where
$W(x)>0$ (or $W(x)<0$) as $x$ approaches an endpoint.
\end{proposition}
\begin{proof}
The limits follow from the asymptotics of the
Wronskian functions presented in Subsections~\ref{subsect_a11}--\ref{subsect_a13}
where $s_i=\rho_i$; $l=0$ for SQB and CIR, $l=-\infty$
for OU and $r=\infty$ for all three diffusions. For the CIR diffusion, the limiting value of
$W[\varphi^-_{\rho_1},\varphi^-_{\rho_2}](x)$ follows from Proposition~\ref{aprop2} for $|\mu|\in (0,1)$, and for $\mu=0$ we use
the fact that $\Psi(z)$ is strictly increasing for $z>0$. \end{proof}

\begin{proposition} \label{aprop2}
$R(x)=\frac{\Gamma(x)}{\Gamma(x-a)}$ is an increasing function of $x>0$ for any $a\in(0,1)$.
\end{proposition}
\begin{proof}
First, notice that $R(a)=0$ and $R(x)>0$ for $x>a$.
We have that $\frac{\partial \ln R(x)}{\partial x} = \Psi(x) - \Psi(x-a)$ for $x\in(a,\infty)$. Therefore, $R$ is increasing on $(a,\infty)$, since the digamma function $\Psi$ is an increasing function on $(0,\infty)$. Let $0<x<a$, hence $R(x)<0$. Take the logarithmic derivative of $-R(x)=\frac{\Gamma(x)}{\Gamma(x+(1-a))}(a-x)$ to obtain $\frac{\partial \ln (-R(x))}{\partial x} = \Psi(x) - \Psi(x+(1-a))+\frac{1}{x-a}<0$. Thus, $R(x)$ is increasing on $(0,a)$. Since $R$ is a continuous function on $(0,\infty)$, the assertion is proved.
\end{proof}

\section*{Acknowledgements}
The authors acknowledge the support of the Natural Sciences and
Engineering Research Council of Canada (NSERC) for discovery
research grants.



\end{document}